\newcommand{\corr}[1]{\langle {#1} \rangle}
\newcommand{\corrr}[1]{\langle\langle {#1} \rangle\rangle}
\newcommand{\ba}{{\bf a}}
  \newcommand{\cF}{\mathcal{F}} 
\newcommand{\bt}{{\bf t}}
\newcommand{\bT}{{\bf T}} \newcommand{\bS}{{\bf S}}
 \newcommand{\cL}{\mathcal{L}}
 \newcommand{\bZ}{\mathbb{Z}}
 \newcommand{\bC}{\mathbb{C}}
 \newcommand{\pd}{\partial}
\newcommand{\Mbar}{\overline{\mathcal M}}
\newcommand{\bx}{\mathbf{x}} 
\newcommand{\vac}{|0\rangle} \newcommand{\lvac}{\langle 0|}
 \DeclareMathOperator{\Ai}{Ai} \DeclareMathOperator{\Bi}{Bi}
  \DeclareMathOperator{\res}{res}  \DeclareMathOperator{\Tr}{Tr}
 \DeclareMathOperator{\diag}{diag}   
\DeclareMathOperator{\Span}{span} 
\DeclareMathOperator{\Gr}{Gr} \DeclareMathOperator{\Aut}{Aut} \DeclareMathOperator{\sign}{sign}
\newcommand{\be}{\begin{equation}}
\newcommand{\ee}{\end{equation}}
\newcommand{\bea}{\begin{eqnarray}}
\newcommand{\eea}{\end{eqnarray}}
\newcommand{\ben}{\begin{eqnarray*}}
\newcommand{\een}{\end{eqnarray*}}
\newcommand{\half}{\frac{1}{2}}
\newtheorem{cor}{Corollary}[section]
\newtheorem{lem}[cor]{Lemma}
 \newtheorem{prop}[cor]{Proposition}
 \newtheorem{thm}[cor]{Theorem}
\theoremstyle{remark}
\definecolor{A}{rgb}{.75,1,.75}
\definecolor{green}{rgb}{0,1,0}
\definecolor{yellow}{rgb}{1,1,0}
\definecolor{orange}{rgb}{1,.7,0}
\definecolor{red}{rgb}{1,0,0}
\definecolor{white}{rgb}{1,1,1}
\begin{document}
\title
{Emergent Geometry and Mirror Symmetry of A Point}

\author{Jian Zhou}
\address{Department of Mathematical Sciences\\Tsinghua University\\Beijng, 100084, China}
\email{jzhou@math.tsinghua.edu.cn}

\begin{abstract}
By considering the partition function of the topological 2D gravity,
a conformal field theory on the Airy curve emerges as the mirror theory of Gromov-Witten theory of a point.
In particular,
a formula for bosonic $n$-point functions in terms of fermionic 2-point function for this theory
is derived.
\end{abstract}

\dedicatory{Dedicated to the memory of Professor Linsheng Yin} 

\maketitle

\section{Introduction}

Witten Conjecture/Kontsevich Theory and mirror symmetry are two main themes
in Gromov-Witten theory and its many recent variants.
A combination of them lead us to study the mirror symmetry of a point in \cite{Zhou-Quant-Def}.
This involves
the Witten-Kontsevich tau-function $\tau_{WK}$ which encodes some
intersection numbers on Deligne-Mumford compactifications
of moduli spaces of algebraic curves.
As a result $\tau_{WK}$ is a formal power series in infinitely many variables $t_0, t_1, t_2, \dots$,
where $t_0$ is the coupling constant of a primary field
and it gives a linear coordinate on the small phase space,
and $t_n$ ($n\geq 1$) are the coupling constants of some descendant fields
and they give linear coordinates on the big phase space.
By emergent geometry we mean the geometric structures that naturally appear
when one considers such situations with infinite degrees of freedom,
which however cannot be seen easily when one restricts to the finite-dimensional subspaces.

Let us make an analogy between the study of Gromov-Witten like theories
and statistical physics.
In the latter one has a system consisting of large amount of particles,
and one can roughly approximate the system as having an infinite degrees of freedom.
In the former,
one has a finite-dimensional space called the small phase space
whose elements in a basis will be referred to as primary fields,
each of which has infinitely many gravitational descendents.
The primaries and their descendents generate the infinite-dimensional big phase space.
The partition function and the free energy are formal power series formally regarded as defined
on the big phase space.
So for both problems one has an infinite degree of freedom.
The genus zero part of the free energy for a Gromov-Witten like theory restricts to the small phase space
to give it a geometric structure of a Frobenius manifold \cite{Dubrovin}.
A major problem is to reconstruct from this structure the whole the theory \cite{Givental, Dubrovin-Zhang}.
Of course one does not expect to achieve this goal from the Frobenius manifold
structure alone without knowing in advance some properties satisfied
by the free energy on the big phase space.
So in the theory of Frobenius manifold,
even though one is dealing with a reconstruction problem,
some considerations from the emergent point of view is necessary,
e.g.,
one has to assume the existence of a tau-function which satisfies the loop equation  \cite{Dubrovin-Zhang}.

As advocated in an article by Anderson \cite{Anderson} and a book by Laughlin \cite{Laughlin},
there are the reduction/reconstruction and the emergence approaches to science.
In the former approach collective behavior of large quantity of individual particles
is derived from the fundamental laws obeyed by each individual particle,
in the latter approach there are fundamental laws at each level of complexity \cite{Anderson},
and it is even possible that all the fundamental laws for individual particles
have their origins in their collective behavior \cite[Preface, XV]{Laughlin}.

The analogy with statistical physics suggests that for Gromov-Witten like theories,
one might take an emergent approach to derive everything from universal properties of the free energy
in all genera on the big phase space.
One is then led to understand Frobenius manifold structure,
integrable hierarchy, W-constraints, spectral curves, etc.,
as all emerge from general properties of the free energy on the big space.
We will take topological recursion relations as the universal relations satisfied by Gromov-Witten like theories,
and show how other structures emerge naturally
as one develops the theories based on them.
We call the procedures of doing this the emergent geometry
of the corresponding theory.
In this paper we will focus on the theory of topological 2D gravity
whose partition function is Witten-Kontsevich tau-function.
Generalization to other theories related to general Frobenius manifolds will be reported
in a separate paper \cite{Zhou-Preparation}.

The function $\tau_{WK}$ has the following two equivalent characterizations \cite{Witten, Kontsevich}:
(a) It is a tau-function of the KdV hierarchy and it satisfies the puncture equation;
(b) It satisfies the Virasoro constraints.
We can now understand them from an emergent point of view.
In an earlier paper \cite{Zhou-Quant-Def}  the author proposed  a notion of a quantum deformation theory
to study mirror symmetry.
Roughly speaking, this means a deformation theory which encodes the information of free energy in all genera
on the big phase space.
For such a theory,
the moduli space is infinite-dimensional so as to encode all information of gravitational descendants in genus zero,
and furthermore,
it  admits a natural quantization from which one can produce constraints that determines the free energy in all genera.
We will recall the case of topological 2D gravity and mirror symmetry of a point
in \S \ref{sec:Qunt-Deform} below and reformulate it from the point of view of emergent geometry.

Note in \cite{Zhou-Quant-Def}
we have understood the appearance of Virasoro constraints from the quantum deformation theory of the Airy
curve and have not addressed the appearance of KdV hierarchy from this point of view.
That is exactly what we will do in this paper.
We will show that quantum deformation theory naturally leads us to a version of noncommutative deformation theory
and Kyoto school's approach to integrable hierarchies.
We will also discuss the emergence of Airy functions and its appearance in
the proof of Witten Conjecture \cite{Kontsevich, Okounkov} and
some related recent works on explicit formula for $\tau_{WK}$ \cite{Zhou-Explicit, Balogh-Yang}
and $n$-point functions \cite{Bertola-Dubrovin-Yang, Zhou-Absolute}.

As a complement of the results of \cite{Zhou-Quant-Def},
we now have a more complete formulation of the mirror symmetry of a point.
The mirror of a point is the following emergent conformal field theory living
near the infinity of the Airy curve $y = \half x^2$,
it is determined by a vector in fermionic Fock space:
\be
|W \rangle = e^A \vac \in \cF^{(0)},
\ee
where the operator $A$ of the form
$$A = \sum_{m, n \geq 0} A_{m,n} \psi_{-m-1/2} \psi_{-n-1/2}^*$$
is specified by \cite{Zhou-Explicit, Balogh-Yang},
and can be found as follows.
Let
\bea
&& a(x) = \sum_{m=0}^\infty  \frac{(6m-1)!! }{6^{2m} (2m)!} x^{-3m}, \\
&& b(y) = -\sum_{m=0}^\infty  \frac{(6m-1)!!}{6^{2m} (2m)!}
\frac{6m+1}{6m-1} y^{-3m+1}.
\eea
Then $A(x, y) = \sum_{m, n \geq 0} A_{n,m} x^{-m-1} y^{-n-1}$ is given by:
\be
A(x, y)
= \frac{a(-x) \cdot b(y) - a(y) b(-x)}{x^2 - y^2} - \frac{1}{x-y}.
\ee
Furthermore,
the $n$-point functions at $t_0=t_1 = \cdots = 0$ are given as follows.
For $F= \log Z_{WK}$, $T_{2n+1} = (2n+1)!! t_n$,
\be
\begin{split}
& \sum_{j_1,\dots, j_n \geq 1}
\frac{\pd^n F}{\pd T_{j_1} \cdots \pd T_{j_n} } \biggl|_{\bT =0}
  \xi_1^{-j_1-1}\cdots \xi_n^{-j_n-1} \\
= & (-1)^{n-1} \sum_{\text{$n$-cycles}}  \prod_{i=1}^n \hat{A}(\xi_{\sigma(i)}, \xi_{\sigma(i+1)}),
\end{split}
\ee
where
\be
\hat{A}(\xi_i, \xi_j) = \begin{cases}
A(\xi_i, \xi_j), & \text{if $i=j$}, \\
\frac{1}{\xi_i - \xi_j} + A(\xi_i, \xi_j), & \text{if $i \neq j$}.
\end{cases}
\ee
In summary,
we have the following picture for mirror symmetry of a point:
$$
\text{GW theory of a point} \leftrightarrows \text{Conformal field theory on $y=\half x^2$}
$$

The series $a(x)$ and $b(x)$ are related to the asymptotic expansions of the Airy function $\Ai(x)$
and its derivative $\Ai'(x)$ respectively.
The Airy function is a solution of the Airy equation which can be obtained by a quantization of the Airy curve.
In quantum deformation theory of the Airy curve \cite{Zhou-Quant-Def}
we consider the generalization of the miniversal deformaition
\be
y = \half x^2 + t_0,
\ee
specified by the Virasoro constraints,
in this paper we consider the deformation of its quantization
$\half \pd_{t_0}^2 + t_0$ to $\half \pd_{t_0}^2 + u(\bT)$,
specified by the KdV hierarchy.
We summarize these in the following picture:
$$\xymatrix@R=0.5cm{
                &          \text{Quantum deformation theory of $y=\half x^2$}   \ar[dd]^{ } \ar[dl]    \\
\text{GW theory of a point} \ar[ur]^{ } \ar[dr]_{ }                 \\
                &         \text{Noncomm. deformation theory of $y = \half x^2$}     \ar[uu]\ar[ul]           }
$$
We will report on generalizations of the above pictures in a separate paper \cite{Zhou-Preparation}.

The other Sections of this paper are arranged as follows.
In Section 2 we explain the emergence of quantum deformation theory from
topological recursion relations.
In Section 3 we explain how quantum deformation theory leads naturally to Sato's Grassmannian,
KP hierarchy and a conformal field theory.
In Section 4 we use fermionic one-point functions to understand the KP hierarchy and its tau-functions,
and in Section 5 we establish a formula for bosonic $n$-point functions in terms of fermionic two-point function.
In Section 6 we explain the emergence of Airy function and its applications.

In this work we have presented an approach to KP hierarchy from a conformal field theory point of view,
and treat the applications to KdV hierarchy as a special case.
This is because we are preparing for a uniform treatment \cite{Zhou-Preparation} of all Witten $r$-spin curve 
intersection numbers,
and other cases related to semisimple Frobenius manifolds. 
See e.g. \cite{Balogh-Yang-Zhou} for applications to the case of r-spin curves. 

\vspace{.1in}

{\em Acknoledgements}.
This research is partially supported by NSFC grant 11171174.
The author thanks Professor Lu Yu for bringing his attentions to the notion of emergent phenomena
which turns out to be crucial for this work.
The author also thanks Professors Ference Balogh and Di Yang for very helpful 
communications and collaborations.

\section{Emergent Reformulation of Quantum Deformation Theory of the Airy Curve}

\label{sec:Qunt-Deform}

In this Section we first briefly recall the quantum deformation theory of the Airy curve as developed
in \cite{Zhou-Quant-Def},
then we present an emergent interpretation by topological recursion relations.

\subsection{Quantum deformation theory of the Airy curve}

We first use the Virasoro constraints to compute the genus zero one-point function on the small phase space:
\be
\frac{\pd F_0}{\pd t_n}(t_0, 0, \dots)  = \frac{1}{(n+2)!} t_0^{n+2}.
\ee
Next we note the Puiseux series:
\be
x = f - \frac{t_0}{f} - \sum_{n \geq 0} (2n+1)!! \frac{\pd F_0}{\pd t_n}(t_0, 0, \dots) \cdot f^{-2n-3},
\ee
where $f^2 = 2y$,
leads to the  miniversal deformation of the Airy curve
\be
y = \half x^2 + t_0.
\ee
When $t_0 =0$, one gets the Airy curve which is the spectral curve for
Eynard-Orantin topological recursion for topological 2D gravity \cite{Zhou-DVV}.
Next we construct a special deformation of the Airy curve of the following form:
\be \label{eqn:Special def}
x(f): = f - \sum_{n \geq 0} \frac{t_n}{(2n-1)!!} f^{2n-1}
- \sum_{n \geq 0} (2n+1)!!\frac{\pd F_0}{\pd t_n}(\bt) \cdot f^{-2n-3}.
\ee
We have proved that it is uniquely  characterized by the following property:
\be
(x(f)^2)_- = 0,
\ee
and this is equivalent to Virasoro constraints in genus zero.
To extend the picture to arbitrary genera,
the following  quantization is used.
We endow the space   of series of the form
\be
\sum_{n =0}^\infty (2n+1) \tilde{u}_n z^{(2n-1)/2}
+ \sum_{n =0}^\infty \tilde{v}_n z^{-(2n+3)/2}
\ee
 the following symplectic structure:
\be
\omega = \sum_{n =0}^\infty   d\tilde{u}_n \wedge d \tilde{v}_n.
\ee
Consider the canonical quantization:
\be
\hat{\tilde{u}}_n = \frac{t_n-\delta_{n,1}}{(2n+1)!!} \cdot, \;\;\;
\hat{\tilde{v}}_n = (2n+1)!! \frac{\pd}{\pd t_n}.
\ee
Corresponding to the field $x$,
we consider the following field of operators on the Airy curve:
\be
\hat{x}(z) = - \sum_{m \in \bZ} \beta_{-(2m+1)} z^{m-1/2}
= - \sum_{m \in \bZ} \beta_{2m+1} z^{-m-3/2}
\ee
where $f = z^{1/2}$ and the operators $\beta_{2k+1}$ are defined by:
\be
\beta_{-(2k+1)} = (2k+1) \frac{t_k-\delta_{k,1}}{(2k+1)!!} \cdot, \;\;\;\;
\beta_{2k+1} = (2k+1)!! \frac{\pd}{\pd t_k}.
\ee
We define a notion of regularized products $\hat{x}(z)^{\odot n}$
and show that the DVV Virasoro constraints  for
Witten-Kontsevich tau-function
is just the following equation:
\be
(\hat{x}(z)^{\odot 2})_- Z_{WK} = 0.
\ee

\subsection{Topological recursion relations in genus zero for topological 2D gravity}

Recall for topological 2D gravity,
the $n$-point correlators are defined by
\be
\corr{\tau_{m_1} \cdots \tau_{m_n} }_g : = \int_{\Mbar_{g,n}} \psi_1^{m_1} \cdots \psi_n^{m_n}.
\ee
The correlators in genus zero satisfy the
topological recursion relations \cite[(2.26)]{Witten}:
\be
\begin{split}
& \corr{\tau_{m_1}  \cdots \tau_{m_n}  }_0 \\
= &   \sum_{X \coprod Y = \{2, \dots, n-2\}}
\corr{\tau_{m_1-1} \prod_{j \in X} \tau_{m_j}  \cdot \tau_0 }_0
\cdot
\corr{\tau_0 \prod_{k\in Y} \tau_{m_k} \cdot \tau_{m_{n-1}} \tau_{m_n}}_0.
\end{split}
\ee
This relation reduces the calculations of the correlators in genus zero to the initial value:
\be
\corr{\tau_0^3}_0 = 1.
\ee

\subsection{Emergence of a Froebnius manifold}

In particular, one gets:
\be \label{eqn:Genus-zero}
\corr{\tau_n \tau_0^{n+2}}_0 = 1.
\ee
Of course one can also get this by the puncture equation:
\be
\corr{\tau_0 \prod_{i=1}^n \tau_{m_i}}_0
= \sum_{j=1}^n \corr{\prod_{i=1}^n \tau_{m_i-\delta_{ij}}}_0.
\ee
But the reason why we use the TRR in genus zero is that one can deduce from it
the fact that the small phase space with coordinate $t_0$ and
potential function $F_0(t_0) = \frac{t_0^3}{3!}$
is a one-dimensional Frobenius manifold.
This is the route to the emergence of the Frobenius manifold structure
that we will take when we make the generalizations in \cite{Zhou-Preparation}.

\subsection{Emergence of the Airy curve and its versal deformation}

By \eqref{eqn:Genus-zero} we have
\be
\frac{\pd F_0}{\pd t_n} (t_0) : = \frac{\pd F_0}{\pd t_n}(\bt) \biggl|_{\bt = (t_0, 0, \dots)}
= \frac{t_0^{n+2}}{(n+2)!}.
\ee
Therefore,
one can bypass the use of Virasoro constraints in \cite{Zhou-Quant-Def} to get this result.
Consider the generating series
\be \label{def:t0(z)}
t_0(z) = t_0 + \sum_{n \geq 0} \frac{\pd F_0}{\pd t_n}(t_0) \cdot z^{n+1}.
\ee
Take  a Laplace transform of $t_0(z)$:
\be \label{eqn:One-Point-Genus-Zero}
\begin{split}
\tilde{t}_0(y) & = \frac{1}{\sqrt{2\pi}} \int_0^\infty
\frac{1}{\sqrt{z}}  e^{-  y z/2} \cdot t_0(z)  dz \\
& = \frac{t_0}{y^{1/2}} + \sum_{ n\geq 0} \frac{(2n+1)!!}{y^{n+3/2}} \frac{\pd F_0}{\pd t_n}(t_0).
\end{split}
\ee
Now define a Puiseux series $x(y)$ by:
\be \label{def:p(y)}
x(y) = \tilde{t}_0(y) - y^{1/2}.
\ee
More concretely,
\be
x(y) = - y^{1/2} + \sum_{n \geq 0} \frac{(2n-1)!!}{(n+1)!} t_0^{n+1} y^{-n-1/2}
= -( y - 2 t_0)^{1/2}.
\ee
This is equivalent to:
\be \label{eqn:Airy-Def}
y = x^2 + 2t_0.
\ee
This is the versal deformation of the Airy curve:
\be \label{eqn:Airy}
y = x^2.
\ee

\subsection{Emergent interpretations by ghost variables}

Now we understand  the extra term $t_0$ on the right-hand side of \eqref{def:t0(z)},
the extra term $-y^{1/2}$ and the special deformation from an emergent point of view.
As pointed out in  \cite{Zhou-DVV},
the following convention for one-point and two-point genus zero correlators
plays a crucial role in understanding the topological 2D gravity even though they are not well defined geometrically:
\bea
&& \corr{\tau_n}_0 = \delta_{n, -2}, \\
&& \corr{\tau_k\tau_{-k-1}}_0= (-1)^k,
\eea
so that
\bea
&& \sum_{n \in \bZ} \corr{\tau_n}_0 x^n = \frac{1}{x^2}, \\
&& \sum_{k \geq 0} \corr{\tau_k\tau_{-k-1}}_0 x^k y^{-k-1} = \sum_{k \geq 0} (-1)^k x^k y^{-k-1} = \frac{1}{x+y}.
\eea
These conventions are natural since it is well-known that for $ n\geq 3$,
\be
\sum_{m_1, \dots, m_n \geq 0} \corr{\tau_{m_1} \cdots \tau_{m_n}}_0 x_1^{m_1} \cdots x_n^{m_n}
= (x_1+ \cdots +x_n)^{n-3}.
\ee
In \cite{Zhou-DVV} it was shown that these conventions determine the equation of the spectral curve
and the Bergman kernel for Eynard-Orantin recursions for topological 2D gravity.

One formally adds
\be
t_{-2} +  \sum_{n \geq 0} (-1)^n t_n t_{-n-1}
= \sum_{n \geq 0} (-1)^n (t_n - \delta_{n,1}) t_{-n-1}
\ee
to the genus zero free energy $F_0$ to get the full genus zero free energy:
\be
\tilde{F}_0 = F_0 + \sum_{n \geq 0} (-1)^n (t_n-\delta_{n,1}) t_{-n-1}.
\ee
The variables $t_{-1}, t_{-2}, \dots$ will be referred to as the ghost variables.
The space that include also these ghost variables will be called the full phase space.
Considered the augmented generating series of genus zero one-point function on the full phase space:
\be
\begin{split}
\sum_{n \in \bZ} \frac{\pd \tilde{F}_0}{\pd t_n} \cdot z^{n+1}
= & \sum_{n \geq 0} (-1)^n (t_{n}-\delta_{n-1}) z^{-n} \\
+ & \sum_{n \geq 0} \biggl( \frac{\pd F_0}{\pd t_n}(\bt)  + (-1)^n t_{-n-1} \biggr) \cdot z^{n+1}.
\end{split}
\ee
When restricted to the small phase space:
\be
\sum_{n \in \bZ} \frac{\pd \tilde{F}_0}{\pd t_n} \cdot z^{n+1} \biggl|_{t_n = \delta_{n,0} t_0}
= t_0(z) + z^{-1}.
\ee
So $t_0(z) + z^{-1}$ is the augmented genus zero one point function on the small phase space:
\be
t_0(z) + z^{-1} = \corrr{\frac{1}{z-c_1}}_0(t_0).
\ee

In the theory of Frobenius manifolds \cite{Dubrovin-Zhang},
$t_0(z)$ is the deformed flat coordinate for Dubrovin connection,
and the Laplace transform $\tilde{t}_0$ is the period of the Gauss-Manin system of the Frobenius manifold.
Such interpretation will be used in \cite{Zhou-Preparation} to make generalizations to more general
Frobenius manifolds.

Now we explain the extra term $-y^{1/2}$ on the right-hand side of \eqref{def:p(y)}.
For $a \geq 0$, one has the following formula for Laplace transform:
\be
\frac{1}{\sqrt{2\pi}} \int_0^{\infty} \frac{1}{\sqrt{z}} e^{-yz/2} \cdot z^a dz
= \frac{\Gamma(a+1/2)}{\sqrt{2\pi}} \frac{2^{a+1/2}}{y^{a+1/2}}.
\ee
We use this to define the Laplace transform of $z^{-n}$ to be
\be
\frac{\Gamma(-n+1/2)}{\sqrt{2\pi}} \frac{2^{-n+1/2}}{y^{-n+1/2}}
= \frac{(-1)^{n}}{(2n-1)!!} y^{n-1/2}.
\ee
It follows that
\be
\begin{split}
& \sum_{n \in \bZ} \frac{\pd \tilde{F}_0}{\pd t_n} \cdot \frac{1}{\sqrt{2\pi}} \int_0^{\infty} z^{n+1}
e^{-yz/2} dz \\
= & \sum_{n \geq 0} \frac{t_{n}-\delta_{n-1}}{(2n-1)!!} y^{n-1/2} \\
+ & \sum_{n \geq 0} \biggl( \frac{\pd F_0}{\pd t_n}(\bt)  + (-1)^n t_{-n-1} \biggr) \cdot  (2n+1)!! y^{-n-3/2}.
\end{split}
\ee
By restricting to the big phase space,
this provides an interpretation from emergent point of view  of the special deformation
introduced in the quantum deformation theory
of the Airy curve \cite{Zhou-Quant-Def}:
\be  \label{eqn:Special-Deform}
p(y) = \sum_{n \geq 0} \frac{t_n-\delta_{n,1}}{(2n-1)!!} y^{n-1/2}
+ \sum_{n \geq 0} (2n+1)!!\frac{\pd F_0}{\pd t_n}(\bt) \cdot y^{-n-3/2}.
\ee
A further restriction to the small phase space then yields \eqref{def:p(y)}.

\subsection{Genus zero two-point function}

Let us generalize the above treatment to genus zero two-point function on the full phase space:
\be
\begin{split}
\sum_{n_1, n_2 \in \bZ} \frac{\pd^2 \tilde{F}_0}{\pd t_{n_1} \pd t_{n_2} } \cdot z_1^{n_1+1} z_2^{n_2+1}
= & \sum_{n \geq 0} (-1)^n ( z_1^{-n} z_2^{n+1} + z_2^{-n} z_1^{n+1} )\\
+ & \sum_{n_1, n_2 \geq 0}   \frac{\pd^2 F_0}{\pd t_{n_1} \pd t_{n_2}}(\bt)  \cdot z_1^{n_1+1} z_2^{n_2+1}.
\end{split}
\ee
When restricted to the small phase space:
\be
\frac{\pd^2 F_0}{\pd t_k \pd t_l}(t_0) = \binom{k+l}{k} \frac{t_0^{k+l+1}}{(k+l+1)!},
\ee
and so
\be
\begin{split}
& \sum_{n_1, n_2 \in \bZ} \frac{\pd^2 \tilde{F}_0}{\pd t_{n_1} \pd t_{n_2} } \cdot z_1^{n_1+1} z_2^{n_2+1}\biggr|_{t_n = \delta_{n0} t_0} \\
= & \sum_{n \geq 0} (-1)^n ( z_1^{-n} z_2^{n+1} + z_2^{-n} z_1^{n+1} )\\
+ & \sum_{k, l \geq 0}   \binom{k+l}{k} \frac{t_0^{k+l+1}}{(k+l+1)!}  \cdot z_1^{k+1} z_2^{l+1}.
\end{split}
\ee
After taking the summation,
\be
\begin{split}
& \sum_{n_1, n_2 \in \bZ} \frac{\pd^2 \tilde{F}_0}{\pd t_{n_1} \pd t_{n_2} } \cdot z_1^{n_1+1} z_2^{n_2+1}\biggr|_{t_n = \delta_{n0} t_0} \\
= & i_{z_1, z_2} \frac{1}{z_1+z_2} + i_{z_2, z_1} \frac{1}{z_1+z_2} 
+ \frac{1}{z_1+z_2} (e^{(z_1+z_2)t_0}-1),
\end{split}
\ee
where
\be
\begin{split}
& i_{z_1, z_2} \frac{1}{z_1+z_2} = \sum_{n \geq 0} (-1)^n z_1^{-n} z_2^{n+1}, \\
& i_{z_2, z_1} \frac{1}{z_1+z_2} = \sum_{n \geq 0} (-1)^n z_2^{-n} z_1^{n+1}.
\end{split}
\ee
Take Laplace transform: 
\be
\begin{split}
& \sum_{n_1,n_2 \in \bZ} \frac{\pd^2 \tilde{F}_0}{\pd t_{n_1} \pd t_{n_2}} 
\frac{1}{\sqrt{2\pi}^2} \int_0^{\infty} \int_0^{\infty} z_1^{n_1+1}  z_2^{n_2+1}
e^{-y_1z_1/2-y_2z_2/2} dz_1dz_2 \\
= & \sum_{n \geq 0} (2n+1) \biggl(\frac{y_1^{n-1/2}}{y_2^{n+3/2}} + \frac{y_2^{n-1/2}}{y_1^{n+3/2}} \biggr) \\
+ & \sum_{n_1, n_2 \geq 0} \frac{\pd^2 F_0}{\pd t_{n_1} \pd t_{n_2}}(\bt)  \cdot  (2n_1+1)!! y_1^{-n_1-3/2}
\cdot (2n_2+1)!! y_2^{-n_2-3/2}.
\end{split}
\ee
By restricting to  the small phase space and taking the summations,
\be
\begin{split}
& \sum_{n_1,n_2 \in \bZ} \frac{\pd^2 \tilde{F}_0}{\pd t_{n_1} \pd t_{n_2}}(t_0)
  \int_0^{\infty} \int_0^{\infty} z_1^{n_1+1}  z_2^{n_2+1}
e^{-y_1z_1/2-y_2z_2/2} \frac{dz_1dz_2}{\sqrt{2\pi}^2} \\
= & i_{y_1, y_2} \frac{y_1+y_2}{y_1^{1/2} y_2^{1/2}(y_1-y_2)^2} 
+ i_{y_2, y_1} \frac{y_1+y_2}{y_1^{1/2} y_2^{1/2}(y_1-y_2)^2}  \\
+ & \biggl(\frac{y_1+y_2-4t_0}{(y_1-2t_0)^{1/2} (y_2-2t_0)^{1/2} (y_1-y_2)^2} 
- \frac{y_1+y_2}{y_1^{1/2} y_2^{1/2}(y_1-y_2)^2} \biggr).
\end{split}
\ee
The second line on the right-hand side is understood as a Taylor series in $t_0$.
If one takes furthermore $t_0=0$,
\be
\begin{split}
& \sum_{n_1,n_2 \in \bZ} \frac{\pd^2 \tilde{F}_0}{\pd t_{n_1} \pd t_{n_2}}\biggr|_{t_k = 0}
  \int_0^{\infty} \int_0^{\infty} z_1^{n_1+1}  z_2^{n_2+1}
e^{-y_1z_1/2-y_2z_2/2} \frac{dz_1dz_2}{\sqrt{2\pi}^2} \\
= & i_{y_1, y_2} \frac{y_1+y_2}{y_1^{1/2} y_2^{1/2}(y_1-y_2)^2}
+ i_{y_2, y_1} \frac{y_1+y_2}{y_1^{1/2} y_2^{1/2}(y_1-y_2)^2}.
\end{split}
\ee
This gives an interpretation of \cite[(26)]{Zhou-DVV} by ghost variables.

\subsection{Emergence of a conformal field theory in quantum deformation theory}

Now we consider the problem of including free energy of higher genera.
In the quantum deformation theory approach in \cite{Zhou-Quant-Def},
we consider the special deformation of the Airy curve
$y = x^2$ given by \eqref{eqn:Special-Deform}
and quantize it to the field:
\be
\hat{x}(y):=  \sum_{n \geq 0}  y^{n-1/2} \frac{t_n-\delta_{n,1}}{(2n-1)!!} \cdot
+ \sum_{n \geq 0}  y^{-n-3/2} (2n+1)!!\frac{\pd}{\pd t_n},
\ee
and define a regularized product $\hat{x}(y)^{\odot 2} = \hat{x}(y) \odot \hat{x}(y)$.
Furthermore,
the Witten-Kontsevich tau-function is shown to be uniquely determined by:
\be
\hat{x}(y)^{\odot 2} Z_{WK} = 0,
\ee
this condition is shown to be equivalent to the DVV Virasoro constraints.
A conformal field whose Fock space living near the infinity of the Airy curve emerge
in this framework.
In fact, we understand $\hat{x}(y)$ as a field of operators on the Airy curve,
with $y$ as a local coordinate near the infinity of the Airy curve.
We understand $Z_{WK}$ as an element of the 2-reduced bosonic Fock space consisting
of symmetric functions generated by the odd power functions $p_{2n+1}$'s,
associated with the infinity of the Airy curve.

\subsection{All genera one-point function on the $t_0$-line}

\label{sec:All Genera One-Point}

Define the all-genera one-point function restricted to the $t_0$-line by:
\be \label{eqn:One-Point-All-Genera}
\begin{split}
\tilde{t}_0(y; \lambda)
& =- y^{1/2} +  \frac{t_0}{y^{1/2}} + \sum_{ n\geq 0} \frac{(2n+1)!!}{y^{n+3/2}} \sum_{g \geq 0} \lambda^{2g}
\frac{\pd F_g}{\pd t_n}(t_0).
\end{split}
\ee
Note we have:
\ben
\frac{\pd F_g}{\pd t_n}(t_0)
= \sum_{m \geq 0} \corr{\tau_n \tau_0^m}_g \frac{t_0^m}{m!}
\een
where the following selection rule has to be satisfied:
\be
n = 3g-2+m
\ee
When $g=0$, $n = m-2$,
we have:
\ben
&& \sum_{m \geq 2} \frac{(2m-3)!!}{y^{m-1/2}}
\corr{\tau_{m-2}\tau_0^m}_0 \frac{t_0^m}{m!}
= \sum_{m \geq 2}  \frac{(2m-3)!!}{y^{m-1/2}} \frac{t_0^m}{m!}.
\een
For $g \geq 1$,
we have
\ben
&& \sum_{m \geq 0} \frac{(6g-3+2m)!!}{y^{3g+m-1/2}} \corr{\tau_{3g-2+m} \tau_0^m}_g \frac{t_0^m}{m!} \\
& = & \frac{1}{24^gg!} \sum_{m \geq 0}  \frac{(6g-3+2m)!!}{y^{3g+m-1/2}}\frac{t_0^m}{m!} \\
& = & \frac{(6g-3)!!}{24^gg! y^{3g-1/2}} (1-\frac{2t_0}{y})^{-3g+1/2} \\
& = & \frac{(6g-3)!!}{24^gg! (y-2t_0)^{3g-1/2}}
\een
where we have used the following fact due to Witten \cite{Witten}:
\be
\corr{\tau_{3g-2}}_g = \frac{1}{24^gg!}
\ee
and the puncture equation.
So we get:
\be \label{eqn:One-Point-All-Genera2}
p(y, t_0; \lambda)  =
\sum_{g\geq 0}  \frac{(6g-3)!!\lambda^{2g}}{24^gg! (y-2t_0)^{3g-1/2}}.
\ee
After taking derivative in $t_0$,
one gets:
\be
\pd_{t_0} p(y, t_0; \lambda)  =
\sum_{g\geq 0}  \frac{(6g-1)!!\lambda^{2g}}{24^gg! (y-2t_0)^{3g-1/2}}.
\ee
In particular
\ben
\pd_{t_0} p(y, t_0=0; \lambda)  & = &
\sum_{g\geq 0}  \frac{\lambda^{2g}}{24^gg!}  \frac{(6g-1)!!}{y^{3g+1/2}}.
\een
This series is closely related to the Airy functions.
Compare it with (2).
See also \S \ref{sec:N-Point-2DGravity}.

\subsection{Quantum corrections to the versal deformation of the Airy curve}

More explicitly,
one has:
\ben
p & = &  - (y-2t_0)^{1/2} + \frac{\lambda^2}{8} (y-2t_0)^{-5/2} + \frac{105\lambda^4}{128} (y-2t_0)^{-11/2} \\
& + & \frac{25025\lambda^6}{1024} (y-2t_0)^{-17/2} + \frac{56581525\lambda^8}{32768} (y-2t_0)^{-23/2} \\
& + & \frac{58561878375 \lambda^{10}}{262144} (y-2t_0)^{-29/2} + \cdots.
\een
By Lagrangian inversion:
\ben
(y-2t_0)^{1/2} & = & p (1+\frac{\lambda^2}{8}p^{-6}+\frac{95\lambda^4}{128}p^{-12}
+ \frac{23425}{\lambda^6}{1024} p^{-18} \\
& + & \frac{54230715\lambda^8}{32768} p^{-24}
+ \frac{56875278215\lambda^10}{262144}  p^{-30} + \cdots),
\een
and so
\ben
y -2 t_0 & = & p^2\biggl(1+ \frac{\lambda^2}{4} p^{-6}+\frac{3\lambda^4}{2} p^{-12}
+ \frac{735\lambda^6}{16} p^{-18} \\
& + & \frac{13265\lambda^8}{4} p^{-24}
+ \frac{27799785\lambda^{10} }{64} p^{-30} + \cdots \biggr).
\een
This once again indicates that one should study deformations more general than those in
traditional deformation theory.

\subsection{All genera two-point function on the $t_0$-line}

\label{sec:All Genera Two-Point}

\subsection{The $n$-point functions in genus $0$}

In \cite[Section 4.4]{Zhou-Quant-Def},
we have shown that the special deformation of the Airy curve gives rise a series
\be
y = x^2 + \cdots,
\ee
where $\cdots$ consists of terms in positive powers of $t_0, t_1, \dots$.
Let $L = y^{1/2}$ and write it as Laurent series.
In \cite{Zhou-Quant-Def}
we define the one-point function in genus zero on the big phase space by:
\be
\corrr{\phi_j }_0  =  \frac{1}{(2j+3)!!} \res (L^{2j+3} dx),
\ee
and define $n$-point function ($n \geq 2$) by taking derivatives:
\be
\corrr{\phi_{j_1}, \cdots, \phi_{j_n}}_0 = \frac{\pd}{\pd t_{j_1}} \corrr{\phi_{j_2} \cdots \phi_{j_n}}_0 .
\ee
The following result \cite[Theorem 6.4]{Zhou-Quant-Def} tells us how to recover the genus zero free energy of the 2D topological gravity
from the special deformation of the Airy curve:
\be
\corrr{\phi_{j_1}, \cdots, \phi_{j_n}}_0
= \frac{\pd^n F_0}{\pd t_{j_1} \cdots \pd t_{j_n}}.
\ee

\section{Emergence of Sato Grassmannian and Tau-Function}

Our discussion of the special deformation theory leads naturally to
Sato's Grassmannian and tau-function of KP hierarchy.
In this Section we will review this beautiful theory.
In the process we will introduce normalized basis
following \cite{Balogh-Yang, Balogh-Yang-Zhou}
and formulate Theorem \ref{thm:Bogoliubov} expressing Sato tau-function
in terms of a Bogoliubov transform.

\subsection{Sato's semi-infinite Grassmannian}

Restrict to the small phase space,
i.e., take all $t_n = 0$ except for $t_0$.
Then one gets a sequence of series of the following form:
\ben
L^{2n+1} & = & (x^2+ 2t_0)^{(2n+1)/2} = x^{2n+1} (1+\frac{2t_0}{x^2})^{(2n+1)/2} \\
& = & x^{2n+1} + (2n+1)t_0x^{2n-1}
 + \frac{(2n+1)(2n-1)}{2} t_0^2x^{2n-3}  + \cdots.
\een
Such a sequence determines a point in Sato's Grassmannian.
Let $H$ be the space   consisting of  the  formal  Laurent  series
$\sum_{n \in \bZ} a_n z^{n-1/2}$,
such that $a_n =  0$  for  $n \gg 0$.
Take $\{ z^{n-1/2} \}_{n \in \bZ}$ as a semi-infinite basis.
This means an element $\sum_{n \in \bZ} a_n z^{n-1/2} \in H$
has as coordinates a sequence of numbers $\{a_n\}_{n \in \bZ}$,
semi-infinite in the sense that $a_n = 0$ for $n \gg 0$.
This space has a natural inner product and a symplectic structure:
\bea
&& \langle f,g \rangle : = \res (f(z)g(z)dz), \\
&& \omega(f, g) : = \res (f(-z) g(z) dz).
\eea

It is clear that one has a decomposition:
\be
H = H_+ \oplus H_-,
\ee
where
$H_+ = \{\sum_{n \geq 1} a_n z^{n-1/2} \in H\}$,
$H_- = \{ \sum_{n \leq 0} a_n z^{n-1/2} \in H\}$.
Denote by $\pi_\pm: H \to H_\pm$ the natural projections.
The big cell of Sato Grassmannian $\Gr_{(0)}$ consists of linear subspaces $U \subset H$ such that
$\pi_+|_U: U \to H_+$ is an isomorphism.

\subsection{Admissible basis and Pl\"ucker coordinates}

Suppose that $U \in \Gr_{(0)}$.
Since $\{z^{n+1/2}\}_{n\geq 0}$ is a basis of $H_+$,
elements of the form
\be
f_n(z):=\pi_+^{-1}(z^{n+1/2}) = z^{n+1/2} + \sum_{k < n} c_{n, k} z^{k+1/2}
\ee
form a basis of $U$.
This shows a subspace $U$ in the big cell of Sato Grassmannian
has a basis of the form $\{f_n(z) = z^{n+1/2} + \cdots\}_{n \geq 0}$,
where $\dots$ denote lower order terms.
Conversely,
a linear subspace of $H$ with a basis of this form lies in $\Gr_{(0)}$.
Such a basis will be called an admissible basis.

Given an admissible basis
$\{f_n = z^{n+1/2} + \sum_{k < n} c_{n, k} z^{k-1/2}\}_{n\geq 0}$,
for any $N \geq 1$,
consider the expansion of the wedge product:
\ben
&& f_0 \wedge f_1 \wedge \cdots \wedge f_{N-1}.
\een
First of all we have a term of the form:
\ben
&& z^{1/2} \wedge z^{3/2} \wedge \cdots \wedge z^{N-1/2},
\een
all the other terms are of the form:
\ben
&&  z^{-m_1-1/2} \wedge  \cdots \wedge   z^{-m_{l-1} -1/2} \wedge z^{-m_l-1/2}\\
\wedge &&
z^{1/2} \wedge \cdots \wedge \widehat{z^{n_l+1/2}} \wedge \cdots
\wedge \widehat{z^{n_1+1/2}} \wedge \cdots
\wedge z^{N-1/2},
\een
for some nonnegative numbers
\be
m_1 > i_2 > \cdots > m_l \geq 0, \;\;\;\;\;
n_1 > j_2 > \cdots > n_l \geq 0.
\ee
It is straightforward to see that the coefficient of this term
can be found as follows:
Use the coefficients of $f_n$ to form the following matrix:
\ben
\begin{pmatrix}
\cdots & c_{0,-3} & c_{0,-2} & c_{0,-1} & 1 \\
\cdots & c_{1,-3} & c_{1,-2} & c_{1,-1} & c_{1,0} & 1 \\
\cdots & c_{2,-3} & c_{2,-2} & c_{2,-1} & c_{2,0} & c_{2,1} & 1 \\
  & \vdots & \vdots & \vdots & \vdots & \vdots \\
\cdots & c_{N-1,-3} & c_{N-1,-2} & c_{N-1,-1} & c_{N-1,0} & c_{N-1,1} & \cdots & 1 \\
\end{pmatrix}
\een
from this matrix one can form an $N \times N$-matrix by taking
among the columns index by negative integers those indexed
by $-m_1-1, -m_2-1, \dots, -m_l-1$,
and delete among the columns indexed by nonnegative integers that indexed by
$n_l, n_{l-1}, \dots, n_1$.
Denote by $C_{(m_1, \dots, m_l|n_1, \dots, n_l)}$ this matrix.
It is easy to see  that the determinant of this matrix is independent of $N$.
In fact,
the determinant of this matrix is equal to the determinant of a matrix
$B_{(m_1, \dots, m_l|n_1, \dots, n_l)}$
of size $(n_1+1) \times (n_1+1)$ by removing from
$C_{(m_1, \dots, m_l|n_1, \dots, n_l)}$
the columns indexed by $n > n_1$
and the corresponding rows.
For example,
$B_{(0|0)} = (c_{0,-1})$,
$B_{(1|0)} = (c_{0,-2}) $,
$B_{(0|1)} =  \begin{pmatrix} c_{0,-1} & 1 \\ c_{1,-1} & c_{1,0} \end{pmatrix} $.
For comparison,
note $B_{(100,1|2,1)}$   is of the form:
$$ \begin{pmatrix}
 c_{0,-101} & c_{0,-2} & 1 \\
 c_{1,-101} & c_{1,-2} & c_{1,0} \\
 c_{2,-101} & c_{2,-2} & c_{2,0}
\end{pmatrix},$$
it is of size $3\times 3$.
On the other hand, $B_{(2,1|100,1)}$ is of the form:
\ben
\begin{pmatrix}
c_{0,-3} & c_{0,-2} & 1 \\
c_{1,-3} & c_{1,-2} & c_{1,0} & 0 \\
c_{2,-3} & c_{2,-2} & c_{2,0} & 1 \\
c_{3,-3} & c_{3,-2} & c_{3,0} & c_{3,2} & 1 \\
\vdots & \vdots & \vdots & \vdots \\
c_{99,-3} & c_{99,-2} & c_{99,0} & c_{99,2} & \cdots & 1  \\
c_{100,-3} & c_{100,-2} & c_{100,0} & c_{100,2} & \cdots & c_{100, 99}
\end{pmatrix}
\een
It is of size $101 \times 101$.

Denote by $v_{(m_1, \dots, m_l|n_1, \dots, n_l)}$ the following expression:
\ben
&&  z^{-m_1-1/2} \wedge  \cdots \wedge   z^{-m_{l-1} -1/2} \wedge z^{-j_l-1/2}\\
\wedge &&
z^{1/2} \wedge \cdots \wedge \widehat{z^{n_l+1/2}} \wedge \cdots
\wedge \widehat{z^{n_1+1/2}} \wedge \cdots,
\een
then one has
\be \label{eqn:Det-Admissible}
 f_0 \wedge f_1 \wedge \cdots
 = \sum \det B_{(m_1, \dots, m_l|n_1, \dots, n_l)} \cdot
 v_{(m_1, \dots, m_l|n_1, \dots, n_l)}.
\ee
See   \cite{Kazarian-Lando} for a similar formula.

\subsection{Normalized basis and Pl\"ucker coordinates}

\label{sec:Normalized}

In computing the determinant of the matrix $B_{(m_1, \dots, m_l|n_1, \dots, n_l)}$,
one can use Gauss elimination to perform some row operations.
There are $n_1+1-l$ places where the entries are $1$,
so the result will be
\be
\det B_{(m_1, \dots, m_l|n_1, \dots, n_l)}
= \pm \det A_{(m_1, \dots, m_l|n_1, \dots, n_l)}
\ee
for some matrix $A_{(m_1, \dots, m_l|n_1, \dots, n_l)}$ of size $l\times l$.

This can be done more geometrically as follows.
By Gauss elimination one can see that every $U \in \Gr_{(0)}$ has an admissible basis
of the form
\be
f_n = z^{n+1/2} + \sum_{m \geq 0} a_{n,m} z^{-m - 1/2},
\ee
such a basis will be called a normalized basis.
The coefficients $\{a_{n,m}\}$ are called the affine coordinates on the big cell \cite{Balogh-Yang}.
Then one has
\ben
&& f_1 \wedge f_2 \wedge \cdots \\
& = & \sum \alpha_{m_1, \dots, m_l; n_1, \dots, n_l}\cdot z^{-m_1-1/2} \wedge \cdots \wedge z^{-m_l-1/2} \\
&& \wedge z^{1/2} \wedge \cdots \widehat{z^{n_l+1/2}} \wedge \cdots \wedge \widehat{z^{n_1+1/2}} \wedge \cdots,
\een
where $m_1> m_2 > \cdots > m_l \geq 0$, $n_1 > n_2 > \cdots > n_l \geq 0$ are two sequences of integers,
and
\be
\alpha_{m_1, \dots, m_l; n_1, \dots, n_l}
= (-1)^{n_1+ \cdots +n_l} \begin{vmatrix}
a_{n_1, m_1} & \cdots & a_{n_1, m_l} \\
\vdots & & \vdots \\
a_{n_l, m_1} & \cdots & a_{n_l, m_l}
\end{vmatrix}
\ee

\subsection{The fermionic Fock space}
It is well-known that the notation $(m_1, \dots, m_l|n_1, \dots, n_l)$
is the Frobenius notation of a partition $\mu= (\mu_1, \dots, \mu_l)$,
and numbers $m_i$ and $n_i$ are given by:
\bea
m_i = \mu_i - i, \;\;\;\;
n_i = \mu^t_i - i,
\eea
where $\mu^t$ is the conjugate partition of $\mu$.
Furthermore, one has
\ben
&&  z^{-m_1-1/2} \wedge  \cdots \wedge   z^{-m_{l-1} -1/2} \wedge z^{-j_l-1/2}\\
&& \wedge
z^{1/2} \wedge \cdots \wedge \widehat{z^{n_l+1/2}} \wedge \cdots
\wedge \widehat{z^{n_1+1/2}} \wedge \cdots \\
& = & z^{1/2-\mu_1} \wedge z^{3/2-\mu_2} \wedge \cdots.
\een

The above discussions naturally lead one to fermionic Fock space.
For a sequence $\ba = (a_1, a_2, \dots)$ of half-integers such that $a_1 < a_2 < \cdots$.
The elements in the set  $\{a_i\;|\; a_i < 0\}$
will be called bubbles of $\ba$,
the elements in the set $(\bZ_{\geq 0} + \half) - \{a_1, a_2, \dots\}$ will be called
holes of $\ba$.
It is clear that $\ba$ can have only finitely many bubbles.
We say $\ba$ is admissible it also has only finitely many holes.
For an admissible sequence $\ba$,
let
\be
|\ba\rangle : = z^{a_1} \wedge z^{a_2} \wedge \cdots \in \Lambda^{\frac{\infty}{2}}(H).
\ee
Suppose that $\ba$ has $k$ bubbles and $l$ holes,
Following Dirac,
$|\ba\rangle$ describes a state with $k$ electrons and $l$ positrons,
so one defines the charge of $\ba$ to be $l-k$.
The fermionic Fock space $\cF$ is the space of expressions of form:
\be
\sum_{\ba} c_\ba |\ba\rangle,
\ee
where the sum is taken over admissible sequences.
One has a decomposition:
\be
\cF = \sum_{n \in \bZ} \cF^{(n)},
\ee
where $\cF^{(n)}$ are generated by $|\ba\rangle$ with charge $n$.
The space $\cF^{(0)}$ is particularly interesting.
Its basis vectors correspond to partitions $\mu = (\mu_1,  \mu_2, \mu_3, \cdots)$,
where $\mu_1 \geq \mu_2 \geq \mu_3 \geq \mu_l > \mu_{l+1} = \mu_{l+2} = \cdots = 0$:
\be
|\mu\rangle: = z^{-(\mu_1-1/2)} \wedge z^{-(\mu_2-3/2)} \wedge \cdots.
\ee
When all $\mu_i=0$, the partition is called the empty partition,
the corresponding vector is called the fermionic vacuum vector and is denoted by $\vac$:
\be
\vac : = z^{1/2} \wedge z^{3/2} \wedge \cdots.
\ee
For $n \in \bZ$,
define
\be
|n\rangle := z^{n+1/2} \wedge z^{n+3/2} \wedge \cdots.
\ee
This is a vector in $\cF^{(n)}$.

\subsection{Creators and annihilators on $\mathcal{F}$}
As in the case of ordinary Grassmann algebra,
one can consider exterior products and inner products.
For $r \in \mathbb{Z}+\frac{1}{2}$,
define operator $\psi_r: \Lambda^{\frac{\infty}{2}}(H) \to \Lambda^{\frac{\infty}{2}}(H)$ by
\be
\psi_r |\ba\rangle = z^{r} \wedge |\ba\rangle,
\ee
and let $\psi_r^*: \Lambda^{\frac{\infty}{2}}(H) \to \Lambda^{\frac{\infty}{2}}(H)$ be defined by:
\be
\psi^*_r |\ba\rangle =
\begin{cases}
(-1)^{k+1} \cdot z^{a_1}\wedge\cdots\wedge \widehat{z^{a_k}}\wedge\cdots, & \text{if $a_k = -r$ for some $k$}, \\
0, &  \text{otherwise}.
\end{cases}
\ee
These operators have charge $-1$ and $1$ respectively.

The anti-commutation relations for these operators are
\begin{equation} \label{eqn:CR}
[\psi_r,\psi^*_s]_+:= \psi_r\psi^*_s + \psi^*_s\psi_r = \delta_{-r,s}id
\end{equation}
and other anti-commutation relations are zero.
It is clear that for $r > 0$,
\begin{align}
\psi_{r} \vac & = 0, & \psi_r^* \vac & = 0,
\end{align}
so the operators $\{\psi_{r}, \psi_r^*\}_{r > 0}$ are called the fermionic annihilators.
For a partition $\mu$,
let $z^{-m_1-1/2}$, $\dots$,  $z^{-m_k-1/2}$ be the positions of the electrons,
$z^{n_1+1/2}, \dots, z^{n_k+1/2}$ be the positions of the positrons,
$m_1 > m_2 > \cdots > m_k \geq 0$, $n_1 > \cdots > n_k \geq 0$,
then  $(m_1, m_2, . . ., m_k | n_1, n_2, . . ., n_k)$ is the Frobenius notation for $\mu$,
and one has
\be \label{eqn:Mu}
|\mu\rangle = (-1)^{n_1 + n_2 + \cdots + n_k} \psi_{-m_1-\frac{1}{2}} \psi_{-n_1-\frac{1}{2}}^*
\cdots \psi_{-m_k-1/2} \psi_{-n_k-1/2}^* |0\rangle,
\ee
and so the operators $\{\psi_{-r}, \psi_{-r}^*\}_{r > 0}$ are the fermionic creators.

\subsection{Elements of $\cF^{(0)}$ associated with points on big cell of Sato Grassmannian}

 Now we come to formulate the main result of this Section.

\begin{thm} \label{thm:Bogoliubov}
Suppose that $U$ is given by a normalized basis
$$\{f_n = z^{n+1/2} + \sum_{m \geq 0} a_{n,m} z^{-m - 1/2} \},$$
the one has
\be
|U\rangle = e^A \vac,
\ee
where $A: \cF^{(0)} \to \cF^{(0)}$ is a linear operator
\be
A = \sum_{m, n \geq 0} a_{n,m} \psi_{-m-1/2} \psi^*_{-n-1/2}.
\ee
\end{thm}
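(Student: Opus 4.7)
The plan is to expand $e^A \vac$ as a sum over partitions and match coefficients with the wedge-product expansion of $|U\rangle = f_0 \wedge f_1 \wedge \cdots$ given in Section 3.3.

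First I would establish that the bilinears $X_{m,n} := \psi_{-m-1/2} \psi^*_{-n-1/2}$ for $m, n \geq 0$ mutually commute and are nilpotent of order two. Commutativity follows by a direct computation using \eqref{eqn:CR}: moving a $\psi^*_{-n-1/2}$ past a $\psi_{-m'-1/2}$ introduces a sign because $-n-1/2$ and $-m'-1/2$ are both negative and distinct, so the anticommutator vanishes; the pair of swaps needed to exchange two $X$'s produces two minus signs that cancel. Nilpotency $X_{m,n}^2 = 0$ follows from $\psi_{-m-1/2}^2 = 0$. Consequently $e^A \vac = \prod_{m,n \geq 0}(1 + a_{n,m} X_{m,n})\vac$, an expansion labeled by finite subsets of pairs $(m,n)$, with automatic vanishing whenever two $m$'s or two $n$'s coincide.

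Next I would fix a partition $\mu$ with Frobenius notation $(m_1, \dots, m_l \mid n_1, \dots, n_l)$, $m_1 > \cdots > m_l \geq 0$, $n_1 > \cdots > n_l \geq 0$, and compute the coefficient of $|\mu\rangle$ in $e^A \vac$. The contributing subsets are exactly bijections $\sigma$ from the $m$'s to the $n$'s, each contributing $\prod_i a_{n_{\sigma(i)}, m_i}$. For any such term, first rearrange the commuting product $\prod_i X_{m_i, n_{\sigma(i)}}$ into the form $\psi_{-m_1-1/2}\cdots\psi_{-m_l-1/2} \psi^*_{-n_1-1/2}\cdots\psi^*_{-n_l-1/2}$: moving the $\psi^*$'s past the $\psi$'s contributes $(-1)^{l(l-1)/2}$, and reordering the resulting $\psi^*$'s into decreasing order contributes $\mathrm{sgn}(\sigma)$. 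Summing over $\sigma$ yields $\det(a_{n_j, m_i})_{i,j}$ times $(-1)^{l(l-1)/2}$ times the sorted expression. Dividing by the sign $(-1)^{n_1+\cdots+n_l}(-1)^{l(l-1)/2}$ coming from \eqref{eqn:Mu} to convert to $|\mu\rangle$, the $(-1)^{l(l-1)/2}$ factors cancel and the coefficient of $|\mu\rangle$ becomes $(-1)^{n_1+\cdots+n_l} \det(a_{n_j, m_i})$.

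Finally I would compare with the coefficient $\alpha_{m_1,\dots,m_l;n_1,\dots,n_l} = (-1)^{n_1+\cdots+n_l} \det(a_{n_i, m_j})$ from the wedge expansion given in Section \ref{sec:Normalized}. Since the determinant is invariant under transposition, $\det(a_{n_i, m_j}) = \det(a_{n_j, m_i})$, so the two coefficients agree partition-by-partition and the theorem follows. The main obstacle is the bookkeeping of signs: tracking the $(-1)^{l(l-1)/2}$ from rearranging the interleaved $\psi\psi^*\psi\psi^*\cdots$ pattern into the blocked $\psi\cdots\psi^*\cdots$ form, the $\mathrm{sgn}(\sigma)$ from sorting $\psi^*$'s, and the sign appearing in \eqref{eqn:Mu}, and verifying they combine correctly with the $(-1)^{n_1+\cdots+n_l}$ coming from the wedge identification. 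Everything else is a formal manipulation enabled by the commutativity and nilpotency of the $X_{m,n}$.
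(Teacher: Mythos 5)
Your proof is correct and follows exactly the route the paper intends: the paper's entire proof is the one-line remark that the theorem is ``just a reformulation'' of the normalized-basis Pl\"ucker coordinate computation in \S\ref{sec:Normalized}, and you have simply supplied the coefficient-matching details (commutativity and nilpotency of the bilinears, the $(-1)^{l(l-1)/2}$ and $\sign(\sigma)$ bookkeeping against the sign in \eqref{eqn:Mu}) that the paper leaves implicit. The signs do combine as you claim, so nothing further is needed.
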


This is of course just a reformulation of \S \ref{sec:Normalized}.

\subsection{The boson-fermion correspondence}

In the rest of this Section,
we will recall boson-fermion correspondence and its applications to Sato tau-function.
The materials are well-known and can be found in e.g. \cite{MJD}.
They are included here to fix the notations for later Sections.
For any integer $n$, define an operator $\alpha_n$ on the fermionic Fock space $\mathcal{F}$ as follows:
\begin{equation*}
\alpha_n = \sum_{r\in \mathbb{Z} + \frac{1}{2}}:\psi_{-r}\psi^*_{r+n}:
\end{equation*}
When $n \neq 0$,
the effect of $\alpha_n$ on the vector $|\ba\rangle$ is the same as
the action of the shift operator $s_n: H \to H$ on $\Lambda^{\frac{\infty}{2}}(H)$.
When $n=0$,
$\alpha_0$ is the charge operator or fermionic number operator:
\be
\alpha_0 = \sum_{r > 0} (\psi_{-r}\psi^*_{r}- \psi^*_{-r}\psi_{r}).
\ee

They satisfy the following commutation relations:
\be \label{eqn:CR-bosonic}
[\alpha_m, \alpha_n] =m \delta_{m, -n}.
\ee
These operators can also arise in the following way.
Define the fermionic generating function:
\be
\psi(\xi) = \sum_{r \in \bZ +1/2} \psi_r \xi^{-r-1/2}, \;\;\;
\psi^*(\xi) = \sum_{r\in \bZ+1/2} \psi^*_r \xi^{-r-1/2}.
\ee
The commutation relations \eqref{eqn:CR} is equivalent to the following
operator product expansion:
\bea
&& \psi(\xi) \psi^*(\eta) = :\psi(\xi) \psi^*(\eta): + \frac{1}{\xi-\eta}, \\
&& \psi(\xi) \psi^*(\eta) = :\psi(\xi) \psi(\eta): , \\
&& \psi^*(\xi) \psi^*(\eta) = :\psi^*(\xi) \psi^*(\eta):.
\eea
Define the generating function of the operators $\alpha_n$ by
\be
\alpha(\xi): = \sum_{n \in \bZ} \alpha_n \xi^{-n-1}.
\ee
The fields of operators $\alpha(\xi)$, $\psi(\xi)$ and $\psi^*(\xi)$ are related as follows:
\be
\alpha(\xi)  = :\psi(\xi) \psi^*(\xi):.
\ee
The commutation relations \eqref{eqn:CR-bosonic} is equivalent to the following OPE:
\be
\alpha(\xi) \alpha(\eta) = :\alpha(\xi) \alpha(\eta):+ \frac{1}{(\xi-\eta)^2}.
\ee
One also has the following OPE's:
\bea
&& \alpha(\xi) \psi(\eta) = \frac{\psi(\xi)}{\xi-\eta} + :\psi(\xi)\psi^*(\xi) \psi(\eta):, \\
&& \alpha(\xi) \psi^*(\eta) = - \frac{\psi^*(\xi)}{\xi-\eta} + :\psi(\xi)\psi^*(\xi) \psi^*(\eta):.
\eea
There are equivalent to the following commutation relations:
\bea
&& [\alpha_m, \xi_r] = \psi_{m+r},  \label{comm:Alpha-Xi} \\
&& [\alpha_m, \xi_r^*] = - \psi_{m+r}^*. \label{com:Alpha-Xi*}
\eea

Consider the space of symmetric functions:
\be
\Lambda = \sum_{n \geq 0}^\infty \Lambda_n,
\ee
where $\Lambda_n$ is the space of homogeneous symmetric functions of degree $n$.
Let
$\mathcal{B} = \Lambda[w, w^{-1}]$
be the bosonic Fock space, where $w$ is a formal variable.
Then the  boson-fermion correspondence is a linear isomorphism
$\Phi: \mathcal{F} \rightarrow \mathcal{B}$ given by
\begin{equation}
|\ba \rangle \mapsto w^m \langle\underline{0}_m | e^{\sum_{n=1}^\infty \frac{p_n}{n}\alpha_n}|\ba\rangle ,\ \ |\ba\rangle\in \cF^{(m)}
\end{equation}
where $|\underline{0}_m\rangle = z^{\frac{1}{2}+m}\wedge z^{\frac{3}{2}+m}\wedge\cdots$.
Restricting to $\cF^{(0)}$,
$\Phi$ induces an isomorphism between $\cF^{(0)}$ and $\Lambda$. Explicitly, this isomorphism is given by
\cite[Theorem 9.4]{MJD}:
\begin{equation}\label{eqn:boson-fermion}
s_\mu = \lvac e^{\sum_{n=1}^\infty \frac{p_n}{n}\alpha_n} |\mu\rangle .
\end{equation}
Now combining \eqref{def:Sato},  \eqref{eqn:Determinant}, \eqref{eqn:Expansion-U}
and \eqref{eqn:boson-fermion},
one can get \eqref{eqn:Sato}.

\subsection{Action of fermionic operators on bosonic Fock space under boson-fermion correspondence}

It is very interesting to understand how the actions of the operators $\alpha_n$,
$\psi_r$ and $\psi_r^*$ originally on the fermionic Fock space
get transformed to actions on the bosonic Fock space after the boson-fermion correspondence.
The following are well-known (see e.g. \cite{MJD}):
\be
\Phi(\alpha_n |\ba\rangle)
= \begin{cases}
n \frac{\pd}{\pd p_n}\Phi(|\ba \rangle), & \text{if $n > 0$}, \\
- p_n \cdot \Phi(|\ba\rangle), & \text{if $n < 0$},
\end{cases}
\ee
Hence
\be
\Phi(\alpha(\xi) |\ba\rangle ) =
( \sum_{n \geq 1} n \frac{\pd}{\pd p_n} \xi^{-n-1}
+ \sum_{n=1}^\infty \xi^{n-1} p_n \cdot )  ( \Phi(|\ba\rangle).
\ee
For the fermionic operators,
one needs to introduce the vertex operators:
\be
\Psi(\xi) = :e^{\varphi(\xi)}:, \;\;\; \Psi^*(\xi) = e^{-\varphi(\xi):},
\ee
where the field $\varphi$ is defined by
\be
\varphi(\xi) = \sum_{n \in \bZ-\{0\}} \frac{\alpha_n}{-n} \xi^{-n} + \alpha_0 \log \xi + K,
\ee
which is an integral of the field $\alpha(\xi)$.
The integration constant $K$ is not well-defined as operator,
but it will be treated as a creator, and formally one requires the following
commutation rule:
\be
[K, \alpha_n] = \delta_{n,0}.
\ee
Then the vertex operators are given by
\bea
&& \Psi(\xi) = \exp (\sum_{n =1}^\infty \frac{p_n}{n} \xi^n)
\exp (-\sum_{n=1}^\infty \xi^{-n} \frac{\pd}{\pd p_n}) e^K \xi^{\alpha_0}, \\
&& \Psi^*(\xi) = \exp (-\sum_{n =1}^\infty \frac{p_n}{n} \xi^n)
\exp (\sum_{n=1}^\infty \xi^{-n} \frac{\pd}{\pd p_n}) e^{-K} \xi^{-\alpha_0},
\eea
where the actions of $e^K$ and $\xi^{\alpha_0}$ are defined by:
\be
(e^K f)(z, \bT) = z \cdot f(z, \bT), \;\;\;
(\xi^{\alpha_0} f) (z, \bT) =  f(\xi z, \bT).
\ee
Under the boson-fermion correspondence,
\bea
&& \Phi (\psi(\xi) |\ba \rangle = \Psi(\xi) \Phi(|\ba\rangle), \\
&& \Phi (\psi^*(\xi) |\ba \rangle = \Psi^*(\xi) \Phi(|\ba\rangle),
\eea

\subsection{Sato's construction of tau-functions}
Define a metric on $\cF$ such that $\{|\ba\rangle\;|\; \text{$\ba$ is admissible} \}$ is an orthonormal basis.
The fermionic vacuum is define by:
\be
\vac : = z^{1/2} \wedge z^{3/2} \wedge \cdots.
\ee
Following notations in physics literature the inner product of a vector $|v\rangle$ with $\vac$ will be denoted by
$\lvac v\rangle$.
One can easily see that
\be \label{eqn:Determinant}
\begin{split}
& \cdots \wedge z^{-3/2} \wedge z^{-1/2} \wedge |\ba\rangle  \\
= & \lvac \ba\rangle \cdot (\cdots \cdots \wedge z^{-3/2} \wedge z^{-1/2} \wedge z^{1/2} \wedge z^{3/2} \wedge\cdots).
\end{split}
\ee

Let $s_m: H \to H$ be the shift operator defined by:
\be \label{def:shift}
s_m ( z^{n-1/2}) = z^{m} \cdot z^{n-1/2} = z^{n  + m -1/2},
\ee
and let $\Gamma_+(\bT)$ be defined by
\be
\Gamma_+(\bT) = \exp \sum_{n =1}^\infty T_n s_n
\ee
Sato associated a tau-function $\tau_U$ to a subspace
$U$ spanned by an admissible basisof the form $\{f_n(z) = z^{n+1/2} + \sum_{j < n} a_{n,j} z^{j+1/2} \}_{n \geq 0}$ as follows:
\be \label{def:Sato}
\begin{split}
& \tau_U(\bT) \cdot (\cdots   \wedge z^{-3/2} \wedge z^{-1/2}
\wedge z^{1/2} \wedge z^{3/2} \wedge \cdots ) \\
= & \cdots   \wedge z^{-3/2} \wedge z^{-1/2}
\wedge \Gamma_+(\bT)(f_0(z)) \wedge \Gamma_+(\bT)(f_1(z)) \wedge \cdots.
\end{split}
\ee
Combining \eqref{comm:Alpha-Xi} with \eqref{def:shift}
and combining \eqref{eqn:Determinant} with \eqref{def:Sato},
one gets:
\be
\tau_U(\bT)
= \lvac e^{\sum_{n \geq 1} T_n \alpha_n} |U\rangle,
\ee
where $|U \rangle \in \cF^{(0)}$ is defined by:
\be
|U\rangle: = f_0(z) \wedge f_1(z) \wedge \cdots.
\ee
Since $\{|\mu\rangle\}$ form a basis of $\cF^{(0)}$,
there exists $c_\mu(U)$ such that
\be \label{eqn:Expansion-U}
|U\rangle = \sum_\mu c_\mu(U) \cdot |\mu\rangle.
\ee
The coefficients $c_\mu$ can be found using Theorem \ref{thm:Bogoliubov}.
One can understand $\tau_U(\bT)$ as the inner product of $|U\rangle$ with
$e^{\sum_{n \geq 1} T_n \alpha_{-n}}\vac$.
By boson-fermion correspondence \eqref{eqn:boson-fermion},
\be
e^{\sum_{n \geq 1} T_n \alpha_{-n}}\vac = \sum_\mu s_\mu(\bT) |\mu\rangle,
\ee
where $s_\mu(\bT)$ are the Schur functions defined as follows \cite{MacDonald}:
\be
s_\mu = \sum_\nu \frac{\chi^\mu_\nu}{z_\nu} p_\nu, \;\;\; p_\nu = \prod_i p_{\nu_i}, \;\;\; p_n = n T_n.
\ee
Therefore the tau-function admits an expansion:
\be \label{eqn:Sato}
\tau_U(\bT) = \sum_\mu c_\mu(U) s_\mu(\bT),
\ee
where the sum is taken over all partitions.

In the above the tau-function is constructed as a formal power series.
See Segal-Wilson \cite{Segal-Wilson} for an analytic  construction.

Sato's construction of the tau-function establishes a connection between
the theory of integrable hierarchies with conformal field theory:
\begin{center}
Sato Grassmannian $\to$ Fermionic Fock Space $\to$ Bosonic Fock Space \\
$U \in \Gr_{(0)}$ $\to$ $|U\rangle \in \cF^{(0)} $ $\to$ $\tau_U \in \Lambda$
\end{center}
He understood the theory of the integrable hierarchies as a dynamical systems
on the infinite-dimensional Grassmannian.
One should reverse the arrows in the above picture to get:
\ben
\tau_U =\lvac e^{\sum\limits_{n \geq 1} \frac{p_n}{n}\alpha_n}|U\rangle \in \Lambda
\to e^{\sum\limits_{n \geq 1} \frac{p_n}{n}\alpha_n}|U\rangle \in \cF^{(0)}
\to  e^{\sum\limits_{n \geq 1} \frac{p_n}{n} s_n}(U) \in \Gr_{(0)}.
\een

\section{Bosonic and Fermionic $N$-Point Functions}

In this Section
we will use the techniques in conformal field theory to understand
integrable systems based on  Sato's theory.

\subsection{Bosonic and fermionic correlation functions}

Given $U \in \Gr_{(0)}$,
we have seen that it determines a vector $|U\rangle \in \cF^{(0)}$.
One can define bosonic correlation functions
\be
\lvac \alpha(z_1) \cdots \alpha(z_n) |U\rangle
\ee
and fermionic correlation functions
\be
\langle n-m| \psi(z_1) \cdots \psi(z_m) \psi^*(w_1) \cdots \psi^*(w_n) |U\rangle
\ee
and their mixtures.
Next we will show that they naturally appear in the study of integrable systems.

\subsection{Hirota bilinear relations}

According to Sato \cite{Sato},
$\tau_U(\bT)$ is a tau-function of the KP hierarchy.
Let us recall how this can be shown.
First we show that
the Hirota bilinear relation  is satisfied by $ |U \rangle$ (cf. \cite[Theorem 9.3]{MJD}):
\be \label{eqn:Hirota-Fermion}
\sum_{r\in \bZ + 1/2} \psi_r^* |U\rangle \otimes \psi_{-r} |U \rangle = 0.
\ee
With Theorem \ref{thm:Bogoliubov},
we can give a more straightforward proof as follows.
One notes:
\ben
&& \psi(\xi) |U\rangle
= \sum_{m =0}^\infty (\psi_{-m-1/2} \xi^m - \sum_{n \geq 0} A_{n, m} \psi_{-n-1/2}\xi^{-m-1} ) |U\rangle, \\
&& \psi^*(\xi) |U\rangle
= \sum_{a =0}^\infty (\psi^*_{-a-1/2} \xi^a + \sum_{b \geq 0} A_{a, b} \psi^*_{-b-1/2}\xi^{-a-1} ) |U\rangle,
\een
it follows that
\ben
&& \res_{\xi = \infty} (\psi^*(\xi) |U\rangle \otimes \psi(\xi) |U\rangle)  \\
& = & - \sum_{a  \geq 0} \sum_{n \geq 0} A_{n, a} \cdot \psi^*_{-a-1/2}  |U\rangle \otimes \psi_{-n-1/2} |U\rangle \\
& + & \sum_{m \geq 0} \sum_{b \geq 0} A_{m,b} \cdot \psi^*_{-b-1/2} |U\rangle \otimes \psi_{-m-1/2} |U\rangle \\
& = & 0.
\een

\subsection{Wave-function, dual wave-function, and Hirota bilinear relations}
The fermionic 1-point functions are also called the wave-function and the dual wave-function respectively:
\bea
&& w(\bT; \xi) = \langle -1|e^{\sum\limits_{n \geq 1} \frac{p_n}{n}\alpha_n} \psi (\xi)  |U\rangle/\tau_U, \\
&& w^*(\bT; \xi) = \langle 1 | e^{\sum\limits_{n \geq 1} \frac{p_n}{n}\alpha_n} \psi^*(\xi) |U\rangle/\tau_U.
\eea
By the boson-fermion correspondence,
 \be
\begin{split}
w(\bT;z) = & \exp (\sum_{n =1} T_n z^n ) \frac{\exp ( - \sum \frac{z^{-n}}{n} \frac{\pd}{\pd T_n}) \tau(\bT)}{\tau(\bT)} \\
= & \exp (\sum_{n =1} T_n z^n ) \frac{ \tau(\bT-[1/z])}{\tau(\bT)},
\end{split}
\ee
and for the dual wave function:
\be
\begin{split}
w^*(\bT;z) = & \exp (-\sum_{n =1} T_n z^n ) \frac{\exp (\sum \frac{z^{-n}}{n} \frac{\pd}{\pd T_n}) \tau(\bT)}{\tau(\bT)} \\
= & \exp (\sum_{n =1} T_n z^n ) \frac{ \tau(\bT-[1/z])}{\tau(\bT)}.
\end{split}
\ee
These are called the Sato formulas.
They can be rewritten as follows:
\begin{align}
w(\bT;z) & = \frac{X(\bT;z) \tau_U(\bT)}{\tau_U(\bT)}, &
w^*(\bT;z) & = \frac{X^*(\bT;z) \tau_U(\bT)}{\tau_U(\bT)},
\end{align}
where the operators $X(\bT;z)$ and $X^*(\bT;z)$ defined by
\bea
&& X(\bT; z) =  \exp (\sum_{n =1} T_n z^n ) \cdot \exp ( - \sum \frac{z^{-n}}{n} \frac{\pd}{\pd T_n}), \\
&& X^*(\bT; z) =  \exp (-\sum_{n =1} T_n z^n ) \cdot \exp ( \sum \frac{z^{-n}}{n} \frac{\pd}{\pd T_n})
\eea
are also called the vertex operators.
The product of $X^*(\bT;w)$ with $X(\bT; z)$ is given by:
\be
X(\bT; z) X^*(\bT;w)  = \frac{z}{z-w} \cdot X(\bT; z, w),
\ee
where the operator $X(\bT;z,w)$ is defined by:
\be
X(\bT; z, w) = \exp (\sum_{n  \geq 1} T_n (z^n-w^n) ) \cdot
\exp ( - \sum_{n \geq 1} (\frac{z^{-n}}{n} - \frac{w^{-n}}{n}) \frac{\pd}{\pd T_n}).
\ee
By L'Hopital's rule,
\be
\lim_{z\to w} \frac{1}{z-w} (X(\bT; z,w)-1)
= \sum_{n \geq 1} n T_n w^{n-1}
+ \sum_{n \geq 1} w^{-n-1}\frac{\pd}{\pd T_n}.
\ee
On the fermionic Fock space,
this corresponds to
\be
\lim_{z \to w} (\psi(z) \psi^*(w) - \frac{1}{z-w}) = :\psi(w) \psi^*(w) = \alpha(w).
\ee

Also by the boson-fermion correspondence,
the Hirota bilinear relations \eqref{eqn:Hirota-Fermion} becomes:
\be
\res_{\xi = \infty} w(\bx, \xi) w^*(\bx', \xi) = 0.
\ee

\subsection{Dressing operator and the KP hierarchy}

Note:
\bea
&& w(\bT; \xi) = \exp \biggl(\sum_{n=1}^\infty  T_n \xi^n \biggr) \cdot
\frac{\tau(T_1-\xi^{-1}, T_2-\frac{1}{2} \xi^{-2}, \dots)}{\tau(T_1, T_2, \dots)}, \\
&& w^*(\bT; \xi) = \exp \biggl(-\sum_{n=1}^\infty  T_n \xi^n \biggr) \cdot
\frac{\tau(T_1+\xi^{-1}, T_2+\frac{1}{2} \xi^{-2}, \dots)}{\tau(T_1, T_2, \dots)}.
\eea
Write
\be
 w(\bT; \xi) = \exp \biggl(\sum_{n=1}^\infty  T_n \xi^n \biggr) \cdot
 (1 + \sum_{n=1}^\infty w_j \xi^{-n}).
\ee
The dressing operator $M$ is defined by:
\be
M:=1 + \sum_{n=1}^\infty w_j \pd_x^{-j}.
\ee
The dressing operator $M$ and the wave-function $w$ uniquely determine each other:
\be
w = M \exp \biggl(\sum_{n=1}^\infty  T_n \xi^n \biggr).
\ee
Let $L$ be the pseudo-differential operator defined by:
\be \label{eqn:Dressing}
L : = M \circ \pd_x \circ M^{-1}.
\ee
It is clear that
\be
L w = \xi \cdot w.
\ee
From the Hirota bilinear relations one can deduce that
\be
\frac{\pd}{\pd T_k} w = (L^k)_+ w.
\ee
The compatibility condition of the above two equations is
\be \label{eqn:KP}
\frac{\pd}{\pd T_k} L = [(L^k)_+, L].
\ee
From this one can also show that:
\be
\frac{\pd}{\pd t_k} M = - (L^k)_- M.
\ee
Indeed,
from \eqref{eqn:KP} and \eqref{eqn:Dressing} one immediately gets:
\be
[\frac{\pd}{\pd t_k} M \circ M^{-1} + (L^k)_-, L] = 0
\ee
Since $\frac{\pd}{\pd t_k} M \circ M^{-1} + (L^k)_-$ is a pseudodifferential operator
with coefficients differential polynomials in $a_1, a_2, \dots$,
so one can see that:
\be
\frac{\pd}{\pd t_k} M \circ M^{-1} + (L^k)_- = 0.
\ee
Here we use the following Lemma  easily proves by induction:

\begin{lem}
Suppose that $K = \sum_{n=1}^\infty b_n \pd_x^{-n}$ and $L = \pd + \sum_{n=1}^\infty a_n \pd_x^{-n}$
are pseudo-differential operators such that
\be
[K, L] = 0,
\ee
then $b_n$ are constants for all $ n\geq 1$.
\end{lem}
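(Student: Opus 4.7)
The plan is to prove the lemma by induction on $n$, extracting the coefficient of $\pd_x^{-n}$ in the identity $[K,L]=0$ at each step.

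First I would compute the commutator explicitly. Writing $L=\pd_x + A$ with $A = \sum_{k\ge 1} a_k \pd_x^{-k}$, we have $[K,L] = [K,\pd_x] + [K,A]$. The pseudo-differential Leibniz rule $\pd_x^{-j}\circ f = \sum_{m\ge 0}\binom{-j}{m}f^{(m)}\pd_x^{-j-m}$ gives $[K,\pd_x] = -\sum_{n\ge 1} b_n'\pd_x^{-n}$. For the second piece I would observe that each summand $[b_j\pd_x^{-j}, a_k\pd_x^{-k}]$ has pseudo-differential order at most $-(j+k+1)$, since the top-order terms $b_j a_k \pd_x^{-j-k}$ from the two orderings cancel. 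Consequently $[K,A]$ contributes only to orders $\le -3$.

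The base cases $n=1$ and $n=2$ are then immediate: the coefficient of $\pd_x^{-n}$ in $[K,L]=0$ is simply $-b_n'$ (because $[K,A]$ has no contribution in those orders), forcing $b_1'=b_2'=0$, so $b_1$ and $b_2$ are constants.

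For the inductive step I would assume $b_j$ is a constant, so that every $\pd_x$-derivative of it vanishes, for each $j<n$. Extracting the coefficient of $\pd_x^{-n}$ in $[K,L]=0$ via the expansion
\[
[b_j \pd_x^{-j}, a_k \pd_x^{-k}] = \sum_{m\ge 1}\left(b_j \binom{-j}{m} a_k^{(m)} - a_k \binom{-k}{m} b_j^{(m)}\right)\pd_x^{-j-k-m},
\]
every contribution requires $j+k+m=n$ with $k,m\ge 1$, hence $j\le n-2$. The induction hypothesis therefore kills all the terms $a_k\binom{-k}{m}b_j^{(m)}$. In the differential polynomial setting where the lemma is applied the coefficients $b_j$ lie in the ring freely generated by the $a_k$ and their derivatives, and a $\pd_x$-invariant element of this ring is a numerical constant; this additional input absorbs the remaining cross-terms $b_j\binom{-j}{m}a_k^{(m)}$ and leaves $b_n'=0$, closing the induction.

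The main obstacle I see is precisely this last simplification: while the order-counting and the base cases fall out of a direct computation, one must argue carefully that in the ambient differential ring of coefficients the residual sum $\sum b_j\binom{-j}{m}a_k^{(m)}$ cannot contribute, so that $[K,L]=0$ truly reduces to $b_n'=0$ at each order. Everything else is bookkeeping with the pseudo-differential Leibniz expansion.
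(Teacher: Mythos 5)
Your order-counting and the base cases $n=1,2$ are correct, and you have put your finger on exactly the right difficulty: after the induction hypothesis kills the terms $a_k\binom{-k}{m}b_j^{(m)}$, the cross terms $b_j\binom{-j}{m}a_k^{(m)}$ survive, and the coefficient of $\pd_x^{-n}$ in $[K,L]=0$ gives $b_n' = \sum_{j+k+m=n} b_j\binom{-j}{m}a_k^{(m)}$ rather than $b_n'=0$. But your proposed patch does not close this gap, and in fact no patch can: the lemma is false as literally stated. Take $K=L^{-1}=\pd_x^{-1}+0\cdot\pd_x^{-2}-a_1\pd_x^{-3}+(a_1'-a_2)\pd_x^{-4}+\cdots$. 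Then $[K,L]=0$ and $b_1=1$, $b_2=0$ are constants, but $b_3=-a_1$ is not. Concretely, at order $-3$ your identity reads $b_3'=-b_1a_1'$ (using $b_1'=0$), which vanishes only if $b_1=0$. So the assertion that "a $\pd_x$-invariant element of the differential ring is a numerical constant" is true but irrelevant here --- the obstruction is not that $b_n$ might fail to be constant despite $b_n'=0$, it is that $b_n'$ is genuinely nonzero once some lower $b_j$ is a nonzero constant.

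The repair, which is what is actually needed at the point where the paper invokes the lemma, is to use the standing hypothesis stated there: the $b_n$ are differential polynomials in $a_1,a_2,\dots$ \emph{without constant term}. Then the induction must carry the stronger statement $b_j=0$ (not merely $b_j=\mathrm{const}$) for $j<n$: at order $-n$ both families of cross terms $b_j\binom{-j}{m}a_k^{(m)}$ and $a_k\binom{-k}{m}b_j^{(m)}$ vanish because $b_j=0$ for $j\le n-2$, giving $b_n'=0$; then $b_n$ is a constant differential polynomial with no constant term, hence $b_n=0$, and the induction closes. (The paper itself offers no proof --- it only asserts the lemma is "easily proved by induction" --- so there is no argument to compare against; but the conclusion one should extract, and the one the paper actually uses, is $K=0$ under the extra hypothesis, not constancy of the $b_n$ in general.)
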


\subsection{From wave function to tau-function}

Let us give a proof of the following well-known result on the wave function of the KP hierarchy \cite{Adler-van Moerbeke}
from our point of view:

\begin{prop}
Suppose that $w(x; z)$ is the wave-function of KP hierarchy associated to $U \in \Gr_{(0)}$,
then
\be
U  = \Span \{ w(0; z), \pd_x w(0; z), \dots\}.
\ee
\end{prop}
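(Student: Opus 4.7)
The plan is to establish the stronger fact that $w(\bT; z)$ lies in $U$ for every admissible $\bT$, and then to extract the vectors $\partial_x^k w(0; z)$ from the Taylor expansion in $x = T_1$. A triangularity argument on leading powers of $z$ then shows that these vectors span $U$.

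First I would use the Sato formula
\begin{equation*}
w(\bT; z) = e^{\xi(\bT;z)}\biggl(1 + \sum_{j\geq 1} w_j(\bT) z^{-j}\biggr), \qquad \xi(\bT; z) = \sum_{n \geq 1} T_n z^n,
\end{equation*}
together with Theorem \ref{thm:Bogoliubov} and the boson-fermion correspondence, to verify that for each $\bT$ the series $e^{-\xi(\bT;z)} w(\bT; z)$ is the unique normalized element of the shifted subspace $e^{-\xi(\bT;z)} \cdot U$. Equivalently, $w(\bT; z) \in U$ whenever $\tau_U(\bT)$ is nonzero. Alternatively, one can derive this directly from
\begin{equation*}
w(\bT; z)\,\tau_U(\bT) = \langle -1 |\, e^{\sum_n T_n \alpha_n}\,\psi(z)\, |U\rangle,
\end{equation*}
by commuting $\psi(z)$ past $e^A \vac$ using the identity $[\psi(z), A] = -\sum_{n,m \geq 0} z^{-n-1} a_{n,m} \psi_{-m-1/2}$ and translating the fermionic answer back to the ambient space $H$.

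Next, since $U$ is a fixed subspace and $\bT \mapsto w(\bT; z)$ is given by a formal power series in $\bT$, each Taylor coefficient $\partial_x^k w(0; z)$ of the $U$-valued curve $\bT \mapsto w(\bT; z)$ lies in $U$ as well. A direct Leibniz-rule computation on $w(x; z) = e^{xz}\bigl(1 + \sum_{j \geq 1} w_j(x) z^{-j}\bigr)$ yields
\begin{equation*}
\partial_x^k w(0; z) = z^k + \text{(strictly lower powers of } z\text{)},
\end{equation*}
so under the standard identification that multiplies integer-power series by $z^{1/2}$ to land in $H$, the leading monomials $z^{k+1/2}$ of the vectors $\{z^{1/2} \partial_x^k w(0; z)\}_{k \geq 0}$ match those of the normalized basis $\{f_k\}_{k \geq 0}$ of $U$ from \S \ref{sec:Normalized}. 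A triangular change-of-basis argument then yields $U = \Span\{w(0; z), \partial_x w(0; z), \dots\}$.

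The main obstacle is the first step, proving $w(\bT; z) \in U$ for each $\bT$. The delicate point is to track carefully the passage between the bosonic picture of $\tau_U \in \Lambda$ and the ambient-space picture of $U \subset H$, in particular the half-integer versus integer power convention and the multiplicative prefactor $e^{\xi(\bT; z)}$ in the Sato formula; once this bookkeeping is set up, the triangular structure and hence the spanning statement follow formally.
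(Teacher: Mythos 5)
Your proposal is correct and follows essentially the same route as the paper: the decisive step in both is commuting $\psi(z)$ past $e^A\vac$ so that $\psi(z)|U\rangle$ is expressed with the normalized basis elements $z^m - \sum_n A_{m,n}z^{-n-1}$ as coefficients, followed by a triangularity argument on the leading powers of $z$. The only organizational difference is that you first record the membership $w(\bT;z)\in U$ and then extract Taylor coefficients in $x=T_1$, whereas the paper applies the Leibniz rule directly to the correlator ratio and evaluates $\langle -1|\alpha_1^k\psi(z)e^A\vac$; both reduce to the same computation.
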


\begin{proof}
Recall the wave-function is defined by:
\be
w(x; z) = \frac{\langle -1|e^{x \alpha_1} \psi (z)
e^{\sum\limits_{m,n \geq 0} A_{m,n} \psi_{-m-1/2}\psi^*_{-n-1/2} } \vac}
{\langle 0|e^{x \alpha_1}
e^{\sum\limits_{m,n \geq 0} A_{m,n} \psi_{-m-1/2}\psi^*_{-n-1/2} } \vac}.
\ee
We use Leibniz formula to compute its derivatives in $x$:
\ben
&& \pd_x^k w(x;z)|_{x=0} \\
& = & \sum_{i=0}^k \binom{k}{i} \pd_x^{k-i} \langle -1|e^{x \alpha_1} \psi (z)
e^{\sum\limits_{m,n \geq 0} A_{m,n} \psi_{-m-1/2}\psi^*_{-n-1/2} } \vac
\cdot \pd_x^i \frac{1}{\tau(x)} \biggl|_{x=0} \\
& = & \sum_{i=0}^k \binom{k}{i} \langle -1| \alpha_1^{k-i} \psi (z)
e^{\sum\limits_{m,n \geq 0} A_{m,n} \psi_{-m-1/2}\psi^*_{-n-1/2} } \vac
\cdot \pd_x^i \frac{1}{\tau(x)} \biggl|_{x=0}.
\een
It follows that
\ben
&& \Span \{ w(0; z), \pd_x w(0; z), \dots\}
=  \Span \{ \langle -1| \alpha_1^k \psi (z)
e^A \vac\}_{k \geq 0}.
\een
Note $\langle -1| \alpha_1^k \psi (z) e^A \vac$
is the inner product of $\alpha_{-1}^k \psi_{-1/2}\vac $
with $\psi(z)e^A\vac$.
We have:
\ben
\psi(z)e^A\vac
& = & \sum_{r \in \bZ+1/2} z^{-r-1/2} \psi_r \cdot
e^{\sum\limits_{m,n \geq 0} A_{m,n} \psi_{-m-1/2}\psi^*_{-n-1/2} } \vac \\
& = & \sum_{k \geq 0} z^k \psi_{-k-1/2} e^A\vac
- \sum_{m,n \geq 0} z^{-n-1} A_{m, n} \psi_{-m-1/2} e^A\vac \\
& = & \sum_{m \geq 0} (z^m-\sum_{n\geq 0}  A_{m, n} z^{-n-1}) \psi_{-m-1/2}
e^{\sum\limits_{m,n \geq 0} A_{m,n} \psi_{-m-1/2}\psi^*_{-n-1/2} } \vac.
\een
Now it is clear that:
\be
\langle -1|  \psi (z) e^A \vac
= 1 - \sum_{n=0}^\infty A_{0,n} z^{-n-1}.
\ee
Next recall:
\ben
&& \alpha_{-1} = \half \psi_{-1/2} \psi^*_{-1/2} + \sum_{n=1}^\infty( \psi_{-n-1/2} \psi^*_{n-1/2}
- \psi^*_{-n-1/2} \psi_{n-1/2}),
\een
and so one has
\ben
&& \alpha_{-1} \psi_{-1/2}\vac = \psi_{-3/2} \vac,
\een
it follows that
\ben
&&¡¡\langle -1| \alpha_1 \psi (z) e^A \vac= z - \sum_{n=0}^\infty A_{1,n} z^{-n-1}.
\een
Next from
\ben
&& \alpha_{-1}^2 \psi_{-1/2}\vac = \psi_{-5/2} \vac + \half \psi_{-1/2}\psi^*_{-1/2} \psi_{-3/2} \vac,
\een
one gets:
\ben
&& \langle -1| \alpha_1^2 \psi (z) e^A \vac
= z^2 - \sum_{n=0}^\infty A_{2,n} z^{-n-1} \\
& - &  \half A_{1,0} (1 - \sum_{n=0}^\infty A_{0,n} z^{-1-n})
+ \half  A_{0,0} (z - \sum_{n=0}^\infty   A_{1,n} z^{-1-n}).
\een
In general, from
\ben
\alpha_1^k \psi_{-1/2}\vac = \psi_{-k-1/2} \vac + \cdots,
\een
one sees that
$$ \langle -1| \alpha_1^k \psi (z) e^A \vac
= (z^k - \sum_{n=0}^\infty A_{k,n} z^{-n-1})  +\cdots,$$
where $\cdots$ stand for terms with degrees lower than $k$.
This completes the proof.
\end{proof}

\subsection{Fay identities}

In last subsection we have seen that the wave function $w(\bT;z)$ determines the tau-function $\tau(\bT)$,
hence it also determines the free energy $F(\bT)$,
and therefore,
it should also determine the $n$-point function
\be
\cF(\xi_1, \dots, \xi_n; \bT)
= \nabla(\xi_1) \cdots \nabla(\xi_n) F(\bT),
\ee
where
\be
\nabla(\xi) = \sum_{n \geq 1} \xi^{-n-1} \frac{\pd}{\pd T_n}.
\ee
Let us show how this can be achieved.
From the definition of the wave-function,
we have
\be
\frac{ \tau(\bT-[1/z])}{\tau(\bT)}
= \exp (-\sum_{n =1} T_n z^n ) \cdot w(\bT;z).
\ee
Change $\bT$ to $\bT+[1/\tilde{z}]$:
\ben
\frac{ \tau(\bT-[1/z]+[1/\tilde{z}])}{\tau(\bT+[1/\tilde{z}])}
& = & \exp (-\sum_{n =1} (T_n + \frac{1}{n \tilde{z}^n}) z^n ) \cdot w(\bT+[1/\tilde{z}];z) \\
& = & \frac{\tilde{z}}{\tilde{z}-z} \exp (-\sum_{n =1} T_n z^n ) \cdot w(\bT+[1/\tilde{z}];z).
\een
Take $\lim_{\tilde{z} \to z} \nabla_z$:
\ben
\frac{ \nabla(z) \tau(\bT)}{\tau(\bT+[1/z])}
& = & \lim_{\tilde{z} \to z} \pd_z \biggl(
\frac{\tilde{z}}{\tilde{z}-z} \exp (-\sum_{n =1} T_n z^n ) \cdot w(\bT+[1/\tilde{z}];z) \biggr) \\
& = & \exp (-\sum_{n =1} T_n z^n ) \cdot \lim_{\tilde{z} \to z} \frac{\tilde{z}}{(\tilde{z}-z)^2}  \biggl(
  w(\bT+[1/\tilde{z}];z)   \\
& - & (\tilde{z}-z) \sum_{n =1} n T_n z^n   \cdot w(\bT+[1/\tilde{z}];z)  \\
& + & (\tilde{z}-z) \cdot \pd_z w(\bT+[1/\tilde{z}];z) \biggr) \\
& = & \exp (-\sum_{n =1} T_n z^n ) \cdot z \lim_{\tilde{z} \to z} \biggl(
  \pd_{\tilde{z}}^2 w(\bT+[1/\tilde{z}];z)   \\
& - & 2 \sum_{n =1} n T_n z^n   \cdot \pd_{\tilde{z}} w(\bT+[1/\tilde{z}];z)  \\
& + & 2 \pd_{\tilde{z}}\pd_z w(\bT+[1/\tilde{z}];z) \biggr).
\een
So we obtain a formula of the form:
\ben
\nabla(z) F(\bT)
& = & w^*(\bT; z) \cdot z \lim_{\tilde{z} \to z} \biggl(
  \pd_{\tilde{z}}^2 w(\bT+[1/\tilde{z}];z)   \\
& - & 2 \sum_{n =1} n T_n z^n   \cdot \pd_{\tilde{z}} w(\bT+[1/\tilde{z}];z)
+ 2 \pd_{\tilde{z}}\pd_z w(\bT+[1/\tilde{z}];z) \biggr).
\een
To get $n$-point functions,
one can consider $\tau(\bT-[1/\xi_1]+[1/\eta_1]+\cdots -[1/\xi_n]+[1/\eta_n])$.
The result of this approach will involve both the wave-function and the dual wave function.
This leads us to consider the product
\be
w(\bT;\xi) w^*(\bT; \eta)
= e^{\sum_{n \geq 1} T_n(\xi^n - \eta^n)} \cdot \frac{\tau(\bT-[1/\xi])\tau(\bT+[1/\eta])}{\tau(\bT)^2}
\ee
and the Wronskian
$$\{w(\bT;\xi), w^*(\bT; \eta)\}
= \begin{vmatrix}
w(\bT;\xi) & w^*(\bT; \eta) \\ \pd_xw(\bT;\xi) & \pd_xw^*(\bT; \eta)
\end{vmatrix},
$$
and  their restrictions to the diagonal $\xi=\eta$.
Since
\ben
\pd_x w(\bT;z) & = & z \exp (\sum_{n =1} T_n z^n ) \frac{\tau_U(\bT-[1/z])}{\tau(\bT)} \\
& + & \exp (\sum_{n =1} T_n z^n ) \frac{\pd_x \tau_U(\bT-[1/z])}{\tau(\bT)} \\
& - &  \exp (\sum_{n =1} T_n z^n ) \frac{\tau_U(\bT-[1/z])\cdot \pd_x \tau_U(\bT)}{\tau(\bT)^2} \\
& = & \exp (\sum_{n =1} T_n z^n ) \frac{1}{\tau(\bT)^2} \\
&& \cdot \bigg(
z \tau_U(\bT) \tau_U(\bT-[1/z])- \{\tau_U(\bT), \tau_U(\bT)-[1/z])\}\biggr),
\een
and dually,
\ben
\pd_x w^*(\bT;z) & = & - z \exp (-\sum_{n =1} T_n z^n ) \frac{\tau_U(\bT+[1/z])}{\tau(\bT)} \\
& + & \exp (-\sum_{n =1} T_n z^n ) \frac{\pd_x \tau_U(\bT+[1/z])}{\tau(\bT)} \\
& - &  \exp (-\sum_{n =1} T_n z^n ) \frac{\tau_U(\bT+[1/z])\cdot \pd_x \tau_U(\bT)}{\tau(\bT)^2} \\
& = & - \exp (\sum_{n =1} T_n z^n ) \frac{1}{\tau(\bT)^2} \\
&& \cdot \bigg(
z \tau_U(\bT) \tau_U(\bT+[1/z])- \{\tau_U(\bT), \tau_U(\bT)+[1/z])\}\biggr),
\een
and so
\be \label{eqn:Wave-Wronskian}
\begin{split}
& \{w(\bT;\xi), w^*(\bT; \eta)\}
=  \exp \sum_{n \geq 1} T_n(\xi^n - \eta^n) \\
& \cdot \biggl( \frac{\{\tau(\bT-[1/\xi]), \tau(\bT+[1/\eta]) \}}{\tau(\bT)^2} \\
& - (\xi + \eta) \cdot \frac{\tau(\bT-[1/\xi])\tau(\bT+[1/\eta])}{\tau(\bT)^2} \biggr).
\end{split}
\ee
So we need to consider the product and the Wronskian of
$\tau(\bT-[1/\xi])$ and $\tau(\bT+[1/\eta])$.
They can be studied using Fay identity for $\tau$-function due to Sato,
again one has to double the number of variables first then take suitable limits:
\be \label{eqn:Fay}
\begin{split}
  & (s_0 - s_1)  (s_2 -  s_3)  \tau(\bT + [s_0] +  [s_1]  )  \tau(\bT  +  [s_2]  +  [s_3])   \\
+ & (s_0 - s_2)  (s_3 -  s_1)  \tau(\bT + [s_0] +  [s_2]  )  \tau(\bT  +  [s_3]  +  [s_1])   \\
+ & (s_0 - s_3)  (s_1 -  s_2)  \tau(\bT + [s_0] +  [s_3]  )  \tau(\bT  +  [s_1]  +  [s_2])   = 0.
\end{split}
\ee
For its derivation from the Hirota bilinear relations and its relation with Fay trisecant
identity for theta functions,
see e.g. \cite{Shiota}.
In \cite{Adler-van Moerbeke} this formula was used to derive the following formula:
\be \label{eqn:Wave-Product}
w^*(\bT; \eta) w(\bT; \xi) = \frac{1}{\xi-\eta}
\pd_x \biggl( \frac{X(\bT; \xi,\eta) \tau(\bT)}{\tau(\bT)} \biggr).
\ee
This was done as follows.
Take $\pd_{s_0}$ on both sides of \eqref{eqn:Fay} and take $s_0 = s_3 = 0$,
one can get:
\be \label{eqn:Diff-Fay1}
\begin{split}
& \{\tau(\bT + [s_1]), \tau(\bT + [s_2]) \} \\
= & (\frac{1}{s_2} - \frac{1}{s_1} ) (\tau(\bT + [s_1]) \tau(\bT + [s_2])
- \tau(\bT) \tau(\bT + [s_1]+[s_2]) ).
\end{split}
\ee
This is the differential Fay identity.
By changing $\bT$ to $\bT-[s_2]$,
one gets the following version \cite[(3.11)]{Adler-van Moerbeke}:
\be \label{eqn:Diff-Fay2}
\begin{split}
& \{\tau(\bT), \tau(\bT+[s_1] - [s_2]) \}  \\
= & (s_2^{-1} - s_1^{-1}) (\tau(\bT+[s_1]-[s_2]) \tau(\bT)
-  \tau(\bT+[s_1])\tau(\bT-[s_2]) ).
\end{split}
\ee
Using this one gets by a computation similar to that of $\pd_x w(\bT; z)$:
\ben
&   & \frac{1}{\xi-\eta} \pd_x \biggl( \frac{X(\bT; \xi,\eta) \tau(\bT)}{\tau(\bT)} \biggr) \\
& = & \frac{1}{\xi-\eta}
\pd_x \biggl(\exp (\sum_{n  \geq 1} T_n (\xi^n-\eta^n) ) \cdot
\frac{\tau(\bT-[1/\xi]+[1/\eta] )}{\tau(\bT)} \biggr) \\
& = & \frac{\exp (\sum_{n  \geq 1} T_n (\xi^n-\eta^n) )}{(\xi-\eta) \tau(\bT)^2}
\biggl((\xi-\eta)  \cdot
\tau(\bT-[1/\xi]+[1/\eta] ) \cdot \tau(\bT) \\
& - & \{\tau(\bT), \tau(\bT-[1/\xi]+[1/\eta] )\} \biggr) \\
& = &  \frac{\exp (\sum_{n  \geq 1} T_n (\xi^n-\eta^n) )}{  \tau(\bT)^2} \tau(\bT-[1/\xi]) \tau(\bT+[1/\eta]) \\
& = & w^*(\bT; \eta) w(\bT; \xi).
\een
This proves \eqref{eqn:Wave-Product}.
By \eqref{eqn:Wave-Wronskian} and \eqref{eqn:Diff-Fay1},
\be \label{eqn:Wave-Wronskian2}
\begin{split}
& \{w(\bT;\xi), w^*(\bT; \eta)\} \\
= & - (\xi + \eta) \cdot \exp \sum_{n \geq 1} T_n(\xi^n - \eta^n)   \cdot
\frac{\tau(\bT-[1/\xi]+[1/\eta])}{\tau(\bT)}  .
\end{split}
\ee
In particular,
after taking $\lim_{\eta \to \xi}$ on both sides of \eqref{eqn:Wave-Wronskian2}:
\be
 \{w(\bT;\xi), w^*(\bT; \xi)\} = -2 \xi .
\ee
In the same fashion one can show that
\be \label{eqn:Wave-Wronskian3}
\begin{split}
& \{w(\bT;\xi), w(\bT; \eta)\} \\
= & - (\xi - \eta) \cdot \exp \sum_{n \geq 1} T_n(\xi^n + \eta^n)   \cdot
\frac{\tau(\bT-[1/\xi]-[1/\eta])}{\tau(\bT)},
\end{split}
\ee
and for the dual wave-functions,
\be \label{eqn:Wave-Wronskian4}
\begin{split}
& \{w^*(\bT;\xi), w^*(\bT; \eta)\} \\
= & (\xi - \eta) \cdot \exp \sum_{n \geq 1} T_n(-\xi^n - \eta^n)   \cdot
\frac{\tau(\bT+[1/\xi]+[1/\eta])}{\tau(\bT)}  .
\end{split}
\ee

\subsection{A formula for bosonic one-point function}
Taking $\lim_{\xi \to \eta}$ on both sides of \eqref{eqn:Wave-Product}:
\be
w^*(\bT; \xi) w(\bT; \xi) =
\pd_x \biggl( \frac{(\sum\limits_{n \geq 1} n T_n \xi^{n-1}
+ \sum\limits_{n \geq 1} \xi^{-n-1}\frac{\pd}{\pd T_n}) \tau(\bT)}{\tau(\bT)} \biggr).
\ee
Write $\tau(\bT) = \exp F(\bT)$.
Then one has
\be
w^*(\bT; \xi) w(\bT; \xi) = 1 + \pd_x \sum\limits_{n \geq 1} w^{-n-1}\frac{\pd}{\pd T_n} F(\bT),
\ee
or maybe it is more appropriate to write it as
\be
w^*(\bT; \xi) w(\bT; \xi) = \pd_x( \sum\limits_{n \geq 1} n T_n w^{n-1} + \sum\limits_{n \geq 1} w^{-n-1}\frac{\pd}{\pd T_n} F(\bT) ).
\ee
One then formally has
\be
\sum\limits_{n \geq 1} n T_n w^{n-1} + \sum\limits_{n \geq 1} w^{-n-1}\frac{\pd}{\pd T_n} F(\bT)
= \pd_x^{-1} (w^*(\bT; \xi) w(\bT; \xi)).
\ee
It is actually possible to express the left-hand side in terms of wave-function and the dual wave-function
without taking the integral.
Take $\pd_\xi$ on both sides of \eqref{eqn:Wave-Wronskian2}:
\ben
&& \{\pd_\xi w(\bT;\xi), w^*(\bT; \eta)\} \\
& = & -  \exp \sum_{n \geq 1} T_n(\xi^n - \eta^n)   \cdot
\frac{\tau(\bT-[1/\xi]+[1/\eta])}{\tau(\bT)}  \\
& - & (\xi + \eta) \cdot  \exp \sum_{n \geq 1} T_n(\xi^n - \eta^n)  \\
&& \cdot \frac{1}{\tau(\bT)}
( \sum_{n \geq 1} n T_n\xi^{n-1}  + \sum_{n \geq 1} \xi^{-n-1} \frac{\pd}{\pd T_n} )
\tau(\bT-[1/\xi]+[1/\eta]).
\een
In the above we have used the following identity:
\be \label{eqn:Der-Wave}
\begin{split}
& \pd_z w(\bT;z) \\
=&
\exp (\sum_{n =1} T_n z^n ) \frac{\sum_{n =1} (n T_n z^n + z^{-n-1} \frac{\pd}{\pd T_n})\tau(\bT-[1/z])}{\tau(\bT)}.
\end{split}
\ee
And so after taking $\lim_{\eta\to\xi}$, one gets:
\be
\begin{split}
& \{\pd_\xi w(\bT;\xi), w^*(\bT; \xi)\}  \\
= &  -1  - 2\xi  \cdot
( \sum_{n \geq 1} n T_n\xi^{n-1}  + \sum_{n \geq 1} \xi^{-n-1} \frac{\pd}{\pd T_n} F(\bT)).
\end{split}
\ee
Similarly,
take $\pd_\eta$ on both sides of \eqref{eqn:Wave-Wronskian2}:
\ben
&& \{w(\bT;\xi), \pd_{\eta} w^*(\bT; \eta)\} \\
& = & -  \exp \sum_{n \geq 1} T_n(\xi^n - \eta^n)   \cdot
\frac{\tau(\bT-[1/\xi]+[1/\eta])}{\tau(\bT)}  \\
& + & (\xi + \eta) \cdot \exp \sum_{n \geq 1} T_n(\xi^n - \eta^n)  \\
&& \cdot
( \sum_{n \geq 1} n T_n\eta^{n-1}  + \sum_{n \geq 1} \eta^{-n-1} \frac{\pd}{\pd T_n} )
\frac{\tau(\bT-[1/\xi]+[1/\eta])}{\tau(\bT)},
\een
and so after taking $\lim_{\eta\to\xi}$, one gets:
\be
\begin{split}
& \{ w(\bT;\xi), \pd_\xi w^*(\bT; \xi)\}  \\
= &  -1  + 2\xi  \cdot
( \sum_{n \geq 1} n T_n\xi^{n-1}  + \sum_{n \geq 1} \xi^{-n-1} \frac{\pd}{\pd T_n} F(\bT)).
\end{split}
\ee
Reformulating the above results,
we get

\begin{thm}
For a $\tau$-function $\tau(\bT)$ of the KP hierarchy,
the following identities hold:
\bea
&& \sum_{n \geq 1} n T_n\xi^{n-1}  + \nabla(\xi) F(\bT)  \nonumber \\
& = & - \frac{1}{2\xi} ( \{\pd_\xi w(\bT;\xi), w^*(\bT; \xi)\}+1) \\
& = & \frac{1}{2\xi} (\{ w(\bT;\xi), \pd_\xi w^*(\bT; \xi)\} + 1) \\
& = & \frac{1}{4\xi} (\{ w(\bT;\xi), \pd_\xi w^*(\bT; \xi)\} -  \{\pd_\xi w(\bT;\xi), w^*(\bT; \xi)\}).
\label{eqn:One-Point3}
\eea
\end{thm}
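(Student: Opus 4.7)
The plan is to exploit the Wronskian identity \eqref{eqn:Wave-Wronskian2} by differentiating separately in the two spectral variables $\xi$ and $\eta$ and then specializing to the diagonal $\eta=\xi$. All three identities in the theorem will fall out of the two resulting one-variable identities, so the third one is a formal consequence of the first two.

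First I would recall the identity \eqref{eqn:Der-Wave} for $\pd_z w(\bT;z)$, which shows that differentiating the wave-function $w(\bT;\xi)$ in $\xi$ produces exactly the combination $\sum_n n T_n\xi^{n-1}+\sum_n \xi^{-n-1}\pd/\pd T_n$ hitting $\tau(\bT-[1/\xi])$; an analogous identity holds for $\pd_\eta w^*(\bT;\eta)$, with a sign change arising from the $-\sum T_n \eta^n$ in the exponential. Applying $\pd_\xi$ to both sides of \eqref{eqn:Wave-Wronskian2} produces two terms: one in which $\pd_\xi$ hits the prefactor $-(\xi+\eta)\exp\sum T_n(\xi^n-\eta^n)/\tau(\bT)$, and one in which it hits $\tau(\bT-[1/\xi]+[1/\eta])$. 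The second term, upon setting $\eta=\xi$, is precisely $-2\xi$ times the operator $\sum n T_n\xi^{n-1}+\nabla(\xi)$ applied to $\tau(\bT)/\tau(\bT) = 1$ — except that the operator lands on $\log\tau=F$, so we get exactly $-2\xi(\sum nT_n\xi^{n-1}+\nabla(\xi) F(\bT))$. The first term, upon setting $\eta=\xi$, yields $-1$ because the prefactor $-(\xi+\eta)$ differentiates to $-1$ and everything else reduces to $1$ at $\eta=\xi$. This gives the first displayed identity.

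Next I would repeat the calculation with $\pd_\eta$ in place of $\pd_\xi$. Since $w^*$ carries the opposite sign of $T_n\eta^n$ in its exponential, the analogue of the second term produces a plus sign: after setting $\eta=\xi$, it contributes $+2\xi(\sum nT_n\xi^{n-1}+\nabla(\xi) F(\bT))$. The contribution from differentiating the prefactor $-(\xi+\eta)$ is again $-1$. This yields the second displayed identity. The third identity is then the arithmetic mean of the first two: adding them eliminates the $\pm 1$ constants and doubles the target expression, so dividing by $4\xi$ yields exactly \eqref{eqn:One-Point3}.

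The only nontrivial step is keeping track of signs and of where $\pd_\xi$ versus $\pd_\eta$ lands; the main obstacle is really bookkeeping rather than conceptual difficulty, since the author has already carried out the analogous $\pd_\xi$ and $\pd_\eta$ computations just above the theorem statement and the identity \eqref{eqn:Der-Wave} encapsulates the only nontrivial manipulation of the wave-function. Once the two diagonal Wronskian identities are written down, the theorem is a direct rewriting solving for $\sum nT_n\xi^{n-1}+\nabla(\xi)F(\bT)$.
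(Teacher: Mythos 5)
Your proposal is correct and follows essentially the same route as the paper: the author likewise applies $\pd_\xi$ and then $\pd_\eta$ to the Wronskian identity \eqref{eqn:Wave-Wronskian2}, invokes \eqref{eqn:Der-Wave} to produce the operator $\sum_n nT_n\xi^{n-1}+\nabla(\xi)$ acting on the shifted tau-function, takes $\lim_{\eta\to\xi}$ to obtain the two diagonal identities with the $\mp 1$ constants, and obtains the third identity as their average.
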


In the case of KdV hierarchy,
we recover \cite[Theorem 1.2]{Bertola-Dubrovin-Yang} by \eqref{eqn:One-Point3}.

\subsection{Higher Fay identities and  bosonic $n$-point function of KP hierarchy
in terms of fermionic one-point functions}
\label{sec:Higher-Fay}

To obtain general formula for bosonic $n$-point functions
in terms of fermionic one-point functions,
one can repeatedly apply the loop operators of the following form to the formula for the $(n-1)$-point functions:
\be
\nabla(\xi) = \sum_{n \geq 1} \xi^{-n-1} \frac{\pd}{\pd T_n}.
\ee
In the case of KdV hierarchy,
one first sets $T_{2n} = 0$ and then applies instead the operator
$$\sum_{n\geq 1} \xi^{-2n-2}\frac{\pd}{\pd T_{2n-1}},$$
since the tau-function is independent of $T_{2n}$'s.
Then following \cite{Bertola-Dubrovin-Yang} one can derive their Theorem 1.7,
which is a formula for bosonic $n$-point function.
Let
\be
\Theta(\xi;\bT) = \half \begin{pmatrix}
-(ww^*)_x & - 2 ww^* \\ 2 w_x w^*_x & (ww^*)_x
\end{pmatrix},
\ee
then
\be
\sum_{j_1 \geq 1} \frac{\pd F}{\pd T_{j_1}}  \xi^{-2j_1-2} \\
= \half \Tr \Theta(\xi;\bT) - \sum_{j \geq 0} T_{2j+1} z^{2j+1},
\ee
and for $n \geq 2$,
\be
\begin{split}
& \sum_{j_1,\dots, j_n \geq 1}
\frac{\pd F}{\pd T_{j_1} \cdots \pd T_{j_n} }
  z_1^{-2j_1-2}\cdots z_n^{-2j_n-2} \\
= & -\frac{1}{n} \sum_{\sigma \in S_n}  \frac{\Tr(\Theta(z_{\sigma(1)}) \cdots \Theta(z_{\sigma(n)})}
{\prod_{i=1}^n (z_{\sigma(i)}^2 - z_{\sigma(i+1)}^2)}
- \delta_{n,2} \frac{z_1^2+z_2^2}{(z_1^2-z_2^2)^2}.
\end{split}
\ee
Write $R = R(\xi; \bT) = w(\bT; \xi) w^*(\bT; \xi)$,
then one can also write $\Theta$ as follows:
\be
\Theta(\xi; \bT)
= \half \begin{pmatrix}
- R_x & -2 R \\ R_{xx} - 2(\xi^2-2u) R & R_x
\end{pmatrix}.
\ee
In fact $R$ is a generating series of the Gelfand-Dickey polynomials.
See \cite{Zhou-Absolute} for a different proof of the formula of Bertola-Dubrovin-Yang \cite{Bertola-Dubrovin-Yang}
from this point of view.

The Bertola-Dubrovin-Yang formula suggests the possibility for general KP hierarchy
to express the   bosonic $n$-point function  in terms of the wave-function and the dual wave-function
and their derivatives in $x$ and $\xi_1, \dots, \xi_n$.
Since $\nabla(\xi_i)$ commute with $\pd_x$ and $\pd_{\xi_j}$ ($j \neq i$),
it reduces to computing $\nabla(\xi) w(\bT; \eta)$
and $\nabla(\xi) w^*(\bT; \eta)$.
One has
\ben
\nabla(\xi) w(\bT; \eta)
& = & \sum_{m \geq 1} \xi^{-m-1} \frac{\pd}{\pd T_m}
\biggl( \exp (\sum_{n =1} T_n \eta^m ) \frac{ \tau(\bT-[1/\eta])}{\tau(\bT)} \biggr) \\
& = & \sum_{m \geq 1} \xi^{-m-1} \eta^n \cdot \exp (\sum_{n =1} T_n \eta^n ) \frac{ \tau(\bT-[1/\eta])}{\tau(\bT)} \\
& + & \exp (\sum_{n =1} T_n \eta^n ) \frac{\nabla(\xi) \tau(\bT-[1/\eta])}{\tau(\bT)} \\
& - & \exp (\sum_{n =1} T_n \eta^n ) \frac{ \tau(\bT-[1/\eta])}{\tau(\bT)^2} \cdot \nabla(\xi) \tau(\bT) \\
& = & \biggl(\frac{1}{\xi -\eta} -  \nabla(\xi) F(\bT) \biggr) \cdot w(\bT; \eta) \\
& + & \exp (\sum_{n =1} T_n \eta^n ) \frac{\nabla(\xi) \tau(\bT-[1/\eta])}{\tau(\bT)},
\een
and similarly,
\ben
\nabla(\xi) w^*(\bT; \eta)
& = & \sum_{m \geq 1} \xi^{-m-1} \frac{\pd}{\pd T_m}
\biggl( \exp (-\sum_{n =1} T_n \eta^m ) \frac{ \tau(\bT+[1/\eta])}{\tau(\bT)} \biggr) \\
& = & \biggl(- \frac{1}{\xi -\eta} -  \nabla(\xi) F(\bT) \biggr) \cdot w^*(\bT; \eta) \\
& + & \exp (-\sum_{n =1} T_n \eta^n ) \frac{\nabla(\xi) \tau(\bT+[1/\eta])}{\tau(\bT)}.
\een
Now note
\ben
&&  \frac{\nabla(\xi) \tau(\bT \pm [1/\eta])}{\tau(\bT)}
= \lim_{\tilde{\xi} \to \xi} \pd_\xi\frac{\tau(\bT-[1/\xi]+[1/\tilde{\xi}]\pm [1/\eta])}{\tau(\bT)}.
\een
In \cite{A-S-V},
the following generalization of   \eqref{eqn:Wave-Wronskian3} has been proved:
\be \label{eqn:Wave-Wronskian3n}
\begin{split}
& \{w(\bT;\xi_1),\dots,  w(\bT; \xi_k)\} \\
= & \prod_{1 \leq i< j \leq k} (\xi_j - \xi_i) \cdot \exp \sum_{n \geq 1} T_n \sum_{i=1}^k \xi_i^n   \cdot
\frac{\tau(\bT- \sum_{i=1}^k [1/\xi_i])}{\tau(\bT)},
\end{split}
\ee
We conjecture the following two identities should hold:
\be \label{eqn:Wave-Wronskian4n}
\begin{split}
& \{w^*(\bT;\eta_1), \dots, w^*(\bT; \eta_l)\} \\
= & \prod_{1 \leq i< j \leq l}  (\eta_i - \eta_j) \cdot \exp \sum_{n \geq 1} (-T_n\sum_{i=1}^l \eta^n_i)   \cdot
\frac{\tau(\bT+\sum_{i=1}^l [1/\eta_i])}{\tau(\bT)},
\end{split}
\ee

\be \label{eqn:Wave-Wronskian2n}
\begin{split}
& \{w(\bT;\xi_1),\dots,  w(\bT; \xi_k), w^*(\bT;\eta_1), \dots, w^*(\bT; \eta_l) \} \\
= & \prod_{1 \leq i< j \leq k} (\xi_j - \xi_i) \cdot
\prod_{\substack{1 \leq i \leq k \\ 1 \leq j \leq l}} (-\xi_i - \eta_j)
\cdot \prod_{1 \leq i< j \leq l}  (\eta_i - \eta_j) \\
& \cdot
\exp \sum_{n \geq 1} T_n (\sum_{i=1}^k \xi_i^n -\sum_{i=1}^l \eta^n_i) \\
&  \cdot
\frac{\tau(\bT- \sum_{i=1}^k [1/\xi_i] + \sum_{i=1}^l [1/\eta_i])}{\tau(\bT)}.
\end{split}
\ee
In particular,
we conjecture the following identities to hold:
\ben
&& \frac{\tau(\bT-[1/\xi_1]-[1/\xi_2]+[1/\eta_1])}{\tau(\bT)} \\
& = & \exp \sum_{n \geq 1} T_n (-\xi_1^n-\xi_2^n+\eta_1^n) \cdot \frac{1}{\xi_2-\xi_1}
\cdot \frac{1}{(-\xi_1-\eta_1)(-\xi_2-\eta_1)} \\
&& \cdot \{ w(\bT; \xi_1), w(\bT; \xi_2), w^*(\bT; \eta_1)\},
\een

\ben
&& \frac{\tau(\bT-[1/\xi_1]+[1/\eta_1]+[1/\eta_2])}{\tau(\bT)} \\
& = & \exp \sum_{n \geq 1} T_n (-\xi_1^n+\eta_1^n+\eta_2^n)
\cdot \frac{1}{(-\xi_1-\eta_1)(-\xi_1-\eta_2)} \cdot \frac{1}{\eta_1-\eta_2}\\
&& \cdot \{ w(\bT; \xi_1), w^*(\bT; \eta_1), w^*(\bT; \eta_2)\}.
\een
Assuming these identities,
one can get
\ben
&&  \frac{\nabla(\xi) \tau(\bT + [1/\eta])}{\tau(\bT)}
= \lim_{\tilde{\xi} \to \xi} \pd_\xi\frac{\tau(\bT-[1/\xi]+[1/\tilde{\xi}] + [1/\eta])}{\tau(\bT)} \\
& = & \exp( \sum_{n \geq 1} T_n \eta^n) \biggl(
\frac{1}{2\xi(\xi^2-\eta^2)}
 (-\sum_{n\geq 1} n T_n \xi^{n-1} + \pd_\xi )  \\
& - & \frac{3\xi+\eta}{4\xi^2(\xi+\eta)^2(\xi-\eta)}
\biggr)
\{ w(\bT; \xi), w^*(\bT; \xi), w^*(\bT; \eta)\},
\een
\ben
&&  \frac{\nabla(\xi) \tau(\bT - [1/\eta])}{\tau(\bT)}
= \lim_{\tilde{\xi} \to \xi} \pd_\xi\frac{\tau(\bT-[1/\xi]+[1/\tilde{\xi}] - [1/\eta])}{\tau(\bT)} \\
& = & - \exp( -\sum_{n \geq 1} T_n \eta^n) \biggl(
\frac{1}{2\xi(\xi^2-\eta^2)}
 (-\sum_{n\geq 1} n T_n \xi^{n-1} + \pd_\xi )  \\
& - & \frac{3\xi - \eta}{4\xi^2(\xi-\eta)^2(\xi+\eta)}
\biggr)
\{ w(\bT; \xi), w(\bT; \eta), w^*(\bT; \xi)\},
\een
With these identities,
one can express $\nabla(\xi) w(\bT; \eta)$
and $\nabla(\xi) w^*(\bT; \eta)$ in terms
of the Wronskians  of the wave-function and the dual wave-functions
and their derivatives in $\xi$.

\section{Bosonic N-Point Functions in Terms of Normalized Bases and Admissible Bases}

In last Section we have discussed the possibility of expressing
the bosonic $n$-point functions in terms of the fermionic 1-point functions for general KP hierarchy.
In this Section we will derive some formulas for bosonic $n$-point functions in terms of
a fermionic two-point function,
and the latter in term of
an admissible basis or a normalized basis.

\subsection{Bosonic $n$-point functions and connected bosonic $n$-point functions}

From the   point of view of bosonic conformal field theory,
one can consider the bosonic $n$-point functions:
\be
\corr{\alpha(\xi_1) \cdots \alpha(\xi_n)}_U:
= \lvac  e^{\sum\limits_{n \geq 1} T_n \alpha_n}
\alpha(\xi_1) \cdots \alpha(\xi_n)
|U\rangle/ \tau_U.
\ee
Since we have
\ben
\alpha(\xi_i) \alpha(\xi_j)
& = & \alpha(\xi_i)_+\alpha(\xi_j)_+ + \alpha(\xi_i)_-\alpha(\xi_j)_+ \\
& + &  \alpha(\xi_j)_-\alpha(\xi_i)_+
+ \alpha(\xi_i)_-\alpha(\xi_j)_- + i_{\xi_i, \xi_j} \frac{1}{(\xi_i-\xi_j)^2} \\
& = &  \alpha(\xi_j) \alpha(\xi_i),
\een
the $n$-point function
$\corr{\alpha(\xi_1) \cdots \alpha(\xi_n)}_U$ is symmetric with respect to
positions $\xi_1, \dots, \xi_n$,
so we will write it as $f(\xi_1, \dots, \xi_n)$.
In the above we have used the following notation:
\be
i_{x, y} \frac{1}{(x-y)^n} = \sum_{k \geq 0} \binom{-n}{k} x^{-n-k} y^k.
\ee
One can also define the connected $n$-point functions $\corr{\alpha(\xi_1) \cdots \alpha(\xi_n)}_U^c$,
and write it as $f^c(\xi_1, \dots, \xi_n)$.
These two kinds of $n$-point functions are related to each other by Mobi\"us inversion:
\bea
&& f(\xi_1, \dots,\xi_n)
= \sum_{I_1 \coprod \cdots \coprod I_k = [n]} f^c(\xi_{I_1}) \cdots f^c(\xi_{I_k}), \label{eqn:Mobius1} \\
&& f^c(\xi_1, \dots,\xi_n)
= \sum_{I_1 \coprod \cdots \coprod I_k = [n]} (-1)^{k-1} (k-1)! f(\xi_{I_1}) \cdots f(\xi_{I_k}),  \label{eqn:Mobius2}
\eea
where $[n]$ is the index set $\{1, \dots, n\}$,
by $I_1 \coprod \cdots \coprod I_k = [n]$ we mean a partition of $[n]$ into nonempty subsets
$I_1, \dots, I_k$, and $\xi_{I_i}= (\xi_j)_{j \in I_i}$.

Write $\tau_U = e^{F_U}$.
Note
\be
\alpha(\xi) = \sum_{n \geq 1} \xi^{-n-1} \frac{\pd}{\pd T_n} + \sum_{n \geq 1} n T_n \xi^{n-1}.
\ee
Then the bosonic one-point function is given by
\be \label{eqn:One-Point}
f^c(\xi_1) = \corr{\alpha(\xi)}_U = \sum_{n \geq 1} \frac{\pd F_U}{\pd T_n} \xi^{-n-1}
+ \sum_{n\geq 1}  n T_n \xi^{n-1} .
\ee
Similarly,
the bosonic two-point function is given by:
\ben
\corr{\alpha(\xi_1) \alpha(\xi_2)}_U
& = & \sum_{n_1, n_2 \geq 1}
\biggl(\frac{\pd^2 F_U}{\pd T_{n_1} \pd T_{n_2} } + \frac{\pd F_U}{\pd T_{n_1}}
\frac{\pd F_U}{\pd T_{n_2} } \biggr) \xi_1^{-n_1-1} \xi_2^{-n_2-1} \\
& + & \sum_{n=1}^\infty n \xi_1^{-n-1} \xi_2^{n-1}  \\
& + & \sum_{n_1 \geq 1} n_1 T_{n_1}\xi_1^{n_1-1} \cdot \sum_{n_2 \geq 1} \frac{\pd F_U}{\pd T_{n_2} }  \xi_2^{n_2-1} \\
& + & \sum_{n_1 \geq 1} \frac{\pd F_U}{\pd T_{n_1}} \xi_1^{n_1-1} \cdot \sum_{n_2 \geq 1}  n_2T _{n_2}  \xi_2^{n_2-1} \\
& + & \sum_{n_1 \geq 1} n_1 T_{n_1}\xi_1^{n_1-1} \cdot \sum_{n_2 \geq 1}  n_2T _{n_2}  \xi_2^{n_2-1}.
\een
It can be
rewritten as
\be
\begin{split}
\corr{\alpha(\xi_1) \alpha(\xi_2)}_U
= & \frac{1}{(\xi_1-\xi_2)^2} + \sum_{n_1, n_2 \geq 1}
\frac{\pd^2 F_U}{\pd T_{n_1} \pd T_{n_2} }
  \xi_1^{-n_1-1} \xi_2^{-n_2-1} \\
+ & \corr{ \alpha(\xi_1) }_U \cdot \corr{ \alpha(\xi_2) }_U.
\end{split}
\ee
From this we get
\be \label{eqn:Two-Point}
f^c(\xi_1, \xi_2) = \frac{1}{(\xi_1-\xi_2)^2} + \sum_{n_1, n_2 \geq 1}
\frac{\pd^2 F_U}{\pd T_{n_1} \pd T_{n_2} }
  \xi_1^{-n_1-1} \xi_2^{-n_2-1}.
\ee

\begin{prop}
For $m \geq 3$,
\be \label{eqn:m-Point}
f^c(\xi_1, \dots, \xi_m) =  \sum_{n_1,\dots, n_m \geq 1}
\frac{\pd^m F_U}{\pd T_{n_1} \cdots \pd T_{n_m} }
  \xi_1^{-n_1-1}\cdots \xi_m^{-n_m-1}.
\ee
\end{prop}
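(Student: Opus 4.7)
The plan is to compute $f(\xi_1, \dots, \xi_m) = \corr{\alpha(\xi_1) \cdots \alpha(\xi_m)}_U$ by conjugating the field $\alpha(\xi)$ through $\tau_U$ and then extracting the connected part via the M\"obius inversion \eqref{eqn:Mobius2}. Write $\alpha(\xi) = P(\xi) + Q(\xi)$ with $P(\xi) := \sum_{n \geq 1} \xi^{-n-1} \frac{\pd}{\pd T_n}$ and $Q(\xi) := \sum_{n \geq 1} n T_n \xi^{n-1}$, so $P$ is a derivation and $Q$ is multiplication. A one-line Leibniz computation using $P(\xi)\tau_U = (P(\xi) F_U)\, \tau_U$ gives the operator identity
\be
\tau_U^{-1} \alpha(\xi) \tau_U = P(\xi) + f^c(\xi),
\ee
where $f^c(\xi) = P(\xi) F_U + Q(\xi)$ is the one-point function \eqref{eqn:One-Point}. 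Iterating this identity then yields
\be
f(\xi_1, \dots, \xi_m) = \prod_{i=1}^m \bigl( P(\xi_i) + f^c(\xi_i) \bigr) \cdot 1.
\ee

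Next I would prove by induction on $m$ that the right-hand side expands as a sum over set partitions
\be
f(\xi_1, \dots, \xi_m) = \sum_{\pi} \prod_{B \in \pi} g(\xi_B),
\ee
with block functions $g(\xi_{\{i\}}) = f^c(\xi_i)$, $g(\xi_{\{i,j\}}) = P(\xi_i) P(\xi_j) F_U + (\xi_i - \xi_j)^{-2}$, and $g(\xi_B) = \bigl( \prod_{i \in B} P(\xi_i) \bigr) F_U$ for $|B| \geq 3$. The inductive step amounts to the identity $P(\xi_i)\, g(\xi_B) = g(\xi_{B \cup \{i\}})$ for $i \notin B$, which is verified by cases: when $|B|=1$ the derivation $P(\xi_i)$ acts on the source term $Q(\xi_j)$ inside $f^c(\xi_j)$ to produce the pole $(\xi_i-\xi_j)^{-2}$, while when $|B| \geq 2$ the source term is absent from $g(\xi_B)$ and the identity is just commutativity of partial derivatives. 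Recursively, applying $\bigl(P(\xi_1) + f^c(\xi_1)\bigr)$ either starts a new singleton block or enlarges an existing block, exactly reproducing the enumeration of all partitions of $[m]$.

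Comparing the resulting decomposition with \eqref{eqn:Mobius1} and invoking uniqueness of cumulants forces $f^c(\xi_1, \dots, \xi_m) = g(\xi_{[m]})$. For $m \geq 3$ this is the pure derivative $\prod_{i=1}^m P(\xi_i) \cdot F_U$, which is exactly the right-hand side of \eqref{eqn:m-Point}.

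The only real obstacle is the bookkeeping in the induction, and the clean resolution rests on a single observation: the OPE pole $(\xi_i - \xi_j)^{-2}$ is $\bT$-independent, so once it has appeared in the two-point block $g(\xi_{\{i,j\}})$ it is annihilated by any subsequent $P(\xi_k)$ and cannot propagate to higher-point cumulants. Likewise, the source polynomial $Q(\xi)$ in $f^c(\xi)$ only survives as a standalone singleton contribution, never within a block of size $\geq 2$. This is precisely why the $m \geq 3$ formula carries no OPE correction and no residual source term.
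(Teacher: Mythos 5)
Your proof is correct and is essentially the paper's own argument: both conjugate $\alpha(\xi)$ through $e^{F_U}$ to obtain the derivation part plus the one-point function, then induct on the number of insertions, with the dichotomy ``start a new singleton block versus apply the derivation to an existing block'' reproducing the sum over set partitions and hence, via the M\"obius relations, identifying the cumulants. Your explicit check that the $\bT$-independent pole $(\xi_i-\xi_j)^{-2}$ and the source term $Q(\xi)$ cannot propagate into blocks of size $\geq 3$ is exactly the observation the paper's induction relies on.
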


\begin{proof}
This can be proved by induction.
Denote by   $\tilde{f}^c(\xi_1, \dots, \xi_m)$
the right-hand side of \eqref{eqn:One-Point} or \eqref{eqn:Two-Point} or \eqref{eqn:m-Point},
then one can see that
\be
e^{-F_U} \alpha(\xi_{m+1})  e^{F_U} = \tilde{f}^c(\xi_1),
\ee
and
\be
\alpha(\xi_{m+1})_+ \tilde{f}^c(\xi_1, \dots, \xi_m)
= \tilde{f}^c(\xi_1, \dots, \xi_{m+1}).
\ee
Suppose that one has
$ f^c(\xi_1, \dots, \xi_k) = \tilde{f}(\xi_1, \dots, \xi_k)$ for $k \leq m$,
then one has:
\ben
f(\xi_1, \dots, \xi_{m+1}) & = & e^{-F_U} \alpha(\xi_{m+1}) (f(\xi_2, \dots, \xi_m) e^{F_U}) \\
& = & e^{-F_U} \alpha(\xi_{m+1}) ( e^{F_U} \sum_{I_1 \coprod \cdots \coprod I_k = [n]}
\tilde{f}^c(\xi_{I_1}) \cdots \tilde{f}^c(\xi_{I_k}) ) \\
& = & e^{-F_U} \alpha(\xi_{m+1})  e^{F_U} \cdot \sum_{I_1 \coprod \cdots \coprod I_k = [n]}
\tilde{f}^c(\xi_{I_1}) \cdots \tilde{f}^c(\xi_{I_k}) \\
& + & \alpha(\xi_{m+1})_+  \sum_{I_1 \coprod \cdots \coprod I_k = [n]} \tilde{f}^c(\xi_{I_1}) \cdots \tilde{f}^c(\xi_{I_k})  \\
& = & \tilde{f}^c(\xi_{m+1}) \cdot \sum_{I_1 \coprod \cdots \coprod I_k = [n]} \tilde{f}^c(\xi_{I_1}) \cdots \tilde{f}^c(\xi_{I_k}) \\
& + & \sum_{I_1 \coprod \cdots \coprod I_k = [n]}
\sum_{j=1}^k \tilde{f}^c(\xi_{I_1}) \cdots \tilde{f}^c(\xi_{m+1}, \xi_{I_j}) \cdots \tilde{f}^c(\xi_{I_k}).
\een
By induction hypothesis and \eqref{eqn:Mobius1},
$$
 f^c(\xi_1, \dots, \xi_{m+1}) = \tilde{f}(\xi_1, \dots, \xi_{m+1}).
$$
This finishes the proof.
\end{proof}

\subsection{Higher Fay identities and  bosonic $n$-point function of KP hierarchy
in terms of fermionic two-point functions}

Let us recall Okounkov's approach \cite{Okounkov} to higher Fay identities developed by
Adler-Shioda-van Moerbeke \cite{A-S-V}.
Consider
\be
\begin{split}
& \corr{\psi (\xi_1)   \cdots \psi(\xi_m) \psi^*(\eta_1)  \cdots \psi^*(\eta_n)}_U  \\
= &  \lvac
\psi (\xi_1)   \cdots \psi(\xi_m) \psi^*(\eta_1)  \cdots \psi^*(\eta_n) |U\rangle.
\end{split}
\ee
This can be computed in several different ways.
First one can combine
\be
\psi(\xi) \psi^*(\eta)
= \frac{1}{\xi-\eta} \Gamma_-(\{\xi\} - \{\eta\}) \Gamma_+(\{\eta^{-1}\} - \{\xi^{-1}\}),
\ee
with
\be
\Gamma_+(\{\bT\}) \Gamma_-(\{\bS\}) = e^{\sum_{n \geq 1} nT_n S_n}
\Gamma_-(\{\bS\}) \Gamma_+(\{\bT\})
\ee
to get
\be
\begin{split}
& \corr{\psi (\xi_1)   \cdots \psi(\xi_m) \psi^*(\eta_n) \cdots \psi^*(\eta_1) }_U  \\
= &  \frac{\Delta_n(\xi) \Delta_n(\eta)}{\prod_{1 \leq i, j \leq  n} (\xi_i - \eta_j)}
\tau_U ( \sum_{i=1}^n (\{\eta_i^{-1} \} - \{\xi_i^{-1}\}) ) .
\end{split}
\ee
In particular,
when $n=1$,
\be \label{eqn:Two-Point-Tau}
\corr{\psi (\xi) \psi^*(\eta) }_U  =  \frac{1}{\xi - \eta} \tau_U( \{\eta^{-1} \} - \{\xi^{-1}\} ).
\ee
On the other hand,
by Wick's Theorem,
\be \label{eqn:Wick}
\corr{\psi (\xi_1) \cdots \psi(\xi_n) \psi^*(\eta_n) \cdots \psi^*(\eta_1)  }_U
= \det(\corr{\psi (\xi_i) \psi^*(\eta_j)}_U )_{1 \leq i, j \leq n}.
\ee
Combining the preceding three identities:
\be
\begin{split}
& \det \biggl( \frac{1}{\xi_i - \eta_j} \tau_U( \{\eta_j^{-1} \} - \{\xi_i^{-1}\} ) \biggr) \\
= & \frac{\Delta_n(\xi) \Delta_n(\eta)}{\prod_{1 \leq i, j \leq  n} (\xi_i - \eta_j)}
\tau_U ( \sum_{i=1}^n (\{\eta_i^{-1} \} - \{\xi_i^{-1}\}).
\end{split}
\ee
This is a special case of the following higher Fay identities developed by Adler-Shioda-van Moerbeke
\cite[(45)]{A-S-V}:
\be
\begin{split}
& \det \biggl( \frac{1}{\xi_i - \eta_j} \frac{\tau_U(\bT+  \{\eta_j^{-1} \} - \{\xi_i^{-1}\} )}{\tau_U(\bT)} \biggr) \\
= & \frac{\Delta_n(\xi) \Delta_n(\eta)}{\prod_{1 \leq i, j \leq  n} (\xi_i - \eta_j)}
\frac{ \tau_U ( \sum_{i=1}^n (\{\eta_i^{-1} \} - \{\xi_i^{-1}\}) }{\tau_U(\bT)}.
\end{split}
\ee

\subsection{Fermionic two-point function in terms of affine coordinates}

Recall $U = e^A \vac$,
where
\be
A = \sum_{m, n \geq 0} a_{n,m} \psi_{-m-1/2} \psi^*_{-n-1/2}.
\ee
It is easy to see that
\be
\corr{\psi (\xi) \psi^*(\eta)}_U
= i_{\xi, \eta} \frac{1}{\xi-\eta} + \sum_{m,n\geq 0} a_{m,n} \xi^{-m-1} \eta^{-n-1}.
\ee
In particular,
$\corr{\psi (\xi) \psi^*(\eta)}_U$ contains the same information as the operator $A$.
Write
\be
A(\xi, \eta) =  \sum_{m,n\geq 0} a_{m,n} \xi^{-m-1} \eta^{-n-1}.
\ee
Then we have
\be
\corr{\psi (\xi) \psi^*(\eta)}_U = i_{\xi, \eta} \frac{1}{\xi-\eta} + A(\xi, \eta).
\ee

\subsection{Bosonic $n$-point function
in terms of fermionic two-point functions}
Use the OPE:
\be \label{eqn:Fermionic-OPE}
\psi (\xi) \psi^*(\eta)
= i_{\xi, \eta} \frac{1}{\xi-\eta} + :\psi(\xi)\psi^*(\eta):
\ee
one gets:
\be \label{eqn:A-Expectation}
\corr{:\psi (\xi) \psi^*(\eta):}_U =  A(\xi, \eta).
\ee
After taking $\lim_{\eta \to \xi}$:
\be
\corr{\alpha (\xi)}_U =  A(\xi, \xi).
\ee
Let us now compute $\corr{\alpha(\xi_1) \alpha(\xi_2) }_U$ in the same fashion.
On the one hand we have:
\ben
&& \corr{\psi (\xi_1) \psi^*(\eta_1) \psi(\xi_2) \psi^*(\eta_2)}_U \\
& = & - \corr{\psi (\xi_1)  \psi(\xi_2) \psi^*(\eta_1) \psi^*(\eta_2)}_U  \\
& + &  \biggl(i_{\xi_2, \eta_1} \frac{1}{\xi_2-\eta_1}
- i_{\eta_1, \xi_2} \frac{1}{\xi_2- \eta_1} \biggr) \cdot  \corr{\psi (\xi_1) \psi^*(\eta_2)}_U \\
& = & \begin{vmatrix}
\corr{\psi(\xi_1) \psi^*(\eta_1)}_U & \corr{\psi (\xi_1) \psi^*(\eta_2)}_U \\
\corr{\psi(\xi_2) \psi^*(\eta_1)}_U & \corr{\psi(\xi_2) \psi^*(\eta_2)}_U
\end{vmatrix} \\
& + & \biggl(i_{\xi_2, \eta_1} \frac{1}{\xi_2-\eta_1}
- i_{\eta_1, \xi_2} \frac{1}{\xi_2- \eta_1} \biggr) \cdot  (i_{\xi_1, \eta_2}  \frac{1}{\xi_1-\eta_2} + A(\xi_1, \eta_2) ) \\
& = & \begin{vmatrix}
i_{\xi_1, \eta_1} \frac{1}{\xi_1 - \eta_1} + A(\xi_1, \eta_1) &
i_{\xi_1, \eta_2} \frac{1}{\xi_1 - \eta_2} + A(\xi_1, \eta_2) \\
i_{\xi_2, \eta_1} \frac{1}{\xi_2 - \eta_1} + A(\xi_2, \eta_1) &
i_{\xi_2, \eta_2} \frac{1}{\xi_2 - \eta_2} + A(\xi_2, \eta_2)
\end{vmatrix} \\
& + & \biggl(i_{\xi_2, \eta_1} \frac{1}{\xi_2-\eta_1}
- i_{\eta_1, \xi_2} \frac{1}{\xi_2- \eta_1} \biggr)  \cdot
(i_{\xi_1, \eta_2} \frac{1}{\xi_1-\eta_2} + A(\xi_1, \eta_2) ) \\
& = & \begin{vmatrix}
i_{\xi_1, \eta_1} \frac{1}{\xi_1 - \eta_1} + A(\xi_1, \eta_1) &
i_{\xi_1, \eta_2} \frac{1}{\xi_1 - \eta_2} + A(\xi_1, \eta_2) \\
i_{\eta_1, \xi_2} \frac{1}{\xi_2 - \eta_1} + A(\xi_2, \eta_1) &
i_{\xi_2, \eta_2} \frac{1}{\xi_2 - \eta_2} + A(\xi_2, \eta_2)
\end{vmatrix}.
\een
In the last step we use the following identity for determinants of $2\times 2$-matrices:
\be \label{eqn:Sum-Determinant}
\begin{vmatrix}
c_{11} & c_{12} \\c_{21} & c_{22}
\end{vmatrix} + \begin{vmatrix}
0 & c_{12} \\ b_{21} & c_{22}
\end{vmatrix}
= \begin{vmatrix}
c_{11} & c_{12} \\ b_{12} + c_{21} & c_{22}
\end{vmatrix}.
\ee
On the other hand,
\ben
&& \corr{\psi (\xi_1) \psi^*(\eta_1) \psi(\xi_2) \psi^*(\eta_2)}_U \\
& = & \corr{\biggl(i_{\xi_1, \eta_1} \frac{1}{\xi_1-\eta_1} + :\psi(\xi_1) \psi^*(\eta_1):\biggr)
\biggl(i_{\xi_2, \eta_2} \frac{1}{\xi_2-\eta_2} + :\psi(\xi_2)\psi^*(\eta_2):\biggr) }_U \\
& = & i_{\xi_1, \eta_1}\frac{1}{\xi_1-\eta_1}
\cdot i_{\xi_2, \eta_2} \frac{1}{\xi_2- \eta_2}
+ i_{\xi_1, \eta_1}\frac{1}{\xi_1-\eta_1}\cdot A(\xi_2, \eta_2) \\
& + & i_{\xi_2, \eta_2}\frac{1}{\xi_2-\eta_2}\cdot A(\xi_1, \eta_1)
+ \corr{ :\psi(\xi_1) \psi^*(\eta_1):  :\psi(\xi_2)\psi^*(\eta_2):  }_U .
\een
Combining the two calculations,
one gets:
\ben
&& \corr{ :\psi(\xi_1) \psi^*(\eta_1):  :\psi(\xi_2)\psi^*(\eta_2):  }_U \\
& = & \begin{vmatrix}
 A(\xi_1, \eta_1) &
i_{\xi_1, \eta_2} \frac{1}{\xi_1 - \eta_2} + A(\xi_1, \eta_2) \\
i_{\eta_1, \xi_2} \frac{1}{\xi_2 - \eta_1} + A(\xi_2, \eta_1) &
 A(\xi_2, \eta_2)
\end{vmatrix}.
\een
Now we take $\lim_{\eta_1 \to \xi_1} \lim_{\eta_2 \to \xi_2}$:
\ben
\corr{\alpha(\xi_1) \alpha(\xi_2) }_U
= \begin{vmatrix}
A(\xi_1, \xi_1) & i_{\xi_1, \xi_2} \frac{1}{\xi_1 - \xi_2} + A(\xi_1, \xi_2) \\
i_{\xi_1, \xi_2} \frac{1}{\xi_2 - \xi_1} + A(\xi_2, \xi_1) & A(\xi_2, \xi_2)
\end{vmatrix}.
\een

Now we generalize the computation for $\corr{\psi (\xi_1) \psi^*(\eta_1) \psi(\xi_2) \psi^*(\eta_2)}_U$
to the case of  $\corr{\psi (\xi_1) \psi^*(\eta_1) \cdots \psi(\xi_n) \psi^*(\eta_n)}_U$ for $n > 2$.
One first moves all $\psi^*(\eta_j)$'s to the right of all $\psi(\xi_i)$'s,
and then apply \eqref{eqn:Wick}.
By an analogue of  \eqref{eqn:Sum-Determinant},
one gets:
\be
\corr{\psi (\xi_1) \psi^*(\eta_1) \cdots \psi(\xi_n) \psi^*(\eta_n)}_U
= \det ( C_{ij} )_{1 \leq i, j \leq n},
\ee
where
\be
C_{ij} = \begin{cases}
i_{\xi_i, \eta_j} \frac{1}{\xi_i-\eta_j} + A(\xi_i, \eta_j),  & i \leq j, \\
i_{\eta_j, \xi_i} \frac{1}{\xi_i-\eta_j} + A(\xi_i, \eta_j),  & i > j.
\end{cases}
\ee
On the other hand,
\ben
&& \corr{\psi (\xi_1) \psi^*(\eta_1) \cdots \psi(\xi_n) \psi^*(\eta_n)}_U \\
& = & \corr{( i_{\xi_1, \eta_1}\frac{1}{\xi_1-\eta_1} + :\psi(\xi_1)\psi^*(\eta_1):)  \cdots
(i_{\xi_n, \eta_n}\frac{1}{\xi_n-\eta_n} + :\psi(\xi_n)\psi^*(\eta_n):)}_U.
\een
So we get an identity of the form
\ben
&& \corr{(i_{\xi_1, \eta_1}\frac{1}{\xi_1-\eta_1} + :\psi(\xi_1)\psi^*(\eta_1):)  \cdots
(i_{\xi_n, \eta_n} \frac{1}{\xi_n-\eta_n} + :\psi(\xi_n)\psi^*(\eta_n):)}_U \\
& = &
\det(C_{ij} )_{1 \leq i, j \leq n}.
\een
It is clear that the left-hand side can be expanded into $2^n$ terms,
among which only one term does not contain a factor of the form $i_{\xi_i, \eta_i}\frac{1}{\xi_i - \eta_i}$,
that is
$\corr{:\psi(\xi_1)\psi^*(\eta_1):  \cdots :\psi(\xi_n)\psi^*(\eta_n):}_U$.
Similarly,
the right-hand can also be expanded into $2^n$ term,
this can be done by decomposing each row vector of the matrix  into a sum of two row vectors,
one of them is given by $(\delta_{i,j}i_{\xi_i, \eta_i} \frac{1}{\xi_i-\eta_i})_{j=1, \dots,n}$.
And so each of  the $2^n$ terms is a determinant,
and there is only one term that does not contain a factor of the form $i_{\xi_i, \eta_i} \frac{1}{\xi_i-\eta_i}$,
that is given by
$\det (\hat{C}_{i,j}(\xi_i, \eta_j))_{1 \leq i, j \leq n}$,
where
\be
\hat{C}_{ij} = \begin{cases}
A(\xi_i, \xi_i), & i =j, \\
C_{i,j}, & i \neq j.
\end{cases}
\ee
Therefore,
\be
\corr{:\psi(\xi_1)\psi^*(\eta_1):  \cdots  :\psi(\xi_n)\psi^*(\eta_n):}_U
=  \det(\hat{C}_{i,j} )_{1 \leq i, j \leq n}.
\ee
After taking $\lim_{\eta_i \to \xi_i}$,
we obtain:
\be
  \corr{ \alpha(\xi_1)  \cdots  \alpha(\xi_n) }_U
= \det( \hat{A}(\xi_i,\xi_j) )_{1 \leq i, j \leq n},
\ee
where
\be
\hat{A}(\xi_i, \xi_j) = \begin{cases}
i_{\xi_i, \xi_j} \frac{1}{\xi_i-\xi_j} + A(\xi_i, \xi_j),  & i < j, \\
A(\xi_i, \xi_i),  & i =j, \\
i_{\xi_j, \xi_i} \frac{1}{\xi_i-\xi_j} + A(\xi_i, \xi_j),  & i > j.
\end{cases}
\ee

\subsection{Connected bosonic $n$-point function
in terms of fermionic two-point functions}

We now prove a combinatorial result.

\begin{prop}
Suppose that we have two sequences of functions $\{\varphi(\xi_1, \dots, \xi_n)\}_{n \geq 1}$
and $\{\varphi^c(\xi_1, \dots, \xi_n)\}_{n \geq 1}$,
related to each other by Mobi\"us inversion:
\bea
&& \varphi(\xi_1, \dots,\xi_n)
= \sum_{I_1 \coprod \cdots \coprod I_k = [n]} \varphi^c(\xi_{I_1}) \cdots \varphi^c(\xi_{I_k}),  \label{eqn:Phi-Mobius}\\
&& \varphi^c(\xi_1, \dots,\xi_n)
= \sum_{I_1 \coprod \cdots \coprod I_k = [n]} (-1)^{k-1} (k-1)! \varphi(\xi_{I_1}) \cdots \varphi(\xi_{I_k}),
\eea
where the summations are taken over partitions of $[n]$ into nonempty subsets
$I_1, \dots, I_k$, and $\xi_{I_i}= (\xi_j)_{j \in I_i}$.
Suppose that that there are functions $B(\xi, \eta)$ such that
\be \label{eqn:Det-Phi-B}
\varphi(\xi_1, \dots,\xi_n)
= \det (B(\xi_i, \xi_j))_{1 \leq i, j \leq n}.
\ee
Then one has
\be \label{eqn:PhiC-Cycle}
\varphi^c(\xi_1, \dots,\xi_n)
=  (-1)^{n-1} \sum_{\text{$n$-cycles}}  \prod_{i=1}^n B(\xi_{\sigma(i)} \xi_{\sigma(i+1)}),
\ee
where the summations are taken over $n$-cycles $\sigma$, and $\sigma(n+1) = \sigma(1)$.
\end{prop}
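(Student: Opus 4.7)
The plan is to exploit the Leibniz expansion of the determinant together with the cycle decomposition of permutations, and then invoke the uniqueness of M\"obius inversion on the poset of set partitions.

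First, I would start from
\be
\varphi(\xi_1,\dots,\xi_n)= \det (B(\xi_i,\xi_j))_{1\le i,j\le n}
=\sum_{\sigma\in S_n}\sign(\sigma)\prod_{i=1}^n B(\xi_i,\xi_{\sigma(i)}).
\ee
Decompose each $\sigma$ into its disjoint cycles. A permutation is the same data as a set partition $I_1\sqcup\cdots\sqcup I_k=[n]$ together with a cyclic ordering on each block $I_j$. The sign factors over cycles: $\sign(\sigma)=(-1)^{n-k}=\prod_{j=1}^k(-1)^{|I_j|-1}$, and the product $\prod_i B(\xi_i,\xi_{\sigma(i)})$ factors over cycles as $\prod_{j=1}^k\prod_{i\in I_j} B(\xi_{\tau_j(i)},\xi_{\tau_j(i+1)})$, where $\tau_j$ denotes the cyclic ordering on $I_j$.

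Grouping terms by the underlying set partition yields
\be
\varphi(\xi_1,\dots,\xi_n)
=\sum_{I_1\sqcup\cdots\sqcup I_k=[n]}\prod_{j=1}^k \psi(\xi_{I_j}),
\ee
where for any nonempty $S\subseteq[n]$,
\be
\psi(\xi_S):=(-1)^{|S|-1}\sum_{\text{$|S|$-cycles on }S}\prod_{i}B(\xi_{\sigma(i)},\xi_{\sigma(i+1)}).
\ee
At this point I would invoke uniqueness of M\"obius inversion on the partition lattice: if a sequence $\{\varphi(\xi_{I})\}$ is written as a sum over set partitions of products of block-valued quantities, those block-valued quantities are forced by \eqref{eqn:Phi-Mobius} to coincide with $\varphi^c$. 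Comparing with \eqref{eqn:Phi-Mobius} and specializing to $S=[n]$ gives \eqref{eqn:PhiC-Cycle}.

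The only nontrivial point is the bookkeeping in the second step: verifying that the sign of a permutation really distributes as $\prod_j(-1)^{|I_j|-1}$ over its cycles and that every cyclic ordering on each block appears exactly once when one sums over $\sigma$. Once this combinatorial identification of $S_n$ with (set partition, cyclic ordering on each block) is cleanly done, the rest is automatic. An alternative, purely algebraic route would be to verify \eqref{eqn:PhiC-Cycle} by induction on $n$, computing $\varphi^c(\xi_1,\dots,\xi_n)$ directly from the second M\"obius formula applied to $\varphi=\det B$; but the partition-lattice argument is shorter and makes the appearance of $n$-cycles transparent.
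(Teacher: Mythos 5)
Your proof is correct and follows essentially the same route as the paper: expand the determinant over $S_n$, decompose permutations into disjoint cycles to group terms by the underlying set partition, and then identify the cycle-sums with $\varphi^c$. The paper carries out this last identification by an explicit induction on $n$, which is exactly the content of the ``uniqueness of M\"obius inversion on the partition lattice'' you invoke, so the two arguments coincide in substance.
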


\begin{proof}
This can be proved by induction.
When $n=1$,
it holds automatically.
Suppose that it holds for $1, \dots, n-1$.
By \eqref{eqn:Det-Phi-B},
\be
\varphi(\xi_1, \dots,\xi_n)
= \sum_{\sigma \in S_n} \sign(\sigma) \prod_{i=1}^n B(\xi_i, \xi_{\sigma(i)}).
\ee
Now we rewrite the right-hand side by writing a permutation $\sigma \in S_n$
as a product of the disjoint cycles.
If $\sigma$ is an $n$-cycle, then $\sign(\sigma) = (-1)^{n-1}$,
and so the contribution of all $n$-cycles is exactly the right-hand side of
\eqref{eqn:PhiC-Cycle},
which we denote by $\psi(\xi_1, \dots, \xi_n)$.
Therefore,
one gets:
\be
\varphi(\xi_1, \dots,\xi_n)
= \sum_{I_1 \coprod \cdots \coprod I_k = [n]} \psi(\xi_{I_1}) \cdots \psi(\xi_{I_k}).
\ee
The proof is completed by using \eqref{eqn:Phi-Mobius} and the induction hypothesis.
\end{proof}

As a corollary,

\begin{thm} \label{thm:Bosonic-N-Point}
For $n \geq 2$,
\be
\begin{split}
& \sum_{j_1,\dots, j_n \geq 1}
\frac{\pd^n F_U}{\pd T_{j_1} \cdots \pd T_{j_n} } \biggl|_{\bT =0}
  \xi_1^{-j_1-1}\cdots \xi_n^{-j_n-1} \\
= & (-1)^{n-1} \sum_{\text{$n$-cycles}}  \prod_{i=1}^n \hat{A}(\xi_{\sigma(i)}, \xi_{\sigma(i+1)}).
\end{split}
\ee
\end{thm}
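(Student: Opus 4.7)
The plan is to combine three ingredients already in hand and read off the conclusion as a routine combinatorial corollary.

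First, I would identify the left-hand side with a connected bosonic $n$-point function. By the formula \eqref{eqn:m-Point} for $f^{c}(\xi_1,\dots,\xi_n)$ with $n\geq 3$, the LHS of the theorem is precisely $f^{c}(\xi_1,\dots,\xi_n)\big|_{\bT=0}$. For $n=2$ the formula \eqref{eqn:Two-Point} has an extra additive term $1/(\xi_1-\xi_2)^2$ (understood as a formal expansion $i_{\xi_1,\xi_2}\frac{1}{(\xi_1-\xi_2)^2}$), and in the $\hat{A}$-convention of Theorem~\ref{thm:Bosonic-N-Point} this contribution is precisely the one produced by the singular part of $\hat A(\xi_1,\xi_2)\hat A(\xi_2,\xi_1)$, so the $n=2$ case should be checked separately at the end.

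Second, I would invoke the bosonic-to-fermionic computation from \S5, which gives the disconnected $n$-point function at $\bT=0$ as a determinant:
\begin{equation*}
f(\xi_1,\dots,\xi_n)\bigl|_{\bT=0} = \corr{\alpha(\xi_1)\cdots\alpha(\xi_n)}_U\bigl|_{\bT=0} = \det\bigl(\hat{A}(\xi_i,\xi_j)\bigr)_{1\leq i,j\leq n}.
\end{equation*}
This identity was established in Section~5 by expanding $\alpha(\xi_i) = :\psi(\xi_i)\psi^*(\xi_i):$, using Wick's theorem on $\corr{\psi(\xi_1)\psi^*(\eta_1)\cdots\psi(\xi_n)\psi^*(\eta_n)}_U$ together with the identity \eqref{eqn:A-Expectation}, and taking $\lim_{\eta_i\to\xi_i}$, so there is nothing further to do here.

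Third, I would apply the combinatorial proposition just proved, which asserts that if a sequence of functions satisfies a determinantal formula $\varphi(\xi_1,\dots,\xi_n) = \det(B(\xi_i,\xi_j))$, then its Möbius-inverted (``connected'') counterpart is
\begin{equation*}
\varphi^c(\xi_1,\dots,\xi_n) = (-1)^{n-1}\sum_{\text{$n$-cycles}}\prod_{i=1}^n B(\xi_{\sigma(i)},\xi_{\sigma(i+1)}).
\end{equation*}
Setting $B=\hat A$, $\varphi=f$, and $\varphi^c=f^c$, this yields the right-hand side of the theorem. Combining with the first step finishes the case $n\geq 3$ immediately.

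The only genuinely subtle point, and the one I would handle with care, is the $n=2$ discrepancy: the extra term $1/(\xi_1-\xi_2)^2$ present in $f^c(\xi_1,\xi_2)$ must be shown to exactly match the contribution produced by multiplying the $1/(\xi_i-\xi_j)$ principal parts of $\hat A(\xi_1,\xi_2)$ and $\hat A(\xi_2,\xi_1)$ (with signs, using $\frac{1}{\xi_1-\xi_2}\cdot\frac{1}{\xi_2-\xi_1} = -\frac{1}{(\xi_1-\xi_2)^2}$, so that $-\hat A(\xi_1,\xi_2)\hat A(\xi_2,\xi_1)$ absorbs the $i_{\xi_1,\xi_2}\frac{1}{(\xi_1-\xi_2)^2}$ term on the LHS of \eqref{eqn:Two-Point}). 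This is purely a bookkeeping check; no new ideas are required beyond being consistent about how $1/(\xi_i-\xi_j)$ is expanded in $\hat A$.
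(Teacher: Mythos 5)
Your strategy is exactly the one the paper intends: Theorem \ref{thm:Bosonic-N-Point} is presented there as an immediate corollary of three results you cite --- the identification \eqref{eqn:m-Point} of the connected correlators $f^c$ with the derivative generating functions, the determinantal evaluation $\corr{\alpha(\xi_1)\cdots\alpha(\xi_n)}_U=\det(\hat A(\xi_i,\xi_j))$ at $\bT=0$, and the cycle-sum proposition \eqref{eqn:PhiC-Cycle}. For $n\geq 3$ your argument is complete and coincides with the paper's.

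The $n=2$ step, however, does not close in the direction you assert. The cycle formula applied with $\varphi=f$, $B=\hat A$ yields $f^c(\xi_1,\xi_2)|_{\bT=0}=-\hat A(\xi_1,\xi_2)\hat A(\xi_2,\xi_1)$, i.e.\ it computes the \emph{connected correlator}, which by \eqref{eqn:Two-Point} equals $i_{\xi_1,\xi_2}\frac{1}{(\xi_1-\xi_2)^2}$ \emph{plus} the derivative generating function. With the paper's convention that both off-diagonal entries are expanded in the region $|\xi_1|>|\xi_2|$, the product of the two principal parts contributes exactly $+\,i_{\xi_1,\xi_2}\frac{1}{(\xi_1-\xi_2)^2}$ to $-\hat A(\xi_1,\xi_2)\hat A(\xi_2,\xi_1)$; this matches the extra term in $f^c$ and confirms $f^c=-\hat A\hat A$, but it therefore shows that the left-hand side of the theorem equals $-\hat A(\xi_1,\xi_2)\hat A(\xi_2,\xi_1)-i_{\xi_1,\xi_2}\frac{1}{(\xi_1-\xi_2)^2}$, not $-\hat A(\xi_1,\xi_2)\hat A(\xi_2,\xi_1)$. (Note the latter contains nonnegative powers of $\xi_2$, so it cannot by itself equal a series in $\xi_1^{-1},\xi_2^{-1}$; and expanding the two principal parts in opposite regions is not an option, since their product would then be divergent.) So your ``absorption'' runs the wrong way: the honest conclusion of your own bookkeeping is that the $n=2$ case must carry the correction term $-\frac{1}{(\xi_1-\xi_2)^2}$, in agreement with the $\delta_{n,2}$-term in the Bertola--Dubrovin--Yang formula recalled in \S\ref{sec:Higher-Fay}. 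You should either state the $n=2$ case with this correction or restrict the clean cycle-sum identity to $n\geq 3$.
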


\subsection{Fermionic two-point function in terms of admissible basis}

By combining \eqref{eqn:Two-Point-Tau} with \eqref{eqn:Fermionic-OPE} and \eqref{eqn:A-Expectation},
one gets:
\be
A(\xi, \eta) =  \frac{1}{\xi - \eta} \tau_U( \{\eta^{-1} \} - \{\xi^{-1}\} )
 - i_{\xi, \eta} \frac{1}{\xi-\eta}.
\ee
Let us now compute $\tau_U( \{\eta^{-1} \} - \{\xi^{-1}\} )$ in terms of admissible basis.
We first note that
\be \label{eqn:Specialization-Tau}
\tau_U( \{\eta^{-1} \} - \{\xi^{-1}\} )
= 1 + (\xi - \eta) A(\xi, \eta),
\ee
and so $\tau_U( \{\eta^{-1} \} - \{\xi^{-1}\} ) $ is of order at most one in $A_{m,n}$'s.
Recall that
\ben
\tau_U( \{\eta^{-1} \} - \{\xi^{-1}\} )
= \lvac \exp \biggl(\sum_{n=1}^\infty \frac{1}{n} (\frac{1}{\eta^n}- \frac{1}{\xi^n}) \alpha_n \biggr)
e^A \vac
\een
is equal to the inner product of $e^A\vac$ with
$\exp \biggl(\sum_{n=1}^\infty \frac{1}{n} (\frac{1}{\eta^n}- \frac{1}{\xi^n}) \alpha_{-n} \biggr)\vac$.
On the other hand one has an expansion:
\ben
\exp \biggl(\sum_{n=1}^\infty \frac{1}{n} (\frac{1}{\eta^n}- \frac{1}{\xi^n}) \alpha_{-n} \biggr)\vac
= \sum_\mu s_\mu(p_k= \frac{1}{\eta^k} - \frac{1}{\xi^k}, k \geq 1 ) \cdot |\mu \rangle,
\een
and
\ben
e^A\vac = \vac + \sum_{m, n \geq 0} A_{n,m}  \psi_{-m-1/2} \psi^*_{-n-1/2} \vac + \cdots,
\een
where $\cdots$ are higher order terms in $A_{m,n}$,
therefore by \eqref{eqn:Specialization-Tau},
one gets
\be
s_\mu(p_k= \frac{1}{\eta^k} - \frac{1}{\xi^k}, k \geq 1 ) = 0
\ee
except for $\mu = \emptyset$ or $(m|n)$ for some $m,n \geq 0$,
and
\ben
&& \sum_{m, n \geq 0}  s_{(m|n)}(p_k= \frac{1}{\eta^k} - \frac{1}{\xi^k}, k \geq 1 ) \cdot (-1)^n A_{n,m} \\
& = & (\xi-\eta) \sum_{m, n \geq 0} A_{m,n} \xi^{-m-1} \eta^{-n-1},
\een
and so
\be
 s_{(m|n)}(p_k= \frac{1}{\eta^k} - \frac{1}{\xi^k}, k \geq 1 )
 = (-1)^n  (\xi-\eta)  \xi^{-n-1} \eta^{-m-1}.
\ee
Therefore,
\be
\begin{split}
& \exp \biggl(\sum_{n=1}^\infty \frac{1}{n} (\frac{1}{\eta^n}- \frac{1}{\xi^n}) \alpha_{-n} \biggr)\vac \\
= & \vac + \sum_{m,n \geq 0} (\xi-\eta)  \xi^{-n-1} \eta^{-m-1} \psi_{-m-1/2} \psi^*_{-n-1/2} \vac.
\end{split}
\ee

Suppose now that $U$ is specified by an admissible basis
$\{f_n = z^{n+1/2} + \sum_{k < n} c_{n, k} z^{k+1/2}\}_{n\geq 0}$,
then by \eqref{eqn:Det-Admissible},
\be
\begin{split}
& \tau_U( \{\eta^{-1} \} - \{\xi^{-1}\} )  \\
= & 1 + \sum_{m, n \geq 0}  (\xi-\eta)  \xi^{-n-1} \eta^{-m-1} \\
& \cdot \begin{vmatrix}
c_{0,-m-1} & 1 \\
c_{1,-m-1} & c_{1,0} & 1 \\
c_{2,-m-1} & c_{2,0} & c_{2,1} & 1 \\
\vdots & \vdots & \vdots & \vdots \\
c_{n-1, m-1} & c_{n-1,0} & c_{n-1,1} & & \cdots & 1 \\
c_{n, m-1} & c_{n,0} & c_{n,1} & & \cdots & c_{n,n-1}
\end{vmatrix}.
\end{split}
\ee
As a corollary,
\be
a_{n,m} = \begin{vmatrix}
c_{0,-m-1} & 1 \\
c_{1,-m-1} & c_{1,0} & 1 \\
c_{2,-m-1} & c_{2,0} & c_{2,1} & 1 \\
\vdots & \vdots & \vdots & \vdots \\
c_{n-1, m-1} & c_{n-1,0} & c_{n-1,1} & & \cdots & 1 \\
c_{n, m-1} & c_{n,0} & c_{n,1} & & \cdots & c_{n,n-1}
\end{vmatrix}.
\ee
Of course this last identity can be obtained more directly by comparing the coefficients of
$\psi_{-m-1/2} \psi^*_{-n-1/2}$ on both sides of
\be
|U\rangle = e^A \vac.
\ee
The approach that we take in this Section relate $A(\xi, \eta)$ to $\tau_U(\{\eta^{-1} \} - \{\xi^{-1}\} )$
through some specialization of the Schur functions.
In Section \ref{sec:Okounkov} we will see an important application of this idea.

\section{Emergence of the Airy Function in Topological 2D Gravity}

In this Section we specialize the results in the preceding Sections to the case
of topological 2D gravity.

\subsection{Reductions}

Let $A$ be a subalgebra of $\bC((z^{-1}))$ such that
\be
A \cap \bC[[z^{-1}]] = \bC,
\ee
and define the $A$-reduced Sato Grassmannian by:
\be
\Gr^A_{(0)} := \{ U \in \Gr_{(0)} \;|\; f(s_1) U \subset U, \; \forall f \in A \}.
\ee
Then for $U \in \Gr^A_{(0)}$,
$\tau_U$ is a tau-function of the KP hierarchy,
such that the associated pseudo-differential operator $L$ satisfies
\be
\text{$f(L)$ is a differential operator, for all $f\in A$}.
\ee

The Airy curve $y=\half x^2$ defines a reduction to the KdV hierarchy by taking
$A= \bC[z^2]$.

\subsection{Noncommutative deformation of the Airy curve}

A consequence of Sato's approach to KP hierarchy is
the emergence of the noncommutative deformation theory of the Airy curve.
Quantize the Airy curve by understanding $p$ as $i \frac{\pd}{\pd x}$,
and rewrite the equation of the Airy curve as a differential operator $P_0 = \frac{\pd^2}{\pd x^2} + 2x$.
By a noncommutative deformation of $P_0$ we mean a differential operator of the form:
\be
P_\bt = \frac{\pd^2}{\pd x^2} + 2 u(x; \bt).
\ee
Witten Conjecture/Kontsevich Theorem states that
when $u(x; \bt)
= \frac{\pd^2 F(\bt)}{\pd t_0^2}$, $x= t_0$,
the operator $P_t$ satisfies the following evolution equations:
\be
\pd_{t_n} P_t = (2n+1)!!\cdot [(P_\bt^{(2n+1)/2})_+, P_\bt].
\ee

\subsection{Emergence of the Airy equation}

The wave-function for the operator $P_0= \pd_x^2 + 2x$ is a function $w(x; \xi)$ such that
\be
\pd_x^2 w + 2x \cdot w = \xi^2 \cdot w.
\ee
Perform a change of variables:
\be
\tilde{x} = 2^{-2/3} (\xi^2- 2x).
\ee
Then one gets:
\be
\pd_{\tilde{x}}^2 w = \tilde{x} \cdot w.
\ee

Recall the Airy functions $\Ai(x)$ and $\Bi(x)$ can be defined by the following integral representations:
\bea
&& \Ai(x) = \frac{1}{\pi} \int_0^\infty \cos( \frac{t^3}{3} + xt) dt, \\
&& \Bi(x) = \frac{1}{\pi} \int_0^\infty \big( \exp( - \frac{t^3}{3} + xt)
+ \sin (\frac{t^3}{3} + xt) \big) dt.
\eea
They are two solutions to the Airy equation
\be
\pd_x^2 y - x \cdot y  =0.
\ee
These functions and their first derivatives admit the following asymptotic expansions:
\bea
&& \Ai(x) \sim \frac{1}{2} \pi^{-1/2} x^{-1/4} e^{-\frac{2}{3} x^{3/2}}
\sum_{m=0}^\infty \frac{(6m-1)!!}{(6\sqrt{-2})^{2m} (2m)!} x^{-3m/2}, \\
&& \Ai'(x) \sim \frac{1}{2} \pi^{-1/2} x^{1/4} e^{-\frac{2}{3} x^{3/2}}
\sum_{m=0}^\infty \frac{(6m-1)!!}{(6\sqrt{-2})^{2m} (2m)!} \frac{6m+1}{6m-1} x^{-3m/2}, \\
&& \Bi(x) \sim   \pi^{-1/2} x^{-1/4} e^{\frac{2}{3} x^{3/2}}
\sum_{m=0}^\infty \frac{(6m-1)!!}{(6\sqrt{2})^{2m} (2m)!} x^{-3m/2}, \\
&& \Bi'(x) \sim  \pi^{-1/2} x^{1/4} e^{\frac{2}{3} x^{3/2}}
\sum_{m=0}^\infty \frac{(6m-1)!!}{(6\sqrt{2})^{2m} (2m)!} \frac{6m+1}{6m-1} x^{-3m/2}.
\eea

\subsection{Wave-function from the Airy function}
There exists a suitably chosen $c(\xi)= 2^{5/6}\pi^{1/2} \xi^{1/2} e^{\xi^3/3}$ such that
\ben
w(x; \xi) & = & c(\xi) \cdot \Ai(\tilde{x}) \\
& = & c(\xi) \cdot \frac{1}{2} \pi^{-1/2} \tilde{x}^{-1/4} e^{-\frac{2}{3} \tilde{x}^{3/2}}
\sum_{m=0}^\infty \frac{(6m-1)!!}{(6\sqrt{-2})^{2m} (2m)!} \tilde{x}^{-3m/2} \\
& = & c(\xi) \cdot \frac{1}{2} \pi^{-1/2} \cdot 2^{1/6} \cdot (\xi^2 - 2x)^{-1/4}
e^{-\frac{1}{3} (\xi^2-2x)^{3/2}} \\
&& \cdot \sum_{m=0}^\infty (-1)^m \frac{(6m-1)!! }{6^{2m} (2m)!}
(\xi^2-2x)^{-3m/2} \\
& = & e^{x\xi} \cdot  (1- 2\xi^{-2}x )^{-1/4}
e^{-\frac{1}{3} \xi^3 (1- 2\xi^{-2}x)^{3/2} - x\xi + \frac{1}{3} \xi^3} \\
&& \cdot \sum_{m=0}^\infty (-1)^m \frac{(6m-1)!! }{6^{2m} (2m)!} \xi^{-3m}
(1- 2\xi^{-2}x)^{-3m/2}
\een
is of the form
\be
w(x; \xi) = e^{x\xi} (1+ \frac{b_1(x)}{\xi} + \frac{b_2(x)}{\xi^2} + \cdots).
\ee
This is form of the  wave-function at  $t_k = 0$ for $k \geq 1$.
Now we have:
\bea
&&  w(0; \xi) = \sum_{m=0}^\infty (-1)^m \frac{(6m-1)!! }{6^{2m} (2m)!} \xi^{-3m}, \label{eqn:c} \\
&&  \pd_x w(0; \xi)
= \sum_{m=0}^\infty (-1)^{m+1} \frac{(6m-1)!!}{6^{2m} (2m)!}
\frac{6m+1}{6m-1} \xi^{-3m+1}. \label{eqn:q}
\eea
Denote these series by $c(\xi)$ and $q(\xi)$ respectively.
They are called the Faber-Zagier series \cite{Pand-Pixton-Zvonkine}.
They are closely related to the asymptotic series of $\Ai(z^2)$
and $\Ai'(z^2)$ respectively,
and then one has
\bea
&& D c(z) =  q(z), \\
&& D^2 c(z) = z^2 \cdot q(z),
\eea
where $D$ is the differential operator:
\be
D: = z+ \frac{1}{2z^2} - \frac{1}{z} \frac{\pd}{\pd z}.
\ee

\subsection{Airy functions in Kac-Schwarz's geometric characterization of $Z_{WK}$}

Consider the vector space $W$ spanned by $\{z^{1/2} D^n c(z) \}_{n \geq 0}$.
It is clear $W$ lies in the $\Gr_{(0)}$
and so $W$ determines a tau-function $\tau_W$ of the KP hierarchy.
Because it is clear that
\be \label{eqn:2-Reduced}
z^2 W \subset W,
\ee
this tau-function $\tau_W$ is $2$-reduced,
i.e.,
it is a tau-function of the KdV hierarchy.
It is also clear that
\be \label{eqn:D-Closed}
z^{1/2} D z^{-1/2} W \subset W.
\ee
Kac-Schwarz \cite{Kac-Schwarz} showed that this is equivalent to $\tau_W$
satisfying the puncture equation:
\be
L_{-1} \tau_W = 0.
\ee
Hence by Witten Conjecture/Kontsevich Theorem,
$\tau_W$ is the partition function of the topological 2D gravity.
Furthermore,
by combining \eqref{eqn:2-Reduced} and \eqref{eqn:D-Closed},
one gets
\be
z^{2n+1/2} D z^{-1/2} W \subset W, \;\;\; n \geq 0.
\ee
In {\em loc. cite.} it was shown these imply that $\tau_W$ satisfies
the Virasoro constraints:
\be
L_{n-1} \tau_W = 0, \;\; n \geq 0.
\ee
As pointed out by Looijenga \cite{Looijenga},
if one sets $\cL_n = -\half z^{2n+5/2} D z^{-1/2}$, $n \geq -1$,
then
\be
[\cL_m, \cL_n ] = (m-n) \cL_{m+n}.
\ee

\subsection{Airy function in Kontsevich's proof of Witten Conjecture}

Kontsevich \cite{Kontsevich} identified the generating series $Z_{WK}$ of
intersection numbers of $\psi$-classes on $\Mbar_{g,n}$ with $\tau_W$,
hence proved the Witten Conjecture.
(See \cite{Looijenga} for an exposition.)
He first established the Main Identity:
For $ g\geq 0$, $n \geq 1$ such that $2g-2+n > 0$,
\be \label{eqn:Main-Identity}
\begin{split}
& \sum_{\sum_{i=1}^n m_i = 3g-3+n} \corr{\tau_{m_1} \cdots \tau_{m_n} }_g
\prod_{i=1}^n \frac{(2m_i-1)!!}{\lambda_i^{2m_i+1}} \\
= & \sum_{\Gamma \in G_{g,n}} \frac{2^{-|V(\Gamma)|}}{|\Aut(\Gamma)|}
\prod_{e\in E(\Gamma)} \prod_{e \in E(\Gamma)} \frac{2}{\tilde{\lambda}(e)},
\end{split}
\ee
where $G_{g,n}$ is the set of isomorphism classes of connected trivalent
ribbon graphs of genus $g$ with $n$ boundary cycles labeled by $1, \dots, n$.
Given an edge $e$,
it is a band with two boundaries, labeled by $i$ and $j$, resepctively,
then
\be
\tilde{\lambda}(e) = \lambda_i + \lambda_j.
\ee
Next,
using the Main Identity and the Wick's Theorem,
Kontsevich obtain the following matrix model for the free energy.
Let
$$\Lambda = \diag(\lambda_1, \dots, \lambda_N)$$
be a positive definite hermitian $N \times N$-matrix,
and let
\be
t_i = t_i(\Lambda) = - (2i-1)!! \cdot \Tr \Lambda^{-(2i+1)}.
\ee
Then his Theorem 1.1 states the formal series $F(t_0(\Lambda), t_1(\Lambda), \dots)$
is an asymptotic expansion of $\log Z^{(n)}(\Lambda)$
when $\Lambda^{-1} \to 0$,
where
\be \label{eqn:Kontsevich-Model}
Z^{(N)}(\Lambda):= c_\Lambda \int \exp \Tr
\big( \frac{\sqrt{-1}}{6}X^3- \frac{1}{2} X^2\Lambda\big),
\ee
where the constant $c_\Lambda$ is chosen to be
$$c_\Lambda
= \det \big( \frac{1}{4\pi} (\Lambda \otimes 1 + 1 \otimes \Lambda) \big)^{1/2}$$
so that
$$c_\Lambda\int \exp \Tr \big(- \frac{1}{2} X^2\Lambda\big) dX
= 1.$$
Next,
Harish-Chandra formula was used to the reduce the matrix integral of Airy form
in Kontsevich model to a determinantal expression involves
the Airy function and its derivatives.
The relationship with Sato tau-function was established by \cite[Lemma 4.1]{Kontsevich} to get
\be
Z^{(N)}(\Lambda)
= \frac{\det\limits_{1 \leq i, j \leq N} (D^{j-1} a(\lambda_i^{-1}))}
{\det\limits_{1 \leq i, j \leq N} (\lambda_i^{1-j})}.
\ee
In the final step,
Kontsevich showed that when $N \to \infty$,
$\tau^{(N)}(\Lambda)$ gives $\tau_W$.
This gives an explicit construction of the Witten-Kontsevich tau-function in terms of the Airy function.

Let us note Airy functions $\Ai$ and $\Bi$ also appear in the setting of quantum spectral curves \cite{Zhou-Quantum}.

\subsection{Explicit fermionic expressions of Witten-Kontsevich tau-function}

The above formula for the Witten-Knotsevich tau-function was explained by Itzykson-Zuber \cite{Itzykson-Zuber}.
Inspired by \cite{Kac-Schwarz} and \cite{Kontsevich},
the author  \cite{Zhou-Explicit} found an explicit expression of $Z_{WK}$  as follows.
After suitable linear change of coordinates in $t_n$'s:
\be \label{eqn:t-T}
t_n = (2n+1)!! T_{2n+1} = \frac{(2n+1)!!}{2n+1} p_{2n+1},
\ee
and the boson-fermion correspondence,
the Witten-Kontsevich tau-function is a Bogoliubov transform in the fermionic picture:
\be
Z_{WK} = e^A \vac,  \quad \quad \quad
A =  \sum_{m,n \geq 0} A_{m,n} \psi_{-m-\frac{1}{2}}\psi_{-n-\frac{1}{2}}^*,
\ee
where the coefficients $A_{m,n} = 0$ if $m+n \not\equiv -1 \pmod{3}$ and
\ben
&& A_{3m-1,3n} = A_{3m-3, 3n+2}
=  \frac{(-1)^n }{36^{m+n}} \frac{(6m+1)!!}{(2(m+n))!} \\
&& \qquad\qquad \cdot \prod_{j=0}^{n-1} (m+j) \cdot \prod_{j=1}^{n} (2m+2j-1) \cdot (B_{n} (m) +\frac{b_n}{6m+1}), \\
&& A_{3m-2,3n+1}
=  \frac{(-1)^{n+1}}{36^{m+n}} \frac{(6m+1)!!}{(2(m+n))!} \\
&& \qquad\qquad  \cdot \prod_{j=0}^{n-1} (m+j) \cdot \prod_{j=1}^{n} (2m+2j-1) \cdot (B_{n}(m) +\frac{b_n}{6m-1}),
\een
where  $B_n(m)$ is a polynomial in $m$ of degree $n-1$ defined by:
\be
B_n(x) = \frac{1}{6} \sum_{j=1}^{n} 108^{j} b_{n-j} \cdot (x+n)_{[j-1]},
\ee
where
\be
(a)_{[j]} = \begin{cases}
1, & j = 0, \\
a(a-1) \cdots (a-j+1), & j > 0,
\end{cases}
\ee
and $b_n$ is a constant depending on $n$ defined by:
\be
b_{n} = \frac{2^n \cdot (6n+1)!!}{(2n)!}.
\ee
The following are some  examples of the coefficients $A_{m,n}$:
\ben
&& A_{3m-1,0} =  \frac{1}{36^m}  \frac{(6m+1)!!}{(2m)!} \cdot \frac{1}{6m+1}, \\
&& A_{3m-2,1} = -  \frac{1}{36^m} \frac{(6m+1)!!}{(2m)!} \cdot \frac{1}{6m-1}, \\
&& A_{3m-3,2} = \frac{1}{36^m} \frac{(6m+1)!!}{(2m)!} \cdot \frac{1}{6m+1}.
\een

\subsection{Transition matrix from admissible basis to normalized basis}

The above result was established by Virasoro constraints.
Balogh and Yang \cite{Balogh-Yang} rederived it from the point of view of Sato's construction of
the tau-function from admissible bases and normalized bases,
so it becomes a special case of the general results for KP hierarchy developed in this paper.
Let us briefly recall their beautiful results.
Suppose that $W \in \Gr_{(0)}$ is given by an admissible basis of the form
$\{z^{2n+1/2} a(z), z^{2n+3/2} b(z)\}_{n \geq 0}$,
where
\ben
a(z) = 1 + \sum_{n \geq 1} a_n z^{-n}, \;\;\;
b(z) = 1 + \sum_{n \geq 1} b_n z^{-n}.
\een
Define a matrix
$G(z) = \begin{pmatrix}
G(z)_{11} & G(z)_{12} \\ G(z)_{21} & G(z)_{22}
\end{pmatrix}
$
whose entries are given by
\begin{align*}
G(z)_{11} & = \sum_{n \geq 0} a_{2n} z^{-n}, & G(z)_{12} & = \sum_{n \geq 0} b_{2n+1} z^{-n}, \\
G(z)_{11} & = \sum_{n \geq 1} a_{2n-1} z^{-n}, & G(z)_{12} & = \sum_{n \geq 0} b_{2n} z^{-n}.
\end{align*}
Suppose that the corresponding normalized basis is given by
\be
z^n + \sum_{m \geq 0} A_{m,n} z^{-m-1}.
\ee
For $m, n \geq 0$, define a matrix $Z_{m,n}$ by
\be
Z_{m,n}: = \begin{pmatrix}
A_{2m+1,2n} & A_{2m+1, 2n+1} \\ A_{2m, 2n} & A_{2m, 2n+1}
\end{pmatrix}.
\ee
Then \cite[Theorem 1.1]{Balogh-Yang}:
\be \label{eqn:Balogh-Yang}
\sum_{m,n=0}^\infty Z_{m,n} x^{-m-1} y^{-n-1} = \frac{1}{x-y} (I -G(x) G(y)^{-1}).
\ee
When $a(z)$ and $b(z)$ are given by the following series  related to \eqref{eqn:c} and \eqref{eqn:q} respectively:
\bea
&&  a(z) = \sum_{m=0}^\infty  \frac{(6m-1)!! }{6^{2m} (2m)!} \xi^{-3m}, \label{eqn:a} \\
&&  b(z) = - \sum_{m=0}^\infty  \frac{(6m-1)!!}{6^{2m} (2m)!}
\frac{6m+1}{6m-1} z^{-3m+1}, \label{eqn:b}
\eea
$A_{m,n}$ are given by the explicit expressions in last subsection.
(This is when we use \eqref{eqn:t-T}.
If we use instead as in \cite{Balogh-Yang}
\be \label{eqn:-t-T}
t_n = -(2n+1)!! T_{2n+1} = \frac{(2n+1)!!}{2n+1} p_{2n+1},
\ee
then $a(z)$ and $b(z)$ should be replaces by $c(z)$ and $q(z)$ respectively. )
Furthermore,
the matrix $G$ is related to the $R$-matrix of in the theory of Frobenius manifold associated with
3-spin structures \cite{Witten-R-Spin}.
According to \cite{Pand-Pixton-Zvonkine},
when $a(z)$ and $b(z)$ are taken to be the Faber-Zagier series,
\be
R(z) = \begin{pmatrix}
\sum_{n \geq 0} b_{2n} z^{2n} & - \sum_{n \geq 0} b_{2n+1} z^{2n+1} \\
- \sum_{n \geq 0} a_{2n+1} z^{2n+1} & \sum_{n \geq 0} a_{2n} z^{2n}
\end{pmatrix}.
\ee
Then \cite[Theorem 1.2]{Balogh-Yang}:
\be
R(z) =z^{\frac{1}{6} \sigma_3} G(z^{-2/3}) z^{-\frac{1}{6} \sigma_3},
\ee
where $\sigma_3 = \begin{pmatrix} 1 & 0 \\ 0 & -1 \end{pmatrix}$.
See \cite{Balogh-Yang-Zhou} for generalizations of these results to all $r$-spin cases.

\subsection{Relationship with a formula of Okounkov}

\label{sec:Okounkov}

Let us now relate the formula \eqref{eqn:Balogh-Yang} of Balogh-Yang \cite{Balogh-Yang}
to a formula of Okounkov \cite{Okounkov}
used in a proof of the Witten Conjecture.
Since $\det G(z)  = 1$,
one has
\begin{align}
G(z) & = \begin{pmatrix}  \sum\limits_{j \geq 0} a_{2j} z^{-3j} & \sum\limits_{j \geq 0}  b_{2j+1} z^{-3j-1} \\
\sum\limits_{j \geq 0} a_{2j+1} z^{-3j-2} & \sum\limits_{j \geq 0}  b_{2j} z^{-3j}\end{pmatrix},  \\
G(z)^{-1} & = \begin{pmatrix}  \sum\limits_{j \geq 0} b_{2j} z^{-3j} & -\sum\limits_{j \geq 0}  b_{2j+1} z^{-3j-1} \\
-\sum\limits_{j \geq 0} a_{2j+1} z^{-3j-2} & \sum\limits_{j \geq 0}  a_{2j} z^{-3j}\end{pmatrix}
\end{align}
From
\be
\begin{split}
& \sum_{m,n=0}^\infty  x^{-m-1} y^{-n-1} \begin{pmatrix}
A_{2m+1,2n} & A_{2m+1, 2n+1} \\ A_{2m, 2n} & A_{2m, 2n+1}
\end{pmatrix} \\
= & \frac{1}{x-y} (I -G(x) G(y)^{-1}),
\end{split}
\ee
one gets four identities:
\ben
\sum_{m, n \geq 0} A_{2m+1,2n} x^{-m-1} y^{-n-1}
& = & \frac{1}{x-y} (1- \sum_{j\geq 0} a_{2j} x^{-3j} \cdot \sum_{k \geq 0} b_{2k} y^{-3k} \\
& + & \sum\limits_{j \geq 0}  b_{2j+1} x^{-3j-1} \cdot \sum\limits_{k \geq 0} a_{2k+1} y^{-3k-2} ), \\
\sum_{m, n \geq 0} A_{2m+1,2n+1} x^{-m-1} y^{-n-1}
& = & \frac{1}{x-y} ( \sum_{j\geq 0} a_{2j} x^{-3j} \cdot \sum_{k \geq 0} b_{2k+1} y^{-3k-1} \\
& - & \sum\limits_{j \geq 0}  b_{2j+1} x^{-3j-1} \cdot \sum\limits_{k \geq 0} a_{2k} y^{-3k} ), \\
\sum_{m, n \geq 0} A_{2m,2n} x^{-m-1} y^{-n-1}
& = & \frac{1}{x-y} (- \sum_{j\geq 0} a_{2j+1} x^{-3j-2} \cdot \sum_{k \geq 0} b_{2k} y^{-3k} \\
& + & \sum\limits_{j \geq 0}  b_{2j} x^{-3j} \cdot \sum\limits_{k \geq 0} a_{2k+1} y^{-3k-2} ), \\
\sum_{m, n \geq 0} A_{2m,2n+1} x^{-m-1} y^{-n-1}
& = & \frac{1}{x-y} (1 + \sum_{j\geq 0} a_{2j+1} x^{-3j-2} \cdot \sum_{k \geq 0} b_{2k+1} y^{-3k-1} \\
& - & \sum\limits_{j \geq 0}  b_{2j} x^{-3j} \cdot \sum\limits_{k \geq 0} a_{2k} y^{-3k} ).
\een
They can be combined into one beautiful identity:
\be \label{eqn:Balogh-Yang2}
\begin{split}
& \sum_{m, n \geq 0} A_{m,n} x^{-m-1} y^{-n-1} \\
= & \frac{1}{x-y} + \frac{1}{x^2-y^2} ( a(x) \cdot b(-y) - a(-y) \cdot b(x)).
\end{split}
\ee
To relate this to \cite{Okounkov},
we now use \eqref{eqn:Specialization-Tau} to get:
\be
\sum_{m, n \geq 0} a_{m,n} x^{-m-1} y^{-n-1} = \frac{1}{x-y} + \frac{1}{x-y} \tau_W(\{y^{-1}\}- \{x^{-1}\}).
\ee
Because $\tau_W(\bT)$ depends only on $T_{2n+1}$,
so we have
\ben
\tau_W(\{y^{-1}\}- \{x^{-1}\})
& = & \tau_W(\{y^{-1}\} + \{-x^{-1}\}) \\
& = & \lvac \exp \sum_{n=1}^\infty \frac{1}{n} (\frac{1}{y^n} + \frac{1}{(-x)^n}) \alpha_n) |W\rangle,
\een
so it is the inner product of $|W\rangle$
with
\ben
&&  \exp \sum_{n=1}^\infty \frac{1}{n} (\frac{1}{y^n} + \frac{1}{(-x)^n}) \alpha_{-n}) \vac
= \sum_\mu s_\mu(p_n = \frac{1}{y^n} + \frac{1}{(-x)^n}, n \geq 1) \cdot |\mu\rangle.
\een
The specialization of the Schur functions can be computed by the Jacobi-Trudy identities:
For $n \geq l(\mu)$,
\be \label{eqn:Jacobi-Trudy1}
s_\mu = \det (h_{\mu_i - i +j})_{1 \leq i, j \leq n},
\ee
and for $n \geq l(\mu^t)$,
\be \label{eqn:Jacobi-Trudy2}
s_\mu = \det (e_{\mu_i^t - i +j})_{1 \leq i, j \leq n}.
\ee
For the specialization with $p_n = y^{-n} + (-x)^{-n}$,
it is clear that the generating series for elementary symmetric functions $e_k$ and the complete
symmetric functions $h_k$ are given by
\be
E(t) = (1-x^{-1} t) (1+ y^{-1} t)= 1 + (y^{-1}-x^{-1}) t - x^{-1} y^{-1} t^2,
\ee
and
\be
H(t) = \frac{1}{(1 + x^{-1}t)(1-y^{-1}t)}
= \sum_{n \geq 0} (\sum_{i+j=n} x^{-i} y^{-j}) t^n
\ee
respectively.
It follows that
\ben
e_1 = y^{-1}-x^{-1}, \;\;\; e_2 = - x^{-1} y^{-1}, \;\;\; e_k = 0, \; k> 2,
\een
and
\be
h_n = \sum_{i+j=n} (-1)^i x^{-i} y^{-j} = \frac{(-x)^{-n-1}-y^{-n-1}}{(-x)^{-1} - y^{-1}}.
\ee
It follows from \eqref{eqn:Jacobi-Trudy2} that
when $l(\mu) > 2$,
\be
s_\mu(p_n = \frac{1}{y^n} + \frac{1}{(-x)^n}, n \geq 1) = 0.
\ee
Indeed,
since $\mu^t_1 = l(\mu) > 2$,
one has $e_{\mu^t_1-1+j} = 0$ for $j=1, \dots, n$,
i.e.,
the first row in the determinant on the right-hand side od \eqref{eqn:Jacobi-Trudy2} all vanish.
Therefore,
one only has to consider the case of $l(\mu) =1$ and $l(\mu) =2$,
for which we use \eqref{eqn:Jacobi-Trudy1} to get
\ben
&& s_{(m|0)} (p_k = \frac{1}{y^k} + \frac{1}{(-x)^k}, k \geq 1)
= h_{m+1} = \frac{(-x)^{-m-2}-y^{-m-2}}{(-x)^{-1} - y^{-1}}, \\
&& s_{(m|1)} (p_k = \frac{1}{y^k} + \frac{1}{(-x)^k}, k \geq 1)
= s_{(m+1,1)} =   \frac{(-x)^{-m-2} y^{-1} - (-x)^{-1} y^{-m-2}}{(-x)^{-1} - y^{-1}}, \\
&& s_{(m_1, m_2)|(1,0)}  (p_k = \frac{1}{y^k} + \frac{1}{(-x)^k}, k \geq 1) \\
& = & s_{(m_1+1, m_2+2)}  (p_k = \frac{1}{y^k} + \frac{1}{(-x)^k}, k \geq 1) \\
& = & \begin{vmatrix}
h_{m_1+1} & h_{m_1+2} \\
h_{m_2+1} & h_{m_2+2}
\end{vmatrix}
= \frac{(-x)^{-m_1-2} y^{-m_2-2} - (-x)^{-m_2-2} y^{-m_1-2}}{(-x)^{-1} - y^{-1}}.
\een
So we get
\ben
&&  \exp (\sum_{n=1}^\infty \frac{1}{n} (\frac{1}{y^n} + \frac{1}{(-x)^n}) \alpha_{-n}) \vac \\
& = & \vac + \sum_{m\geq 0} \frac{(-x)^{-m-2}-y^{-m-2}}{(-x)^{-1} - y^{-1}}
\cdot \psi_{-m-1/2} \psi^*_{-1/2} \vac \\
& - & \sum_{m\geq 0} \frac{(-x)^{-m-2} y^{-1} - (-x)^{-1} y^{-m-2}}{(-x)^{-1} - y^{-1}}
\cdot \psi_{-m-1/2} \psi_{-3/2}^*\vac \\
& - & \sum_{m_1 > m_2 \geq 0} \frac{(-x)^{-m_1-2} y^{-m_2-2} - (-x)^{-m_2-2} y^{-m_1-2}}{(-x)^{-1} - y^{-1}} \\
&&  \cdot \psi_{-m_1-1/2}\psi_{-3/2}^* \psi_{-m_2-1/2} \psi_{-1/2}^*\vac.
\een
Note we have
\ben
&& (z^{1/2} + \sum_{j \geq 1} a_j z^{-3j+1/2}) \wedge (z^{3/2} + \sum_{k \geq 1} b_k z^{-3k+3/2}) \\
& = & z^{1/2} \wedge z^{3/2} - \sum_{k \geq 1} b_k z^{-3k+3/2} \wedge z^{1/2}
+ \sum_{j \geq 1} a_j z^{-3j+1/2} \wedge z^{3/2} \\
& + & \sum_{j \geq k \geq 1} a_j b_k z^{-3j+1/2} \wedge z^{-3k+3/2}
- \sum_{1 \leq j < k} a_j b_k  z^{-3k+3/2} \wedge z^{-3j+1/2},
\een
so the relevant terms in $|W\rangle$ are
\ben
|W\rangle & = & \vac + \sum_{j \geq 1} a_j \cdot \psi_{-3j+1/2} \psi^*_{-1/2} \vac
+ \sum_{k \geq 1} b_k \cdot  \psi_{-3k+3/2} \psi^*_{-3/2} \vac \\
& + & \sum_{j \geq k \geq 1} a_j b_k \cdot \psi_{-3j+1/2} \psi_{-3k+3/2} \psi^*_{-3/2} \psi^*_{-1/2} \vac \\
& - & \sum_{1 \leq j < k} a_j b_k \cdot \psi_{-3k+3/2} \psi_{-3j+1/2}  \psi^*_{-3/2} \psi^*_{-1/2} \vac
+ \cdots,
\een
where $\cdots$ are irrelevant terms.
Taking inner product with $|W\rangle$ one then gets:
\ben
&& \tau_W(\{y^{-1}\}- \{x^{-1}\}) \\
& = & 1 +  \sum_{j \geq 1} a_j \frac{(-x)^{-3j-1}-y^{-3j-1}}{(-x)^{-1} - y^{-1}}
- \sum_{k \geq 1} b_k  \frac{(-x)^{-3k} y^{-1} - (-x)^{-1} y^{-3k}}{(-x)^{-1} - y^{-1}}  \\
& + & \sum_{j \geq k \geq 1} a_j b_k \cdot \frac{(-x)^{-3j-1} y^{-3k} - (-x)^{-3k} y^{-3j-1}}{(-x)^{-1} - y^{-1}} \\
& - & \sum_{1 \leq j < k} a_j b_k \cdot \frac{(-x)^{3k} y^{-3j-1} - (-x)^{-3j-1} y^{-3k}}{(-x)^{-1} - y^{-1}}\\
& = & \sum_{j,k \geq 0} a_j b_k  \cdot \frac{(-x)^{-3j-1} y^{-3k} - (-x)^{-3k} y^{-3j-1}}{(-x)^{-1} - y^{-1}} \\
& = &  \frac{a(-x) \cdot b(y) - a(y) b(-x)}{x + y}.
\een
Note we have obtained
\be
 \tau_W(\{x^{-1}\} + \{y^{-1}\})
= \sum_{j,k \geq 0} a_j b_k  \cdot \frac{x^{-3j-1} y^{-3k} - x^{-3k} y^{-3j-1}}{x^{-1} - y^{-1}},
\ee
which is a formula that has been used  in Okounkov \cite{Okounkov} to prove the Witten Conjecture.
The following are the first few terms:
\ben
&& \tau_W(\{x^{-1}\} + \{ y^{-1}\})
= \frac{1}{x^{-1}- y^{-1}}
\biggl((x^{-1} - y^{-1}) \\
& + & \biggl( \frac{5}{24} (x^{-4}-y^{-4})- \frac{7}{24} (x^{-1}y^{-3}-x^{-3}y^{-1}) \biggr) \\
& + & \biggl( \frac{385}{1152}(x^{-7}- y^{-7}) +\frac{5}{24} \cdot \frac{-7}{24} (x^{-4} y^{-3}-x^{-3}y^{-4}) \\
&& + \frac{-455}{1152} (x^{-1}y^{-6}-x^{-6}y^{-1}) \biggr) \\
& + & \biggl( \frac{85085}{82944} (x^{-10}-y^{-10})
+ \frac{385}{1152} \cdot \frac{-7}{24} (x^{-7}y^{-3}-x^{-3}y^{-7}) \\
&& + \frac{5}{24} \cdot \frac{-455}{1152}\cdot (x^{-4}y^{-6}-x^{-6}y^{-4})
 + \frac{-95095}{82944} (x^{-1}y^{-9}-x^{-9}y^{-1}) \biggr) \\
& + & \biggl( \frac{37182145}{7962624} (x^{-13} - y^{-13})
+ \frac{85085}{82944}\cdot \frac{-7}{24} (x^{-10}y^{-3}-x^{-3}y^{-10}) \\
&& + \frac{385}{1152} \cdot \frac{-455}{1152} (x^{-7}y^{-6}-x^{-6}y^{-7})
+ \frac{5}{24} \cdot \frac{-95095}{82944} (x^{-4} y^{-9}-x^{-9}y^{-4}) \\
&& + \frac{-40415375}{7962624} (x^{-1}y^{-12}-x^{-12}y^{-1}) \biggr) + \cdots.
\een

Now we get:
\ben
\sum_{m, n \geq 0} a_{m,n} x^{-m-1} y^{-n-1}
& = & -\frac{1}{x-y} + \frac{1}{x-y} \tau_W(\{y^{-1}\}- \{x^{-1}\}) \\
& = & - \frac{1}{x-y} +   \frac{a(-x) \cdot b(y) - a(y) b(-x)}{x^2 - y^2}.
\een
This is exactly \eqref{eqn:Balogh-Yang2}
if we interchange $x$ with $y$ and note
\be
a_{m,n} = A_{n,m}.
\ee

The following are the first few terms of $A(x,y)$:
\ben
A(x, y)
& = &  \frac{5}{24x y^3}-\frac{7}{24x^2y^2}+ \frac{5}{24 x^3 y}  \\
& + &  \frac{385}{1152xy^6}- \frac{455}{1152x^2y^5}+ \frac{385}{1152x^3y^4} \\
& - & \frac{385}{1152x^4y^3} + \frac{455}{1152x^5y^2}- \frac{385}{1152x^6y} \\
& + & \frac{85085}{82944xy^9}- \frac{95095}{82944x^2y^8}+ \frac{85085}{82944x^3y^7} \\
& - & \frac{43505}{41472x^4y^6} + \frac{45955}{41472x^5y^5} - \frac{43505}{41472x^6y^4} \\
& + & \frac{85085}{82944x^7y^3}- \frac{95095}{82944x^8y^2}+ \frac{85085}{82944x^9y} + \cdots.
\een

\subsection{Bosonic $N$-point functions of topological 2D gravity}

\label{sec:N-Point-2DGravity}

Let us conclude this paper by an application of the explicit expression of the Witten-Kontsevich tau-function
in \cite{Zhou-Explicit}
and its geometric interpretation given by \cite{Balogh-Yang}.
We combine these results with  Theorem \ref{thm:Bosonic-N-Point} to get:

\begin{thm}
For $F= \log Z_{WK}$,
\be \label{eqn:WK-N-Point}
\begin{split}
& \sum_{j_1,\dots, j_n \geq 1}
\frac{\pd^n F}{\pd T_{j_1} \cdots \pd T_{j_n} } \biggl|_{\bT =0}
  \xi_1^{-j_1-1}\cdots \xi_n^{-j_n-1} \\
= & (-1)^{n-1} \sum_{\text{$n$-cycles}}  \prod_{i=1}^n \hat{A}(\xi_{\sigma(i)}, \xi_{\sigma(i+1)}),
\end{split}
\ee
where
\be
\hat{A}(\xi_i, \xi_j) = \begin{cases}
A(\xi_i, \xi_i), & \text{if $i=j$}, \\
\frac{1}{\xi_i - \xi_j} + A(\xi_i, \xi_j), & \text{if $i \neq j$}.
\end{cases}
\ee
\end{thm}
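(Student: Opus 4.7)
The plan is to recognize this as the direct specialization of Theorem \ref{thm:Bosonic-N-Point} to the point $W$ of Sato Grassmannian corresponding to the Witten--Kontsevich tau-function, once the fermionic two-point function $A(\xi,\eta)$ has been identified with the explicit kernel displayed in \eqref{eqn:Balogh-Yang2}.

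First, I would invoke the result of \cite{Zhou-Explicit, Balogh-Yang} recalled in Section 6: the Witten--Kontsevich tau-function is the Bogoliubov transform
\[
Z_{WK} = e^A \vac, \qquad A = \sum_{m,n\geq 0} A_{m,n}\, \psi_{-m-1/2}\psi^*_{-n-1/2},
\]
where $W \in \Gr_{(0)}$ is the Kac--Schwarz point with admissible basis $\{z^{2n+1/2}a(z), z^{2n+3/2}b(z)\}_{n\geq 0}$ using the Faber--Zagier series $a(z)$ and $b(z)$ of \eqref{eqn:a}--\eqref{eqn:b}. Because $z^2 W \subset W$, the vector $|W\rangle$ lies in the $2$-reduced sector, so $\tau_W$ depends only on $T_{2n+1}$; under the substitution $t_n = (2n+1)!!\,T_{2n+1}$, $\tau_W = Z_{WK}$.

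Next I would apply Theorem \ref{thm:Bosonic-N-Point} with $U = W$. That theorem asserts, for any $U \in \Gr_{(0)}$,
\[
\sum_{j_1,\dots,j_n \geq 1} \frac{\partial^n F_U}{\partial T_{j_1}\cdots \partial T_{j_n}}\bigg|_{\bT = 0} \xi_1^{-j_1-1}\cdots \xi_n^{-j_n-1} = (-1)^{n-1}\sum_{\text{$n$-cycles}} \prod_{i=1}^n \hat{A}(\xi_{\sigma(i)},\xi_{\sigma(i+1)}),
\]
with $\hat{A}$ built from the generating kernel $A(\xi,\eta) = \sum_{m,n\geq 0} A_{m,n}\xi^{-m-1}\eta^{-n-1}$ of the affine coordinates of $|U\rangle$. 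Specializing $U = W$ converts the left-hand side into the desired $n$-point function of $F = \log Z_{WK}$.

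Finally I would identify the kernel $A(\xi,\eta)$ for $W$ with the explicit closed form stated in the introduction. This is exactly the content of the Balogh--Yang identity \eqref{eqn:Balogh-Yang2}, which through the boson-fermion correspondence in Section~\ref{sec:Okounkov} gives
\[
A(x,y) = \frac{a(-x)\cdot b(y) - a(y)\cdot b(-x)}{x^2 - y^2} - \frac{1}{x-y}.
\]
With this formula, the piecewise definition of $\hat{A}$ in the theorem is precisely the $\hat{A}$ of Theorem \ref{thm:Bosonic-N-Point} restricted to $W$. The main conceptual work has already been done, so no further computation is needed; the only genuine obstacle is checking that the normalization conventions relating $t_n$ to $T_{2n+1}$ used in \cite{Zhou-Explicit, Balogh-Yang} are compatible with those used in Theorem \ref{thm:Bosonic-N-Point}, so that evaluating at $\bT = 0$ on both sides really corresponds to the same point $t_0 = t_1 = \cdots = 0$.
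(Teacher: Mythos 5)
Your proposal is correct and follows essentially the same route as the paper: the paper's own proof is precisely the combination of the explicit Bogoliubov-transform expression for $Z_{WK}$ from \cite{Zhou-Explicit}, its identification via \cite{Balogh-Yang} with the affine-coordinate kernel $A(\xi,\eta)$ computed in Section \ref{sec:Okounkov}, and the general $n$-cycle formula of Theorem \ref{thm:Bosonic-N-Point}. Your closing caveat about the $t_n = (2n+1)!!\,T_{2n+1}$ normalization is the same convention issue the paper itself flags in \eqref{eqn:t-T} versus \eqref{eqn:-t-T}, so nothing further is needed.
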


For a generalization of this result to Witten's r-spin curves,
see \cite{Balogh-Yang-Zhou}.

Let us look at some special cases of \eqref{eqn:WK-N-Point}.
For $n=1$,
the prediction of \eqref{eqn:WK-N-Point} is
\be 
 \sum_{j}
\frac{\pd F}{\pd T_{j} } \biggl|_{\bT =0} \xi^{-j-1} 
= A(\xi, \xi).
\ee
By  \eqref{eqn:Balogh-Yang2},
\ben
A(\xi, \xi) &=&  \lim_{x \to \xi}
\biggl( \frac{1}{x-\xi} + \frac{1}{x^2-\xi^2} ( a(x) \cdot b(-\xi) - a(-\xi) \cdot b(x)) \biggr) \\
& = & \lim_{x \to \xi} \frac{1}{2x} (1+ a'(x) \cdot b(-\xi) - a(-\xi) \cdot b'(x)) \\
& = & \frac{1}{2\xi} (1+ a'(\xi) \cdot b(-\xi) - a(-\xi) \cdot b'(\xi)).
\een
By \eqref{eqn:One-Point-All-Genera2} one has
\be
\frac{\pd^n F}{\pd T_{j} } \biggl|_{\bT =0} \xi^{-j-1}
= \sum_{g\geq 1}  \frac{(6g-3)!! }{24^gg! \xi^{6g+1}}.
\ee
So one gets an identity:
\be
a'(\xi) \cdot b(-\xi) - a(-\xi) \cdot b'(\xi)
= -1 + 2 \sum_{g\geq 1}  \frac{(6g-3)!! }{24^gg! \xi^{6g}}.
\ee
It will be interesting to prove this directly by combinatorial method.
For $n=2$,
the prediction of \eqref{eqn:WK-N-Point} is
\be
 \sum_{j,k}
\frac{\pd^2 F}{\pd T_{j} \pd T_k} \biggl|_{\bT =0} \xi_1^{-j-1} \xi_2^{-k-1}
= - \hat{A}(\xi_1, \xi_2) \hat{A}(\xi_2, \xi_1).
\ee
It is interesting to compare this with Dijkgraaf's formula for two-point function \cite{Liu-Xu}.


\begin{thebibliography}{999}


\bibitem{Adler-van Moerbeke}
M. Adler, P. van Moerbeke,
{\em A matrix integral solution to two-dimensional $W_p$-gravity}.
Comm. Math. Phys.  147  (1992),  no. 1, 25-56.

\bibitem{A-S-V}
M. Adler, T. Shiota, P. van Moerbeke,
{\em Random matrices, Virasoro algebras, and noncommutative KP}.
Duke Math. J.  94  (1998),  no. 2, 379-431.

\bibitem{Anderson}
P.W.Andrson,
{\em More is different}, Science 177 (172), No. 4047,
393-396.

\bibitem{Balogh-Yang}
F. Balogh, D. Yang,
{\em Geometric interpretation of Zhou's explicit formula for the Witten-Kontsevich tau function},
arXiv:1412.4419.

\bibitem{Balogh-Yang-Zhou}
F. Balogh, D. Yang, J. Zhou,
{\em Explicit formula for Witten's r-spin partition function}, in preparation.

\bibitem{Bertola-Dubrovin-Yang}
M. Bertola, B. Dubrovin, D. Yang,
{\em Correlation functions of the KdV hierarchy and
applications to intersection numbers over $\overline{\mathcal M}_{g,n}$},
arXiv:1504.06452.

\bibitem{Dubrovin}
B. Dubrovin,
{\em Geometry of 2 D topological field theories}.
Integrable systems and quantum groups (Montecatini Terme, 1993),  120-348,
Lecture Notes in Math., 1620, Springer, Berlin, 1996.

\bibitem{Dubrovin-Zhang}
B. Dubrovin, Y. Zhang,
{\em Normal forms of hierarchies of integrable PDEs,
Frobenius manifolds and Gromov - Witten invariants},
arXiv:math/0108160.



\bibitem{Gelfand-Dikii}
I.M. Gelfand, L.A. Dikii, Asymptotic behaviour of the resolvent of Sturm-Liouville
equations and the algebra of the Korteweg-de Vries equations, Russian Math.
Surveys 30:5 (1975), 77-113.


\bibitem{Givental}
A.B. Givental,
{\em Gromov-Witten invariants and quantization of quadratic Hamiltonians}.
Dedicated to the memory of I. G. Petrovskii on the occasion of his 100th anniversary.
Mosc. Math. J.  1  (2001),  no. 4, 551-568, 645.

\bibitem{Itzykson-Zuber}
C. Itzykson, J.-B. Zuber,
{\em Combinatorics of the modular group II: the Kontsevich integrals},
Int. J. Mod. Phys. A7 (1992) 5661-5705.

\bibitem{Kac-Schwarz}
V. Kac, A. Schwarz,
{\em Geometric interpretation of the partition function of 2D gravity}.
Phys. Lett. B 257 (1991), no. 3-4, 329-334.



\bibitem{Kazarian-Lando}
M. E. Kazarian and S. K. Lando,
{\em An algebro-geometric proof of Witten¡¯s conjecture},
J. Amer. Math. Soc. 20
(2007), 1079-1089.


\bibitem{Kontsevich} M. Kontsevich, {\em
Intersection theory on the moduli space of curves and the matrix
Airy function}. Comm. Math. Phys. {\bf 147} (1992), no. 1, 1--23.

\bibitem{Laughlin}
R. B. Laughlin,
A different universe.
Reinventing physics from the bottom down.
Basic Books, 2005.

\bibitem{Liu-Xu}
K. Liu, H. Xu,
{\em The n-point functions for intersection numbers on moduli spaces of curves},
 Adv. Theor. Math. Phys. 15 (2011), 1201-1236.

\bibitem{Looijenga}
E. Looijenga,
Intersection theory on Deligne-Mumford compactifications (after Witten and Kontsevich).
S¨¦minaire Bourbaki, Vol. 1992/93.
Ast¨¦risque  No. 216  (1993), Exp. No. 768, 4, 187-212.

\bibitem{MacDonald}I.G. MacDonald, Symmetric functions and Hall polynomials, 2nd edition. Claredon Press, 1995.

\bibitem{MJD}
T. Miwa, M. Jimbo, E. Date,
Solitons. Differential equations, symmetries and infinite-dimensional algebras.
Translated from the 1993 Japanese original by Miles Reid. Cambridge Tracts in Mathematics, 135.
Cambridge University Press, Cambridge, 2000.

\bibitem{Okounkov}
A. Okounkov,
{\em Generating functions for intersection numbers on moduli spaces of curves},
Int. Math. Res. Not. 2002, no. 18, 933-957..

\bibitem{Pand-Pixton-Zvonkine}
R. Pandharipande, A. Pixton, D. Zvonkine,   Relations on $\Mbar_{g,n}$
via 3-spin structures.  arXiv: 1303.1043, 2013.

\bibitem{Sato}
M. Sato, Soliton Equations as Dynamical Systems on an Infinite Dimensional Grassmann
Manifold, RIMS Kokyuroku 439 (1981), 30-46.

\bibitem{Segal-Wilson}
G. Segal, G. Wilson,
{\em Loop groups and equations of KdV type}.
Inst. Hautes \'Etudes Sci. Publ. Math.  No. 61  (1985), 5-65.

\bibitem{Shiota}
T. Shiota,
{\em Characterization of Jacobian varieties in terms of soliton equations}.
Invent. Math.  83  (1986),  no. 2, 333-382.

\bibitem{Witten} E. Witten,
{\it Two-dimensional gravity and intersection theory on moduli
space}, Surveys in Differential Geometry, vol.1, (1991) 243--310.

\bibitem{Witten-R-Spin}
E. Witten,
{\em Algebraic geometry associated with matrix models of two-dimensional
gravity} (pp. 235-269). Topological methods in modern mathematics (Stony Brook, NY,
1991), Publish or Perish, Houston, TX.

\bibitem{Zhou-Quantum}
J. Zhou,
{\em Intersection numbers on Deligne-Mumford moduli spaces and quantum Airy curve},
arXiv:1206.5896. 

\bibitem{Zhou-DVV}
J. Zhou,
{\em Topological recursions of Eynard-Orantin type for intersection numbers on moduli spaces of curves}.
Lett. Math. Phys.  103  (2013),  no. 11, 1191-1206.

\bibitem{Zhou-Explicit}
J. Zhou,
{\em Explicit formula for Witten-Kontsevich tau-function}£¬
arXiv:1306.5429.

\bibitem{Zhou-Quant-Def}
J. Zhou,
{\em Quantum deformation theory of the Airy curve and mirror symmetry of a point},
arXiv:1405.5296.

\bibitem{Zhou-Absolute}
J. Zhou,
{\em On absolute N-point function associated with Gelfand-Dickey polynomials},
preprint, 2015.

\bibitem{Zhou-Preparation}
J. Zhou,
{\em Emergent geometry, Frobenius manifolds, and quantum deformation theory},
in preparation.


\end{thebibliography}
 \end{document}